\newcolumntype{H}{>{\setbox0=\hbox\bgroup}c<{\egroup}@{}}
\newcommand{\comm}[1]{}
\newtheorem{thm}{Theorem}
\newtheorem{lem}[thm]{Lemma}
\theoremstyle{remark}
\newtheorem*{remark}{Remark}
\theoremstyle{definition}
\newcommand{\norm}[1]{\left\lVert#1\right\rVert}
\newtheorem{assumption}{Assumption}
\newtheorem{assumpB}{Assumption}
\newtheorem{assumpA}{Assumption}
\newcommand{\X}{\textbf{X}}
\newcommand{\pro}{\mathbb{P}}
\title{A generalized likelihood based Bayesian approach for scalable joint regression
and covariance selection in high dimensions}
\author{Srijata Samanta, Kshitij Khare and George Michailidis}
\begin{document}
	\maketitle
	
	\begin{abstract}
The paper addresses joint sparsity selection in the regression coefficient matrix and the error precision 
(inverse covariance) matrix for high-dimensional multivariate regression models in the Bayesian paradigm. The 
selected sparsity patterns are crucial to help understand the network of relationships between the 
predictor and response variables, as well as the conditional relationships among the latter. While Bayesian
methods have the advantage of providing natural uncertainty quantification through posterior inclusion 
probabilities and credible intervals, current Bayesian approaches  
either restrict to specific sub-classes of sparsity patterns and/or are not scalable to settings with 
hundreds of responses and predictors. Bayesian approaches which only focus on estimating the posterior mode
are scalable, but do not generate samples from the posterior distribution for uncertainty quantification. Using a 
bi-convex regression based generalized likelihood and spike-and-slab priors, we develop an algorithm called
Joint Regression Network Selector (JRNS) for joint regression and covariance selection which (a) can 
accommodate general sparsity patterns, (b) provides posterior samples for uncertainty quantification, and 
(c) is scalable and orders of magnitude faster than the state-of-the-art Bayesian approaches providing 
uncertainty quantification. We demonstrate the statistical and computational efficacy of the proposed 
approach on synthetic data and through the analysis of selected cancer data sets. We also establish high-dimensional 
posterior consistency for one of the developed algorithms.

\end{abstract}

\section{Introduction}

\noindent
We consider joint variable and precision matrix selection in 
high-dimensional multivariate regression models with multiple responses. In particular, we consider two sets of 
variables: the $n 
\times p$ matrix $X$ whose rows ${\bf x}_1, \cdots, {\bf x}_n$ $ \in \mathbb{R}^p$ comprise of $n$ samples on $p$ predictor variables 
 and the $n \times q$ matrix $Y$ whose rows  ${\bf y}_1, \cdots, {\bf y}_n \in \mathbb{R}^q$ 
comprise of $n$ matched samples (same set of entities) on $q$ response variables. We are interested in inferring a graphical model on the variables from the $Y$ data, while {\em accounting for} the effect of the $X$ data. The corresponding  multivariate regression model is given by
\begin{equation} \label{multregmdl}
Y = XB + \boldsymbol{\varepsilon}
\end{equation}
\noindent
where $\boldsymbol{\varepsilon}$ is an $n \times q$ matrix whose rows $\boldsymbol{\varepsilon}_1, \cdots, \boldsymbol{\varepsilon}_n \in \mathbb{R}^q$ comprise of the $n$ noise vectors and $B$ is a $ p \times q$ matrix of regression coefficients. To make 
things concrete, assume that ${\bf y}_1, {\bf y}_2, \cdots, {\bf y}_n$ are independent, and 
$$
{\bf y}_i \sim \mathcal{N}_q \left( B^T {\bf x}_{i}, \Omega^{-1} \right) \mbox{ for } i = 1,2, \cdots, n. $$

\noindent
This is equivalent to assuming that the noise vectors $\boldsymbol{\varepsilon}_1, \cdots, \boldsymbol{\varepsilon}_n$ are i.i.d. $\mathcal{N}_q ({\bf 0}, \Omega^{-1})$. The $q \times q$ matrix 
$\Omega$ captures the dependence between the response variables conditional on the 
predictor variables, and the $p \times q$ matrix $B$ captures the effect of the 
predictor variables on the response variables. 

For example, in molecular biology applications, multiple Omics modalities are profiled on the same set of samples. Then, following the central dogma of biology, the predictor variables $X$ could correspond to the DNA level (e.g., copy number or methylation data), while the response variables $Y$ to the transcriptomic level (mRNA expression). Another possibility is that the predictors correspond to the transcriptomic level and the responses to the proteomic level.
Thus, the regression coefficients in $B$ encode transcriptional or translational dependencies, while the entries of $\Omega$ reflect statistical associations within a molecular compartment.

We focus on the problem under a high-dimensional setting, wherein $p$ and/or $q$ is larger than or comparable to the sample size $n$. In such sample starved settings, imposing sparsity in $B$ and $\Omega$ offers a simple and effective approach for reducing the effective number of parameters. The sparsity patterns in $B$ and $\Omega$ often have specific scientific 
interpretations and can help researchers understand the underlying relationships between 
variables in the data set. To summarize, our goal is simultaneous sparse estimation 
of $B$ and $\Omega$, and the use of estimated sparsity patterns to understand 
relevant dependence structures. 

The above problem has been studied in the literature. On the frequentist side,  
various penalized likelihood based methods have been proposed. Many of these methods 
use an $\ell_1$ penalty that encourages sparsity both in $B$ and $\Omega$. Various 
optimization algorithms have been employed; see 
 \cite{friedman2008sparse}, \cite{rothman2010sparse}, \cite{lee2012simultaneous}, \cite{cai2013covariate}, \cite{lin2016penalized} and references therein. Note that the conditional log-likelihood for response $Y$ given $X$ can be written as
\begin{equation} \label{ceq1}
\begin{split}
    &\quad \ell (Y|X, B, \Omega) \\
    &= \text{constant}+\frac{n}{2}\log\det\Omega\\
    &\quad \; -\frac{1}{2} \text{tr} \left( \Omega \sum_{i=1}^n ({\bf y}_i - B^T {\bf x}_i) ({\bf y}_i - 
	B^T {\bf x}_i)^\top \right),
\end{split}
 \end{equation}

\noindent
and is not \textit{jointly convex} in $B$ and $\Omega$. Hence, many popular algorithms (e.g., block coordinate descent) for optimizing a penalized version of this log-likelihood may fail to converge to the global optimum, especially in settings where $p>n$, as pointed out in \cite{lee2012simultaneous}. However, the log-likelihood is \textit{bi-convex}, i.e., it is convex in $B$ for fixed $\Omega$ and in $\Omega$ for fixed $B$. The bi-convexity is leveraged in \cite{lin2016penalized} to develop a two-block coordinate descent algorithm which converges to a stationary point of the objective function assuming that all iterates need to be within a ball of certain radius $R(p,q,n)$ (a function of the model dimensions $p, q$ and the sample size $n$) that in addition contains the true data generating parameters. This condition is then shown to hold with high probability. 

Some recent papers such as \cite{Sohn:Kim:2012}, \cite{Yuan:Zhang:2014}, \cite{McCarter:Kim:2014} 
consider an alternate parameterization $(\tilde{B}, \Omega)$ where $\tilde{B} = 
-B \Omega$. The likelihood can be shown to be jointly convex in $(\tilde{B}, \Omega)$, and the respective algorithms in these papers provide sparse estimates of $\tilde{B}$ and $\Omega$ by using appropriate $\ell_1$ penalties. However, sparsity in $\tilde{B}$ does not in general correspond to sparsity in $B$. In many applications, the linear model with the $(B, \Omega)$ parameterization is the natural modeling tool, and sparsity in the regression coefficient matrix $B$ has a specific scientific interpretation. This interpretability may be lost if sparsity is instead imposed on  $\tilde{B}$. See \cite[Section 5]{lin2016penalized} for a detailed discussion. 

Bayesian methods offer a natural framework for addressing \textit{uncertainty 
quantification} of model parameters through the posterior distribution, and several 
Bayesian approaches have also been proposed in the literature. 
\cite{BVF:1998} propose a Bayesian approach for the joint estimation of $B$ and $\Omega$, but restrict the sparsity pattern in $B$ to be such that each row 
of $B$ is completely sparse or completely dense. \cite{RBR:2010} 
allow for a general sparsity pattern in $B$, but restrict $\Omega$ to be a diagonal 
matrix. \cite{bhadra2013joint} use spike-and-slab prior distributions to 
induce sparsity in $B$ (conditional on $\Omega$), and $G$-Wishart prior distributions coupled with
independent Bernoulli priors on the sparsity pattern in $\Omega$. Similar to 
\cite{BVF:1998}, they restrict the rows of $B$ to be completely sparse or completely 
dense, and also restrict the sparsity pattern in $\Omega$ to correspond to a 
\textit{decomposable graph}. In a related work \cite{CRP:2017}, the authors develop an 
objective Bayesian approach for Directed Acyclic Graph estimation in the presence of covariates. This approach induces sparsity in the Cholesky factor of $\Omega$ and corresponds to directly inducing sparsity in $\Omega$, when the underlying sparsity pattern is decomposable. In a recent work, \cite{deshpande2019simultaneous} propose a scalable Bayesian approach using spike-and-slab Laplace prior distributions to induce sparsity in $B$ and $\Omega$. The work employs an Expectation Conditional Maximization algorithm to find the (sparse) posterior mode, and thereby obtain sparse estimates of $B$ and $\Omega$. However, methods to generate samples from the posterior distribution are not explored, and hence uncertainty quantification in the form of posterior credible regions/intervals is not available. In \cite{li2021joint} the authors propose a Gaussian likelihood based fully Bayesian procedure for the simultaneous estimation of the mean vector and the inverse covariance matrix which provides measures of uncertainty. However, in moderate/high dimensional settings it might run into scalability issues as pointed out in Section \ref{Sim:results}.

Note that in most of the above cited literature, and in this paper, two layers of 
variables are considered. The matrix $B$ captures the effect of the top layer 
(predictors) on the bottom layer (responses), while the matrix $\Omega$ captures the 
conditional covariance structure of the bottom layer. It is possible to consider a 
scenario where we have a chain of multiple layers of variables, each layer affecting 
the layer below it (see the setting in \cite{lin2016}), thereby giving rise to multiple pairs of $B$ and $\Omega$ matrices. Due to the factorization of the likelihood based on the Markov property induced by the chain structure, any method developed for the two-layer setting can be extended in a reasonably straightforward way to the multiple layer setting, subject to some model parameter identifiability restrictions (see Section 3.4 in \cite{lin2016}).

In \cite{ha2020bayesian}, the authors consider the multiple layer setting, and develop a generalized 
likelihood based Bayesian approach for simultaneous sparse estimation of the $B$ and $\Omega$ pair. In 
\cite{PWZZ:2009} and \cite{KOR:2015}, the use of a regression based generalized likelihood has been shown 
to significantly improve the computational efficiency compared to Gaussian likelihood based methods for 
standard graphical models (no presence of predictors). In \cite{ha2020bayesian} sparsity inducing 
spike-and-slab prior distributions are used for the entries of $B$ and $\Omega$, and an MCMC algorithm 
(called BANS) based on add/delete/swap moves in the space of sparsity patterns is 
developed to generate approximate samples from the posterior distribution. In 
simulation experiments, scenarios with up to ten layers of variables, and up to 20 
variables in each layers are considered. However, the algorithm starts to run into 
serious computational issues when the number of variables in each layer is bumped up 
to 200 (with two layers). One reason for this is the need for several matrix 
inversions to compute Metropolis-Hastings based rejection probabilities in each 
iteration of the BANS algorithm (see Sections \ref{comparison} and \ref{Sim:results} 
for more details). A faster algorithm called BANS-parallel, wherein computations 
corresponding to each response variable can be parallelized, has also been developed 
in \cite{ha2020bayesian}. However, this approach ignores the symmetry in $\Omega$ 
which negatively affects the quality of the estimates, and again requires matrix 
inversions for various Metropolis-Hastings steps (see Remark \ref{rem:bans:parallel} at the end of Section \ref{jointmodel}). In short, existing Bayesian approaches suffer from at least one of the following drawbacks:
(i) restrict to a subclass of sparsity patterns; (ii) focus on estimating the posterior mode and not on sampling from the posterior distribution; (iii) are not computationally scalable due to excessive use of matrix inversions.

The goal of this paper is to develop a \textit{computationally scalable} generalized likelihood based 
\textit{Bayesian} procedure for joint regression and precision matrix selection, which  can account for 
\textit{arbitrary sparsity patterns} in $B$ and $\Omega$, and provide uncertainty quantification. First, we leverage ideas in \cite{KOR:2015} for standard graphical models (no predictors) to the current setting, and construct a regression based generalized likelihood that is bi-convex in $\Omega$ and $B$. The generalized likelihood in \cite{ha2020bayesian}, which corresponds to the predictor adjusted version of the generalized likelihood in \cite{PWZZ:2009}, is \textit{neither jointly convex, nor bi-convex}. In the standard graphical model setting, it has been demonstrated in \cite{KOR:2015} that convexity plays an important role in improved algorithmic and empirical performance of the generalized likelihood as compared to the one in \cite{PWZZ:2009}. Next, we develop a Gibbs sampling algorithm (referred to as the \textit{joint algorithm}) to sample from the corresponding posterior (using spike-and-slab priors for entries of $B$ and $\Omega$). With entry-wise updates of $B$ and 
$\Omega$ involving standard distributions, we completely avoid the matrix inversions 
needed for the Metropolis-Hastings steps in BANS and the resulting algorithm is significantly computationally faster. As an illustrative example, with $200$ responses and $200$ predictors, the proposed MCMC algorithm, coded in \texttt{R/Rcpp}, completes 3000 iterations (each iteration cycles through all the entries of $B$ and $\Omega$) in less than 5 minutes. In the same setting, BANS (implemented using \texttt{R/Rcpp} code available on \texttt{Github}) was only able to finish less than 100 iterations in 4 days. 

Several frequentist methods in the literature, such as those in 
\cite{lee2012simultaneous} and \cite{cai2013covariate}, consider a step-wise approach for sparse estimation of $B$ and $\Omega$. In this approach, $q$ 
regressions corresponding to each of the responses are used to obtain sparse 
estimates of columns of $B$. The resulting estimate of $B$ is used to compute plug-in 
covariate adjusted responses, subsequently provided to a standard graphical model 
estimation procedure to obtain a sparse estimate of $\Omega$. As a Bayesian analog of
this approach, and for faster computation, we develop a \textit{step-wise algorithm} for joint regression/covariance selection. In this 
approach, we first focus on estimation/selection for $B$ by treating $\Omega$ as a 
diagonal matrix and combining the resulting Gaussian likelihood with spike-and-slab 
priors on entries of $B$. An appropriate posterior estimate $\hat{B}$ of $B$ is then 
used to compute covariate-adjusted responses (or pseudo-errors) $\hat{\boldsymbol{\varepsilon}}_i = 
{\bf y}_i - \hat{B}^T {\bf x}_i$. In the second step of the algorithm, the generalized 
likelihood of \cite{KOR:2015}, along with spike-and-slab priors on the entries of 
$\Omega$ is used for selecting the sparsity pattern in $\Omega$. The computational 
advantage obtained by ignoring the cross-correlation in the responses for the $B$ 
estimation step clearly comes with the cost of some loss of statistical efficiency. 
However, we rigorously establish high-dimensional posterior model selection and 
estimation consistency of the resulting estimates in 
Section \ref{high:dimensional:consistency}.
As expected, the simulation experiments in Section \ref{Sim:results} in general 
demonstrate a loss in statistical accuracy and roughly two times improvement in 
computational performance as compared to the joint algorithm. 

The remainder of the paper is organized as follows. The joint and the step-wise algorithms are developed in Sections \ref{jointmodel} and \ref{stepwisemodel}, respectively. High-dimensional posterior consistency results for the step-wise algorithm are provided in Section \ref{high:dimensional:consistency}. An extensive simulation study evaluating the empirical performance of the proposed algorithms is presented in Section \ref{Sim:results} and an analysis of a cancer data set is presented in Section \ref{data:analysis}. Proofs of the technical results along with additional simulation details are provided in a Supplementary document. 

\section{Joint sparsity selection for $B$ and $\Omega$ using a bi-convex generalized 
likelihood} \label{jointmodel}

\noindent
We develop a generalized likelihood based Bayesian approach for 
jointly estimating the sparsity patterns in $B$ and $\Omega$. Consider the 
log-likelihood denoted by $\ell (Y|X,B,\Omega)$ in  (\ref{ceq1}). One reason why block updates 
corresponding to optimization/MCMC algorithms for the corresponding penalized objective 
functions/posteriors run into computational issues, even in moderate dimensional 
settings, is the presence of the $\log \det \Omega$ term, which leads to expensive 
matrix inverse computations. Let ${\bf y}_{.j}$ denote the 
$j^{th}$ column of the $n \times q$ data matrix $Y$. Hence, ${\bf y}_{.j}$ is the  
collection of all the $n$ observations corresponding to the $j^{th}$ response. 
Then, the conditional density of ${\bf y}_{.j}$ given all the other responses (and of 
course, conditional on $X$) is given by 

\begin{align} \label{condj}
&\left(\frac{\omega_{jj}}{2 \pi}\right)^{n/2} \exp\Big\{ - \frac{\omega_{jj}}{2} \| ({\bf y}_{.j} - X B_{\cdot j}) \nonumber \\ &\quad+ \sum_{k \neq j} \frac{\omega_{kj}}{\omega_{jj}} ({\bf y}_{.k} - X B_{\cdot k})\|_2^2 \Big\} \nonumber\\
&= \left(\frac{\omega_{jj}}{2 \pi}\right)^{n/2} \exp\left\{- \frac{1}{2\omega_{jj}} \left\| (Y - XB) \Omega_{\cdot j} \right\|_2^2 \right\}, \end{align}

\noindent
where $B_{\cdot j}$ and $\Omega_{.j}$ denote the $j^{th}$ columns of $B=((b_{jk}))$ and $\Omega=((\omega_{kl}))$ respectively, and $\|{\bf a}\|_2^2 = 
{\bf a}^T {\bf a}$. The above follows by noting 
that when regressing the $j^{th}$ variable against the other variables, the 
regression coefficient of the $k^{th}$ variable is given by 
$-\omega_{jk}/\omega_{kk}$. Using ideas in \cite{besag1975statistical}, a 
generalized likelihood for $(B, \Omega)$ can be defined by taking the product of 
these conditional densities. This is the regression based generalized likelihood 
used for the BANS algorithm in \cite{ha2020bayesian}, and its form is given by 
\begin{equation} \label{gBANS}
\begin{split}
   & L_{g, BANS} (Y \mid X, B, \Omega) \\ &= \left( \prod_{j=1}^q 
\frac{(\omega_{jj})^{n/2}}{(2 \pi)^{n/2}} \right)\\ &\quad \quad \times \exp\left\{-\sum_{j=1}^q 
\frac{1}{2\omega_{jj}} \left\| (Y - XB) \Omega_{\cdot j} \right\|_2^2 \right\}.
\end{split}
\end{equation}

\noindent
Taking logarithm of the expression in (\ref{gBANS}), shows that the problematic $\log 
\det \Omega$ term in the log-Gaussian likelihood (\ref{ceq1}) is now replaced by a 
much simpler $\sum_{j=1}^q \log \omega_{jj}$ term in the generalized log-likelihood. 
However, the bi-convexity is lost, i.e., given $B$, the function $\log L_{g, BANS}$ 
is not convex in $\Omega$. In the simpler setting of Gaussian graphical models with 
no predictors (i.e., no $B$), it was shown in \cite{KOR:2015} that this lack of 
convexity can lead to severe convergence issues (in a penalized optimization context)
and a convex version of the generalized likelihood was constructed. We adapt this 
idea in the more general setting of joint regression and precision matrix estimation in a Bayesian context. 

In particular, by `weighting' each observation with $\omega_{jj}^{-\frac{1}{2}}$ for 
the expression in (\ref{condj}), i.e., using the conditional density of 
$\omega_{jj}^{-\frac{1}{2}} {\bf y}_{.j}$, and then taking the product over every 
$1 \leq j \leq p$, we get the generalized likelihood 
\begin{equation} \label{joint}
\begin{split}
&\tilde{L}_{g, joint} (Y|X, B, \Omega) \\ &=  \prod_{j=1}^q \frac{(\omega_{jj})^n}{{(2 \pi)}^{n/2}} \\ &\quad \quad \times \exp\left\{-\sum_{j=1}^q 
\frac{1}{2} \left\| (Y - XB) \Omega_{\cdot j} \right\|_2^2 \right\}. 
\end{split}
\end{equation}

\noindent
Next, we discuss some important features related to $\tilde{L}_{g, joint}$ and its use for 
Bayesian inference. 
\begin{itemize}
    \item The exponent in $\tilde{L}_{g, joint}$ now becomes a quadratic form in $\Omega$, 
    and the power of $\omega_{jj}$ is now $n$ instead of $n/2$ (as compared to 
    $L_{g, BANS}$). Hence, $\log \tilde{L}_{g, joint}$ is bi-convex (convex in $\Omega$ given
    $B$, convex in $B$ given $\Omega$) and in general analytically more tractable 
    than $\log L_{g, BANS}$.  
    \item Note that \textit{our primary goal, as far as $\Omega$ is concerned, is 
    sparsity selection}. Hence, following 
    \cite{meinshausen2006high}, \cite{PWZZ:2009}, \cite{KOR:2015}, we relax the \textit{constraint of positive definiteness} for $\Omega$ to the simpler constraint of just having \textit{positive diagonal entries}. This relaxation leads to significant improvement in computational scalability. If a positive definite estimate of $\Omega$ is needed for a downstream application, it can be obtained by a quick refitting step restricting to the selected sparsity pattern. The same relaxation of the positive definiteness constraint is also used for the BANS algorithm in \cite{ha2020bayesian}. Such a relaxation is not possible for the Gaussian likelihood because of the presence of the $\det \Omega$ term.  
    \item Although a generalized likelihood is not a probability density anymore, it 
    can still be regarded as a data based weight function, and as long as the product
    of the generalized likelihood and the specified prior density is integrable over 
    the parameter space, one can construct a posterior distribution and carry out Bayesian 
    inference (see \cite{BHW:2016, Alquier:2020} and the references therein). 
\end{itemize}

To induce sparsity, we use spike-and-slab prior distributions (mixture of point mass at zero and a normal density) for the entries of $B$ and the off-diagonal entries of $\Omega$, and
exponential priors for the diagonal entries of $\Omega$. Specifically, for $B= 
((b_{rs}))$ and $\Omega = ((\omega_{st}))$, we use the following priors: 
\begin{align*}
   b_{rs} \sim (1-q_1)  \delta_0 + q_1 N(0, \tau_1^2), &\;1\leq r \leq p,\\
   &\; 1\leq s \leq q,
   \end{align*}
   \begin{align*}
   w_{st} &\sim (1-q_2)  \delta_0 + q_2 N(0, \tau_2^2), \quad \; 1\leq s < t \leq q,\\
   \omega_{ss} &\sim \lambda \exp(-\lambda \omega_{ss}) ,\; 1\leq s \leq q,
\end{align*}

\noindent
where $b_{rs}$'s and $\omega_{st}$'s are independently distributed and $\delta_0$ 
denotes the distribution with its entire mass at 0. Further, the hyperparameters $q_1, q_2 
\in (0,1)$ denote the respective mixing probabilities for entries of $B$ and 
$\Omega$, and the hyperparameters $\tau_1^2, \tau_2^2$ are the respective prior 
slab variances. 

The resulting generalized posterior distribution \\$\pi_{g,joint}$ is 
intractable in the sense that closed form computation or direct sampling
is not feasible. However, straightforward calculations show that:
\begin{itemize}
\item the full conditional posterior distribution of each entry of $B$ (given all the
other parameters and the data) is a mixture of a point mass at zero and an 
appropriate normal density. For $1\leq r\leq p \;, 1 \leq s\leq q,$
\begin{align*}
    (b_{rs} | Y, B_{-(rs)}, \Omega) \; \sim (1-q_1^{*}) \delta_0 + q_1^{*} N\left(\frac{C_2}{C_1}, \frac{1}{C_1}\right) 
\end{align*}
 where
$$1-q_1^{*} = C_0 (1-q_1), $$
$$ C_0 = \left[ (1-q_1) + \frac{q_1}{\tau_1 \sqrt{C_1}} \exp \left(\frac{C_2^2}{2C_1}\right) \right]^{-1},$$
$$ C_1 = \sum_{k=1}^q\sum_{i=1}^n \omega_{sk}^2x_{ir}^2+\frac{1}{\tau^2_1},$$ 
\begin{align*}
    C_2 &= \sum_{k=1}^q\sum_{i=1}^n \omega_{sk}\left(\sum_{l=1}^q \omega_{lk}y_{il}\right)x_{ir}\\
    &-\sum_{k=1}^q\sum_{i=1}^n \omega_{sk}x_{ir}\left[\sum_{l\neq s}B_{.l}^{T}{\bf{x}}_{i}\omega_{lk}+\omega_{sk}\sum_{j\neq r}b_{js}x_{ij}\right].
\end{align*}
\item the full conditional posterior distribution of each off-diagonal entry of $\Omega$ (given all the other parameters 
and the data) is a mixture of a point mass at zero and an appropriate normal density. For $1\leq s<t\leq q,$
\begin{equation} \label{condomegao}
    (\omega_{st} | Y, \Omega_{-(st)}, B )\sim (1-q_2^{*}) \delta_0 + q_2^{*} N\left(\frac{-D_2}{D_1}, \frac{1}{D_1}\right)
\end{equation}
\noindent
where $$1-q_2^{*} = D_0 (1-q_2), $$
$$ D_0 = \left[ (1-q_2) + \frac{q_2}{\tau_2 \sqrt{D_1}} \exp\left(\frac{D_2^2}{2D_1}\right) \right]^{-1},$$
$$ D_1 = S_{ss} + S_{tt} + \frac{1}{\tau^2_2}, $$ 
$$D_2 = \sum_{l\neq s} \omega_{tl}S_{ls} + \sum_{l \neq t} \omega_{sl}S_{lt},$$
$$S = (Y-XB)^T(Y-XB).$$ 
\item The full conditional posterior density of each diagonal entry of $\Omega$ (given all the other parameters and 
the data) is given as 
\begin{equation} \label{condomegad}
\begin{split}
    &\pi_{g,joint}(\omega_{ss} | Y, \Omega_{-(ss)}, B)\\ 
    &\quad \propto \omega_{ss}^n \exp{[-\frac{1}{2}S_{ss}\{\omega_{ss}+ (\lambda+\sum_{i=1}^n\epsilon_{is}f_{is})/S_{ss}\}^2]}
    \end{split}
\end{equation}
where $$ \epsilon_{is} = y_{is}-B_{.s}^{T}{\bf{x}}_{i} \text{ and } f_{is}=\sum_{l\neq s}\omega_{ls}\epsilon_{il}.$$
This is an univariate density with the unique mode at
$$ \frac{\sqrt{(f_s(\lambda))^2 + 4nS_{ss}} - f_s(\lambda)}{2S_{ss}}. $$
where $f_s(\lambda) = \sum_{i=1}^n\epsilon_{is}f_{is}+ \lambda$. 
%

\end{itemize}

\noindent
These properties allow us to construct a Metropolis-within-Gibbs sampler, which we call the Joint 
Regression Network Selector (JRNS),  to sample from the joint 
generalized posterior density of $B$ and $\Omega$. One 
iteration of JRNS, given the current value of $(B, \Omega)$ is described in Algorithm 1 below. Essentially,
all entries of $B$ and off-diagonal entries of $\Omega$ are sampled from the respective full conditional 
distribution. A Metropolis-Hastings approach is used for the diagonal entries $\omega_{ss}$ of $\Omega$. In particular,
a proposal is generated from a normal density centered at the conditional posterior mode for $\omega_{ss}$.
The proposed value is accepted or rejected based on the relevant Metropolis based acceptance probability 
computed using the proposal normal density and the full conditional of $\omega_{ss}$.  A more detailed description of this algorithm is presented in Section \ref{suppsec:detailedAlgo1} of the Supplement.

\begin{algorithm}[H]
	\caption{Joint Regression Network Selector} 
	\begin{algorithmic}
	\Procedure{JRNS($B,\Omega,X,Y$)}{}
	   \For {$r=1$ \textbf{to} $p$}   \Comment{updating matrix $B$}
			\For {$s=1$ \textbf{to} $q$}
			\State Set $C_1,C_2, q_1^{*}$ 
			\State Sample $b_{rs}$ from the mixture distribution, $ (1-q_1^{*}) \delta_0 + q_1^{*} N(\frac{C_2}{C_1}, \frac{1}{C_1})$ 
			\EndFor
			\EndFor
			\State $E = (Y-XB)$
		    \State $S = E^TE$	
		\For {$s=1$ \textbf{to} $q-1$} \Comment{updating off-diagonals of $\Omega$}
		   \For {$t=s+1$ \textbf{to} $q$}
		   \State Set $D_1,D_2,q_2^{*}$
		   \State Sample $\omega_{st}$ from the mixture distribution, $ (1-q_2^{*}) \delta_0 + q_2^{*} N(-\frac{D_2}{D_1}, \frac{1}{D_1})$
	\EndFor
	\EndFor	
	\For {$s=1$ \textbf{to} $q$} \Comment{updating diagonals of $\Omega$}
	\State Set $f_s(\lambda)$ and compute mode 
	\State $v \gets N(\text{mode},0.001)$ \Comment{choosing proposed value}
	\State Calculate acceptance probability, $\rho$ using $\pi_{g,joint}(\omega_{ss} | Y, \Omega_{-(ss)}, B)$ and proposed value $v$
	\State Accept proposed value, $v$ with probability $\rho$
	\EndFor
	\State \textbf{return} $B$
	\State \textbf{return} $\Omega$
     \EndProcedure	
	\end{algorithmic}
	\label{algo1}
\end{algorithm}

\subsection{Sparsity selection and estimation using MCMC output} 
\label{MCMCoutput}

\noindent
The output $(B^{(i)}, \Omega^{(i)})_{i=1}^M$ from JRNS for an appropriate number $M$ of iterations (after the burn-in period), can be used as follows to estimate the sparsity patterns in the corresponding parameters. We follow the majority voting approach to construct such an estimate, wherein we 
include only those variables whose generalized posterior based marginal inclusion probabilities are at least $1/2$ (\cite{barbieri2004optimal}). Let $\gamma_{jk} = I(b_{jk} \neq 0)$ represent the sparsity indicator of $b_{jk}$ for $j = 1,2,\ldots,p $ and $k = 1,2,\ldots,q$. Then, $\gamma = ((\gamma_{jk}))$ represents the sparsity pattern in $B$. For each $(j,k)$ let $\hat{\pi}_{jk} = P(b_{jk} \neq 0|Y)$ which is approximated by
\begin{equation} \label{defn:MCMCaprrox}
    \frac{1}{M} \sum_{i=1}^{M} I(b_{jk}^{(i)} \neq 0).
\end{equation}
The quantity in (\ref{defn:MCMCaprrox}) is the proportion of iterations for which $b_{jk}^{(i)} \neq 0$ out of the $M$ iterations. If $\hat{\pi}_{jk} \geq 1/2$, the $(j,k)$-th entry is considered non-zero in the estimated sparsity pattern of $B$. It is to be noted that by Ergodic theorem the MCMC approximation given above in (\ref{defn:MCMCaprrox}) converges to the generalized posterior probability, $\hat{\pi}_{jk}$ of $b_{jk}$ being non-zero as $M \to \infty$. For large values of $M$, it is very close to $\hat{\pi}_{jk}$. Then, an estimate of $\gamma_{jk}$ is approximately obtained as 
\begin{align}
    \hat{\gamma}_{jk} =  
    \begin{cases}
      1, &    \text{if }\frac{1}{M} \sum_{i=1}^{M} I(b_{jk}^{(i)} \neq 0) \geq \frac{1}{2}\\
      0, & \text{ otherwise. }\\
    \end{cases} \nonumber
\end{align}
\noindent
A similar majority voting approach based on the generalized posterior 
based marginal inclusion probabilities can be used to estimate the 
sparsity pattern in $\Omega$. In particular, if $\eta_{rs} = 
I(\omega_{rs} \neq 0)$ represents the sparsity indicator for 
$\omega_{rs}$ for $1 \leq r < s \leq q$, then an estimate of $\eta_{rs}$
is approximately obtained as
 \begin{align}
    \hat{\eta}_{rs} =  
    \begin{cases}
      1, & \frac{1}{M} \sum_{i=1}^{M} I(\omega_{rs}^{(i)}\neq 0)  \geq \frac{1}{2}\\
      0, & \text{ otherwise. }\\
    \end{cases} \nonumber
\end{align}
Further, an estimate of the magnitudes of the selected non-zero entries of
$B$ can also be obtained as follows. If $\hat{\gamma}_{jk} = 1$, then 
$$
\hat{b}_{jk} = \frac{\sum_{i=1}^M b_{jk}^{(i)} I(b_{jk}^{(i)}\neq0)}
{\sum_{i=1}^{M} I(b_{jk}^{(i)}\neq0)}. 
$$
\noindent
An estimate of the magnitudes of the selected non-zero entries of $\Omega$ can also 
be obtained similarly. As stated earlier, the positive definiteness constraint
on $\Omega$ is relaxed for faster sparsity selection. An examination of the output of 
JRNS for many of our simulation settings in Section \ref{Sim:results} consistently 
revealed positive definite $\Omega$ iterates. However, there is no general 
guarantee that these iterates or the resulting estimate of $\Omega$ will be positive definite. 

If one wants to enforce positive definiteness, it can be achieved through a 
post-processing step (see \cite{lee2020post}) which focuses on the induced 
posterior of $h(\Omega)$, where 
$$
h(\Omega) = \begin{cases} \Omega & \mbox{if } \mbox{eig}_{\min} (\Omega) > \epsilon, \cr
\Omega + (\epsilon - \mbox{eig}_{\min} (\Omega)) I_q & \mbox{if } \mbox{eig}_{\min} (\Omega) \leq \epsilon. 
\end{cases}
$$
for some suitably chosen $\epsilon > 0$.

\noindent
Note that $h(\Omega)$ is guaranteed to be positive definite and has the exact 
same off-diagonal entries as $\Omega$. Hence, if $\{\Omega^{(r)}\}_{r=1}^{M}$ 
are the $\Omega$ components of the iterates produced by the JRNS or step-wise 
algorithm, sparsity selection, inclusion probabilities and credible intervals 
for the off-diagonal entries are unchanged if one uses 
$\{h(\Omega^{(r)})\}_{r=1}^{M}$ instead of $\{\Omega^{(r)}\}_{r=1}^{M}$. The 
transformation to $h(\Omega)$ only affects estimation of the diagonal entries 
$\{\omega_{ss}\}_{s=1}^q$. This additional eigenvalue check for computing $h(\Omega)$ 
takes $O(q^3)$ computations and hence does not change the computational complexity of 
JRNS (see (\ref{JRNS:computations:omega}) below), and marginally increases the 
wall-clock time (less than 5\% in all our simulation settings). 

Another approach to ensure positive definiteness is to use the {\it refitting}
idea from the penalized sparsity selection literature (see for example 
\cite{ma2016joint}). The estimators generated from penalized sparsity 
selection methods often suffer from (magnitude) bias issues, and one way to 
fix this is to obtain a constrained MLE of the desired parameter (by 
restricting to the the estimated sparsity pattern). Using this idea in our 
context, we compute the estimated sparsity pattern $\hat{\eta}$ in $\Omega$ 
and the regression coefficient matrix estimator $\hat{B}$ from the MCMC output
as described above. Now, we use the pseudo-errors (rows of $Y-X\hat{B}$) as 
approximate samples from a $\mathcal{N}_q ({\bf 0}, \Omega^{-1})$ 
distribution, and use the {\it glasso} function in {\it R} to compute the 
constrained MLE of $\Omega$ restricted to the sparsity pattern $\hat{\eta}$. 

\textbf{Hyperparameter selection:} Selecting hyperparameters is an important issue in any Bayesian approach. In sample-starved settings discussed in this paper, the choice of hyperparameters may have a significant impact on the resulting estimates (see the Supplementary section \ref{suppsec:hpselection} for more illustrations or details). A standard approach to choose the hyperparameters will be to use cross validation wherein we consider a grid of values for the hyperparameters and select the set of values based on the minimum prediction error. However, this method can be computationally expensive. If one does not have the computational resources or time to carry out the cross validation technique, another approach is to make some sensible objective choices as discussed below.

For JRNS, we have the prior mixture probabilities $q_1,q_2$, the prior slab variances $\tau_1^2$ and $\tau_2^2$  and $\lambda$ as hyperparameters. For $q_1$ and $q_2$ one can always consider a flat $U(0,1)$ prior in which case we will get a beta-update for $q_1$ and $q_2$ in every iteration of the Gibbs sampler. Another choice of  $q_1$ and $q_2$ which is motivated by the theoretical results in this paper and also in \cite{cao2019posterior}, \cite{Narisetty:He:2014} is to take $q_1 = 1/p$ and $q_2 = 1/q$. We use these choices in the simulation studies and obtain good results.  For $\tau_1^2$ and $\tau_2^2$ one may choose values around 1 or for a more principled choice one  may choose objective Inverse-Gamma priors with shape = $10^{-4}$ and rate = $10^{-8}$ as suggested in \cite{wang2012bayesian}. These will result in straightforward Inverse-Gamma updates for  $\tau_1^2$ and $\tau_2^2$ in each iteration. One may consider the Gamma prior with the same shape and rate values for $\lambda$ as well.

\subsection{JRNS and BANS: A computational cost comparison} \label{comparison}

\noindent
Next, we discuss the computational cost associated with the proposed JRNS 
algorithm, and compare it with the computational cost for the BANS algorithm in \cite{ha2020bayesian}. The structural differences in the generalized likelihoods used
by the two algorithms have been described in the discussion surrounding equations 
(\ref{gBANS}) and (\ref{joint}). As we describe below, there are also crucial 
differences between the two approaches at the computational level that lead to a
significant difference in overall computational costs. 
\begin{itemize}
    \item Algorithm \ref{supp:detailedAlgo1}, as described in Section C of the supplementary document, provides the detailed pseudo-code for one iteration of the JRNS algorithm. The matrix multiplications in Lines 2 and 3 take at most $O(pq^2 + qp^2)$ operations (computation of $X^T X$ and $X^T Y$ needs to be done only once prior to starting the iterations, and hence is not included). For each of 
    the $pq$ repetitions of the dual for loops in Lines 4 and 5, the most expensive 
    steps are the computation of $C_2$ in Line 12, which takes $O(q)$ operations, and
    the update of the $s^{th}$ row of $M_2$ in Line 20 which takes $O(p)$ operations.
    The computational cost of all the other steps does not depend on $n, p, q$ and involves $O(1)$ operations in all. Hence, the overall cost of Lines 4 to 22 is at most
     \begin{equation} \label{JRNS:computations:B}
    pq(O(p) + O(q)) = O(p^2q + pq^2) 
    \end{equation}
    
    \noindent
    operations. The matrix multiplications in Lines 23 and 24 need $O(npq + nq^2)$ operations. For each of the $\binom{q}{2}$ repetitions of the dual 
    for loops in Lines 25 and 26, the most expensive step is the computation of $D_2$ 
    in Line 33, which takes $O(q)$ operations. The computational cost of all the 
    other steps does not depend on $n, p, q$ and takes $O(1)$ operations in all. The 
    update of the diagonal entry in Lines 42 to 46 takes $O(q)$ operations and is 
    only repeated in the outer for loop. Also, the update of $\Omega^2$ in Line 49 requires $O(q^3)$ operations. Hence, the overall cost of Lines 23 to 49 is
    at most 
    \begin{equation} \label{JRNS:computations:omega}
    \binom{q}{2}O(q) + qO(q) + q^3 = O(q^3)
    \end{equation}
    \noindent
    The overall cost of one iteration of the JRNS algorithm can be obtained by 
    adding the values in (\ref{JRNS:computations:B}) and 
    (\ref{JRNS:computations:omega}). \textit{Note that this is an upper bound}, as the sparsity 
    in $B$ and $\Omega$ can reduce the cost of many vector/matrix products 
    in Algorithm~1. 
    
    \item The BANS algorithm
    \citep[Supplemental Section S3]{ha2020bayesian} uses a Metropolis-Hastings based 
    approach to do a neighbourhood exploration in the graph spaces for the sparsity 
    patterns in $B$ and $\Omega$. This algorithm is implemented in the 
    \texttt{ch.chaingraph} function available on \texttt{Github} with the Supplementary material for 
    \cite{ha2020bayesian}  In particular, the $\Omega$ update in each
    iteration of the BANS algorithm cycles through each response variable, and 
    proposes an add-delete or swap operation among its current 
    neighbors or non-neighbors. This proposal is accepted or rejected based on a 
    Metropolis-Hastings based probability. The non-zero entries in the appropriate 
    rows of $\Omega$ are then generated from relevant multivariate normal 
    distributions. To implement this procedure, the authors start by computing the 
    inverse of a $q \times q$ matrix (Line $73$ of \texttt{chaingraph.R} in 
    \cite{ha:supplemental:github}). The inversion requires $q^3$ operations. Since 
    this is done \textit{for all $q$ response variables}, the costs of these inversions 
    add up to $q^4$ operations. There are of course, additional costs to consider for
    the computation of the acceptance probability and multivariate normal sampling 
    described previously, which requires more albeit smaller matrix inversions and 
    matrix multiplications of its own. A similar approach and inversion of $p \times 
    p$ matrices for all $p$ predictor variables is needed in the $B$ update, which 
    leads a computational cost of $p^4$ operations (Lines 169-173 of 
    \texttt{chaingraph.R} in \cite{ha:supplemental:github}). The overall computational cost for one iteration of the BANS algorithm is therefore of the order of 
    $p^4 + q^4$. 
\end{itemize}
\noindent
The above analysis shows that each iteration of the JRNS algorithm is an order of 
magnitude faster than each iteration of the BANS algorithm. The multiple inversions 
in the BANS algorithm are probably the main reason for the computational issues 
encountered when both $p$ and $q$ are in the hundreds (see the simulation study in 
Section \ref{Sim:results} for more details). 

\begin{remark}\label{rem:bans:parallel}
There is a faster version of the BANS algorithm, called BANS-parallel,
which has been constructed by ignoring the symmetry in $\Omega$ to parallelize 
the computations for each row. While a similar parallel version can also 
be constructed for JRNS, we find that even without parallelization, JRNS 
is computationally faster than the BANS-parallel algorithm. 
For instance, in a 
simulation setting with $n=100,p=30,q=60$, 3000 MCMC iterations take 
around 50 seconds for the BANS-parallel as opposed to 5 seconds for the 
regular JRNS algorithm. Also, it is well known from the vanilla graphical models 
literature (see for example \cite{PWZZ:2009},\cite{KOR:2015}) that this 
non-symmetric approach can lead to statistical inefficiencies, and we do 
not pursue it further. 
\end{remark} 

\section{Step-wise estimation of the sparsity patterns of $B$ and 
$\Omega$} 
\label{stepwisemodel}

\noindent
Next, we present a computationally faster alternative to the JRNS procedure. As opposed to jointly estimating the 
sparsity patterns of $B$ and $\Omega$ using the generalized likelihood in (\ref{joint}), we first estimate the sparsity 
pattern in $B$ by looking at the $q$ individual regressions inherent in the multivariate regression model (\ref{multregmdl}), 
and use this to obtain an estimate of the sparsity pattern in $\Omega$. We provide a detailed description below. 

\noindent
{\bf Step 1: Estimating the sparsity pattern in $B$ using individual regressions}. Let ${\bf y}_{.j}$ denote the $j^{th}$ column of 
the response matrix $Y$, $B_{\cdot j}$ denote the $j^{th}$ column of the regression coefficient matrix $B$, and 
$\boldsymbol{\varepsilon}_{\cdot j}$ denote the $j^{th}$ column of the error matrix $\boldsymbol{\varepsilon}$, for $j = 1,2, \cdots, q$. Note that ${\bf y}_{.j}$ is the collection 
of the $n$ observations for the $j^{th}$ response variable, and the entries of $\boldsymbol{\varepsilon}_{\cdot j}$ are i.i.d. $N(0, \sigma_j^2)$, 
where $\sigma_j^2$ is the $j^{th}$ diagonal entry of $\Omega^{-1}$. The multivariate regression model $Y = XB + \boldsymbol{\varepsilon}$ in 
(\ref{multregmdl}) can be equivalently represented as a collection of the $q$ individual regressions 
\begin{equation} \label{individual}
{\bf y}_{.j} = X B_{\cdot j} + \boldsymbol{\varepsilon}_{\cdot j} \quad \mbox{ for } j = 1,2, \cdots, q. 
\end{equation}
\noindent
Clearly, the vectors ${\bf y}_{.1}, {\bf y}_{.2}, \cdots, {\bf y}_{.q}$ are dependent, and this dependence is precisely captured 
by the precision matrix $\Omega$. However, in this section, we will be agnostic to this dependence, and consider a 
generalized likelihood for $B, (\sigma_j^2)_{j=1}^q$ based on the product of the marginal densities of the vectors ${\bf y}_{.1}, 
{\bf y}_{.2}, \cdots, {\bf y}_{.q}$ as follows. 
\begin{equation} \label{genrealized:likelihood:individual}
\begin{split}
&\quad \widetilde{L}_{g, individual} (Y|X, B, (\sigma_j^2)_{j=1}^q) \\
& = \prod_{j=1}^q \left( \frac{1}{(2 \pi \sigma_j^2)^{n/2}} \right)\\
& \; \times \prod_{j=1}^q 
\exp\left\{- \frac{1}{2 \sigma_j^2} ({\bf y}_{.j} - X B_{\cdot j})^T ({\bf y}_{.j} - X B_{\cdot j})\right\}.
\end{split}
\end{equation}
We use spike-and-slab priors (mixture of point mass at zero and a normal density) for the entries of $B = ((b_{rs}))$, and 
Inverse-Gamma priors for $(\sigma_s^2)_{s=1}^q$. In particular for $1 \leq r \leq p, \; 1 \leq s \leq q,$
\begin{align*}
 \label{priors2}
   b_{rs} &\sim (1-q_1) \delta_0 + q_1 N(0, \tau_1^2 \sigma_{s}^2), \\
   \sigma_{s}^2 &\sim \text{ Inv-Gamma}(\alpha,\beta),
\end{align*}
\noindent
where $b_{rs}$'s and $\sigma_{s}^2$'s are independently distributed and $\delta_0$ 
denotes the distribution with a point mass at 0. Again, $q_1 \in (0,1)$ is a 
hyperparameter denoting the mixing probability for the spike-and-slab priors. 
The resulting generalized posterior distribution (denoted by $\pi_{g,individual}$) is again intractable in 
the sense that closed form computation or direct sampling is not feasible. However, straightforward 
calculations show that: 
\begin{itemize}
\item the full conditional posterior distribution of each entry of $B$ (given all the other parameters and 
the data) is a mixture of a point mass at zero and an appropriate normal density:
\begin{align*}
&(b_{rs} | Y, B_{-(rs)},\sigma_1^2,\ldots,\sigma_q^2)\\ &\quad \sim (1-q_1^{*}) \delta_0 + q_1^{*} 
N\left(\frac{C_2}{C_1}, \frac{\sigma^2_s}{C_1}\right) 
  \end{align*}
\noindent
where,
$$1-q_1^{*} = C_0(1- q_1),$$ 
$$ C_0 = \left[ (1-q_1) + \frac{q_1}{\tau_1 \sqrt{C_1}} \exp \left(\frac{C_2^2}{2\sigma^2_s C_1}\right) 
\right]^{-1},$$
$$ C_1 = \sum_{i=1}^{n}x_{ir}^2 + \frac{1}{\tau^2_1}, $$ 
$$ C_2 = \sum_{i=1}^{n}x_{ir}\left(y_{is}-\sum_{j\neq r}x_{ij}b_{js}\right).$$
\item The full conditional posterior distribution of $\sigma_s^2$ (given all 
the other parameters and the data) is again Inverse-Gamma:
\begin{equation}
(\sigma_s^2 | {\bf y}_{.s}, b_{1s},\ldots ,b_{ps}) 
\sim \text{Inv-Gamma} \left(\alpha^{*}, \beta^{*} \right),
\end{equation}
where
$$ \alpha^{*} = \alpha+\frac{n+|B_{.s}|}{2},$$
$$ \beta^{*} = \beta + \frac{{||{\bf y}_{.s}-XB_{.s}||}_2^2}{2}+ \frac{B_{.s}^TB_{.s}}{2\tau_1^2},$$
$$|B_{.s}| := \text{ number of non-zero entries in $ B_{.s}$}. $$
\end{itemize}
\noindent
These properties allow us to construct a Gibbs sampler to generate approximate
samples from the generalized posterior distribution of $(B, (\sigma_j^2)_{j=1}^q)$. We can construct an 
estimate $\hat{\boldsymbol \gamma}_{stepwise}$ of the sparsity pattern in  $B$ using the majority voting 
approach similar to the one mentioned in Section \ref{MCMCoutput}. An estimate $\hat{B}$ of $B$ can also be
obtained as follows. Let $B^{*}$ be a $p \times q$ matrix whose $k$-th column, $B_{.k}^{*}$ is given by the
posterior mean 
$$ E(B_{.k}|\gamma_{.k},Y)$$ 
which has a closed form expression given in Section A.3 of the supplementary document. Our estimate $\hat{B}_{stepwise}$ of $B$ is obtained from $B^{*}$, replacing $\gamma$ by its estimate $\hat{\gamma}_{stepwise}$. Alternatively, an estimate of $B$ can also be obtained using the Gibbs output in a similar manner as done for the JRNS approach towards the end of Section \ref{MCMCoutput}. For notational simplicity, in the rest of the paper, we will simply write $\hat{B}$ in place of $\hat{B}_{stepwise}$. 

Note that using the generalized likelihood denoted by $\widetilde{L}_{g, individual}$ 
amounts to simultaneously and independently estimating $q$ individual 
regressions with Gaussian errors. The Gibbs sampling approach in 
\cite[Section 7]{Narisetty:He:2014} for univariate regressions with spike-and-slab priors can potentially be used for each of the $q$ regressions. However, 
this approach again relies on first making appropriate moves in the space of 
sparsity patterns and then drawing the regression coefficient vector from the 
relevant multivariate normal distribution. With settings where $p$ and $q$ 
both are large in mind, we prefer to avoid the multivariate normal draws and 
instead use univariate mixture normal updates for each entry of $B$ as previously specified. 

\smallskip

\noindent
{\bf Step 2: Estimating the sparsity pattern in $\Omega$ using error estimates from Step 1}. Using the 
working estimate $\hat{B}$ from Step 1, we construct error estimates 
$$
\hat{\boldsymbol{\varepsilon}}_i = {\bf y}_i - \hat{B}^{T} x_i \mbox{ for } i = 1,2, \cdots, n. 
$$
\noindent
Let $\hat{\boldsymbol{\varepsilon}} = Y - X \hat{B}$ denote the $n \times q$ matrix with $i^{th}$ row given by $\hat{\boldsymbol{\varepsilon}}_i^T$ for $1 \leq i \leq n$. We know that the true errors ${\boldsymbol{\varepsilon}}_1, {\boldsymbol{\varepsilon}}_2, \cdots, {\boldsymbol{\varepsilon}}_n$ are i.i.d. $\mathcal{N}({\bf 0}, \Omega^{-1})$. Using the estimated error $\hat{\boldsymbol{\varepsilon}}_i$ as an approximation for the true error ${\boldsymbol{\varepsilon}}_i$, our task of estimating 
$\Omega$ is now reduced to a sparse precision matrix estimation problem. 
For this purpose, we use the generalized regression based likelihood for $\Omega$ by replacing $\boldsymbol{\varepsilon} = Y - XB$ in (\ref{joint}) by $\hat{\boldsymbol{\varepsilon}}$ as follows.
\\
\begin{equation} \label{generalized:likelihood:omega}
\begin{split}
&\widetilde{L}_{g, omega} (\hat{\boldsymbol{\varepsilon}}|\Omega) \\ &= \left( \prod_{j=1}^q \frac{(\omega_{jj})^n}{(2 \pi)^{n/2}} \right) \exp\left\{-\sum_{j=1}^q 
\frac{1}{2} \left\| \hat{\boldsymbol{\varepsilon}} \Omega_{\cdot j} \right\|_2^2 \right\}.
\end{split}
\end{equation}
We use spike-and-slab prior distributions (mixture of point mass at zero and a normal density) for the off-diagonal entries for $\Omega$, 
and exponential priors for the diagonal entries of $\Omega$. In particular for $1\leq s < t \leq q$, 
\begin{align*}
 \omega_{st} &\sim (1-q_2) \delta_0 + q_2 N(0, \tau_2^2),   \\
 \omega_{ss} &\sim \lambda \exp{(-\lambda\omega_{ss})},\quad \omega_{ss}>0.
\end{align*}
\noindent
The resulting generalized posterior distribution is intractable in the sense that closed 
form computation or direct sampling is not feasible. However, straightforward 
calculations show that the full conditional posterior distributions of the 
off-diagonal and diagonal elements of $\Omega$ are exactly as in 
(\ref{condomegao}) and (\ref{condomegad}) with $B$ replaced by $\hat{B}$ as 
needed. These properties allow us to construct a Gibbs sampler to generate 
approximate samples from the generalized posterior distribution of $\Omega$, which can 
further be used to construct an estimator $\hat{\boldsymbol \eta}_{stepwise}$ 
of the sparsity pattern of $\Omega$ using the majority voting approach in a 
similar manner as was done for $B$ in Step 1 of Method 2. 

The issue of hyperparameter selection is also important in this approach. In the Stepwise approach we have the Inverse-Gamma parameters $\alpha$  and $\beta$ from the prior on the diagonals of $\Omega^{-1}$ in Step 1 along with the other hyperparameters considered for the JRNS approach, namely the prior mixture probabilities $q_1,q_2$, the prior slab variances $\tau_1^2$, $\tau_2^2$  and $\lambda$. For the hyperparameters $q_1,q_2,\tau_1^2,\tau_2^2$ and $\lambda$ similar choices can be taken as in the JRNS algorithm. As for the prior distributions on the diagonals of $\Omega^{-1}$, one might consider the objective Inverse-Gamma priors with shape = $10^{-4}$ and rate = $10^{-8}$ as considered for $\tau_1^2$ and $\tau_2^2$.

\subsection{High dimensional selection consistency for the step-wise approach} \label{high:dimensional:consistency}

\noindent
We establish high-dimensional consistency of the stepwise procedure for estimation of the sparsity 
patterns of $B$ and $\Omega$ described in Section \ref{stepwisemodel}. We will consider a high-dimensional setting, where the number of responses $q$ and the number of predictors $p$ vary with $n$. Under the true model, the response matrix $Y$ is obtained as 
$$
Y = X B_0 + \boldsymbol{\varepsilon}, 
$$
\noindent
or, equivalently, 
$$
{\bf y}_i = B_0^T {\bf x}_i + \boldsymbol{\varepsilon}_i \quad \mbox{ for } i = 1,2, \cdots, n. 
$$
\noindent
The predictor vectors ${\bf x}_1, {\bf x}_2, \cdots, {\bf x}_n$ and the error vectors $\boldsymbol{\varepsilon}_1, \boldsymbol{\varepsilon}_2, \cdots, \boldsymbol{\varepsilon}_n$ are assumed to be i.i.d. $\mathcal{N}_p ({\bf 0}, R_0)$ and i.i.d. $\mathcal{N}_q ({\bf 0}, \Omega_0^{-1})$ respectively.
Since both $p$ and $q$ grow with $n$, the true parameters $B_0, \Omega_0, 
R_0$ also change with $n$, but we suppress this dependence for ease of exposition and remind the reader of this 
dependence as needed. Let the indicator matrices $\bf{\gamma}_t$ and 
$\bf{\eta}_t$ respectively denote the sparsity patterns in $B_0$ and 
$\Omega_0$ respectively, and $\mathbb{P}_0$ denote the probability measure 
underlying the true model. We define $\nu_{t_k}$ as the number of non-zero entries in $(\gamma_t)_{.k}$, 
the $k^{th}$ column of $\gamma_t$, $k_n  = \max\limits_{1\leq k\leq q} \nu_{t_{k}} + 1$, and $\delta_n = 
\sum_{k=1}^q \nu_{t_k}$. Under standard and mild regularity conditions on the eigenvalues of $R_0$ and the 
hyperparameters $q_1,q_2, \tau_1^2$ and $\tau_2^2$ (see Supplementary Sections \ref{suppsec:theorem1a} and 
\ref{suppsec:theorem1b} for details), the following consistency result can be established in a regime where
$pq$ essentially can grow sub-exponentially with $n$. 

\begin{thm}(Selection and Estimation Consistency of the generalized posterior) \label{jp:theorem1}
Suppose \\ $\frac{k_n^2\log(pq)}{n}\to 0$. Then,
\begin{enumerate}[label=(\alph*)]
    \item (Selection Consistency for $B$) Under Assumptions A1-A4 stated 
in Supplementary Section \ref{suppsec:theorem1a}, the (sequence of) sparsity pattern estimates 
$\hat{\boldsymbol \gamma}_{stepwise}$ 
for $B$ obtained from the step-wise approach satisfy 
$$
\mathbb{P}_0 \left( \hat{\boldsymbol \gamma}_{stepwise} = \bf{\gamma}_t \right) 
\rightarrow 1 \; \text{ as } n \to \infty.
$$
\item (Estimation Consistency for $B$) Under Assumptions A1-A4 stated 
in Supplementary Section \ref{suppsec:theorem1a}, the pseudo-posterior distribution on B concentrates 
around the truth at a rate of $\sqrt{\frac{\delta_n \log(pq)}{n}}$ (in Frobenius norm). In particular, 
{\small $$ 
\mathbb{E}_0 \left[ \Pi_{g,individual} \left( \|B - B_0\|_F > K \sqrt{\frac{\delta_nlog(pq)}{n}} 
\mid Y \right) \right] 
$$}

\noindent
converges to $0$ as $n \rightarrow \infty$ for a large enough constant $K$. 
\item (Selection Consistency for $\Omega$) Under Assumptions A1 - A4 and B1 - B4 stated 
in Supplementary Sections \ref{suppsec:theorem1a} and \ref{suppsec:theorem1b}, the (sequence of) sparsity pattern estimates $\hat{\boldsymbol \eta}_{stepwise}$
for $\Omega$ obtained from the step-wise approach satisfy
$$
\mathbb{P}_0 \left( \hat{\boldsymbol \eta}_{stepwise} = \bf{\eta}_t \right) 
\rightarrow 1 \; \text{ as } n \to \infty. 
$$
\end{enumerate} 
\end{thm}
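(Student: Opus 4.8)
The plan is to prove the three parts in the natural order, since part (c) genuinely depends on (a)–(b): the $\Omega$-selection step uses the plug-in errors $\hat{\boldsymbol{\varepsilon}} = Y - X\hat B$, and controlling the gap between $\hat{\boldsymbol{\varepsilon}}$ and the true errors $\boldsymbol{\varepsilon}$ requires both the support recovery and the Frobenius-rate bound for $\hat B$. For part (a), since $\widetilde{L}_{g,individual}$ factorizes across the $q$ responses, the generalized posterior on $(B,(\sigma_j^2))$ factorizes into $q$ independent univariate-regression posteriors with spike-and-slab priors. I would therefore reduce the claim to the known strong model selection consistency for a single Bayesian linear regression with spike-and-slab priors (in the style of \cite{Narisetty:He:2014}, \cite{cao2019posterior}): under Assumptions A1–A4 (eigenvalue/restricted-eigenvalue conditions on $R_0$, a beta-min type lower bound on the nonzero entries of $B_0$, the scaling $q_1 \asymp 1/p$, and $\tau_1^2$ large enough), each column's posterior puts mass $\to 1$ on the true column support, with the error probability decaying fast enough (polynomially in $pq$, or exponentially in $n$) that a union bound over the $q$ columns still gives $\mathbb{P}_0(\hat{\boldsymbol{\gamma}}_{stepwise} = \boldsymbol{\gamma}_t) \to 1$. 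The condition $k_n^2 \log(pq)/n \to 0$ is exactly what is needed so that the per-column error bound, which depends on $\nu_{t_k} \le k_n$ and $\log p$, survives the union bound over $q$ columns (hence $\log(pq)$); majority voting over MCMC iterates is then handled by noting the posterior inclusion probabilities themselves are bounded away from $1/2$ on the correct side.

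For part (b), I would again work column by column. Conditionally on selecting the true support $\gamma_{t,\cdot k}$ (an event of probability $\to 1$ by part (a)), the posterior for $B_{\cdot k}$ restricted to that support is a standard conjugate normal (after integrating out $\sigma_k^2$, a multivariate-$t$), centered near the least-squares fit on the true support; standard concentration for Gaussian/$t$ quadratic forms plus the restricted eigenvalue condition gives $\|B_{\cdot k} - (B_0)_{\cdot k}\|_2 = O_p(\sqrt{\nu_{t_k}\log(pq)/n})$ under the posterior. To get the stated Frobenius rate I would sum these over $k$: $\|B - B_0\|_F^2 \lesssim \sum_k \nu_{t_k}\log(pq)/n = \delta_n \log(pq)/n$, and then argue the contribution of the (exponentially unlikely) event that some column misses its support is negligible in $\mathbb{E}_0[\,\cdot\,]$ — this requires a crude bound on the prior/posterior tails outside the true model, e.g. via the slab variance and the mixing probability, to absorb that term. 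This sum-of-columns argument is where the definition $\delta_n = \sum_k \nu_{t_k}$ enters rather than $q k_n$, and it is the reason the statement is sharper than a naive bound.

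For part (c), the strategy is to transfer $\Omega$-selection consistency for the \emph{true} errors $\boldsymbol{\varepsilon}$ over to the plug-in errors $\hat{\boldsymbol{\varepsilon}}$. With true errors, $\widetilde{L}_{g,omega}$ is the KOR-style bi-convex pseudo-likelihood, and under Assumptions B1–B4 (eigenvalue bounds on $\Omega_0$, a beta-min condition on the off-diagonals, $q_2 \asymp 1/q$, suitable $\tau_2^2,\lambda$) one gets $\mathbb{P}_0(\hat{\boldsymbol{\eta}} = \boldsymbol{\eta}_t \mid \text{true errors}) \to 1$ by the same type of column-wise spike-and-slab argument as in (a), now with $\log q$ and max-column-degree $k_n$ in the rate. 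The crux is then a perturbation argument: the full conditionals \eqref{condomegao}–\eqref{condomegad} depend on the errors only through $S = \hat{\boldsymbol{\varepsilon}}^T\hat{\boldsymbol{\varepsilon}}$, so I need $\|\hat{\boldsymbol{\varepsilon}}^T\hat{\boldsymbol{\varepsilon}} - \boldsymbol{\varepsilon}^T\boldsymbol{\varepsilon}\|_{\max}$ small. Writing $\hat{\boldsymbol{\varepsilon}} = \boldsymbol{\varepsilon} - X(\hat B - B_0)$, this difference expands into cross terms $\boldsymbol{\varepsilon}^T X(\hat B - B_0)$ and $(\hat B-B_0)^T X^T X (\hat B - B_0)$; bounding them in max-norm using $\|\hat B - B_0\|_F = O_p(\sqrt{\delta_n \log(pq)/n})$ from part (b), $\|X^TX/n\|$ controlled by A-type assumptions, and Gaussian max-norm bounds on $X^T\boldsymbol{\varepsilon}/n$, one shows the perturbation is $o_p(1)$ and in particular small relative to the beta-min gap for $\Omega_0$. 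Feeding this into the $D_0, D_1, D_2$ expressions shows the posterior inclusion probabilities computed from $\hat{\boldsymbol{\varepsilon}}$ stay on the correct side of $1/2$ with probability $\to 1$, giving $\hat{\boldsymbol{\eta}}_{stepwise} = \boldsymbol{\eta}_t$.

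\textbf{Main obstacle.} The hardest part is the perturbation step in (c): propagating the Frobenius-norm control on $\hat B - B_0$ through the quadratic $\hat{\boldsymbol{\varepsilon}}^T\hat{\boldsymbol{\varepsilon}}$ into a \emph{max-norm} (entrywise) bound tight enough to beat the $\Omega_0$ beta-min threshold, uniformly over all $\binom{q}{2}$ off-diagonal updates, while also ensuring the diagonal updates \eqref{condomegad} and the associated Metropolis step do not degrade the argument. This is where the scaling assumption $k_n^2\log(pq)/n \to 0$ is really spent, and where one must be careful that the errors introduced in Step 1 do not accumulate across the $q$ response regressions in Step 2.
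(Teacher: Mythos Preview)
Your plan is essentially the paper's: columnwise factorization plus Narisetty--He-style posterior-ratio bounds for (a), conjugate normal--inverse-gamma concentration on the true support for (b) with a union bound over the $q$ columns, and for (c) a perturbation bound on the plug-in Gram matrix $\hat S = n^{-1}(Y-X\hat B)^T(Y-X\hat B)$ fed into a KOR/B-CONCORD-type selection result. The paper also handles the ``wrong-support'' contribution in (b) more cheaply than you suggest: it simply splits $\mathbb{E}_0\Pi_n(\cdot\mid Y)\le \mathbb{E}_0\Pi_n(\cdot\mid Y,\gamma_t)+\mathbb{E}_0\Pi_n(\gamma\neq\gamma_t\mid Y)$ and kills the second term with (a), so no separate tail bound is needed.

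There is one concrete step that would not close as you wrote it. In (c) you propose to control $\|\hat S - S\|_{\max}$ via the Frobenius bound $\|\hat B - B_0\|_F = O_p(\sqrt{\delta_n\log(pq)/n})$. But the $(k,l)$ entry of $\hat S - S$ depends on the individual columns $(\hat B - B_0)_{\cdot k}$ and $(\hat B - B_0)_{\cdot l}$, so what you actually need is $\max_k \|(\hat B - B_0)_{\cdot k}\|_1$; the Frobenius norm only controls the \emph{sum} of column norms and leaves a factor of $q$ hanging. The paper instead establishes directly, from the closed-form posterior mean on the recovered support and part (a), that
\[
\|\hat B - B_0\|_1 \;=\; \max_{1\le k\le q}\|(\hat B - B_0)_{\cdot k}\|_1 \;=\; O_p\!\Big(k_n\sqrt{\tfrac{\log(pq)}{n}}\Big),
\]
and pairs this with crude $O(1)$ Cauchy--Schwarz bounds on $\|X^T\varepsilon/n\|_{\max}$ and $\|X^TX/n\|_{\max}$ to obtain $\|\hat S - S\|_{\max}\le c\,k_n\sqrt{\log(pq)/n}$, which is what Assumption~B1 is calibrated to absorb. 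Your columnwise $\ell_2$ bounds from (b) do yield this $\ell_1$ bound after multiplying by $\sqrt{\nu_{t_k}}\le\sqrt{k_n}$, so the repair is immediate---but invoke those per-column bounds, not the aggregated Frobenius rate.
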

\noindent
The proof of the above results leverages arguments in 
\cite{Narisetty:He:2014} and \cite{KOR:2015} for univariate spike-and-slab regression and 
standard graphical models with no covariates. However, some careful modifications and 
additional arguments are needed for the multivariate setting and the fact that 
pseudo-errors with an estimate $\hat{B}$ are being used in Step 2 of the 
step-wise approach. The proof is provided in Supplementary Sections  \ref{suppsec:theorem1a} and \ref{suppsec:theorem1b}.

\section{Performance Evaluation} \label{Sim:results}

\noindent
We evaluate the performance of the joint JRNS approach presented in Section \ref{jointmodel} and the step-wise approach in Section \ref{stepwisemodel} under diverse 
simulation settings. The data generating model is 
$Y = X B_0 + \boldsymbol{\varepsilon}$, where the $n$ rows of the error matrix $\boldsymbol{\varepsilon}$ are i.i.d. 
multivariate normal with mean vector ${\bf 0}$ and precision matrix
$\Omega_0$. We consider six different combinations
of the triplet $(n,p,q)$ along with the number of non-zero entries in 
$B_0$ and the off-diagonal part of $\Omega_0$ provided in 
Table \ref{npq:combination}. For each combination, the rows of $X$ are 
independently generated according to $\mathcal{N}_p(\textbf{0},R_0)$, 
where $R_0 = \left(0.7^{|j-k|}\right)_{j,k = 1}^p$. The non-zero 
entries of $B_0$ are drawn independently from a $U(1,2)$ distribution. Further, the non-zero entries of the off-diagonals of $\Omega_0$ are drawn 
independently from $U((-1,-0.5) \cup (0.5,1))$, while diagonal entries are drawn independently from a $U(1,2)$ distribution.  
\begin{table}[htbp]
\centering
\caption{Six different simulation settings with different $(n,p,q)$ combinations and number of true non-zero entries.}
\resizebox{0.95\textwidth}{!}{%
\begin{tabular}{|c|c|c|c|}
\hline
Combination & $(n,p,q)$ & Non-zeros in $B_0$ & Non-zeros in $\Omega_0$ (off-diagonals)\\
\hline
1 & $(100,30,60)$ & $p/5$ & $q/5$\\
2 & $(100,60,30)$ & $p/5$ & $q/5$\\
3 & $(150,200,200)$ & $p/5$ & $q/5$\\
4 & $(150,300,300)$ & $p/5$ & $q/5$\\
5 & $(100,200,200)$ & $p/30$ & $q/5$\\
6 & $(200,200,200)$ & $p/30$ & $q/5$\\
\hline
\end{tabular}
}%
\label{npq:combination}
\end{table}

\noindent
For each simulation setting in Table \ref{npq:combination}, we generate 200 replicated data sets to
evaluate the computational performance and selection 
accuracy with respect to $B$ and $\Omega$ of the proposed methods along with 
state-of-the-art Bayesian methods. Specifically, we compare the following methods: Joint (JRNS 
algorithm in Section \ref{jointmodel}), Stepwise (step-wise algorithm in 
Section \ref{stepwisemodel}), BANS (Bayesian node-wise selection algorithm
from \cite{ha2020bayesian}), DPE (Spike-and-slab lasso with dynamic 
posterior exploration from \cite{deshpande2019simultaneous}), 
DCPE (Spike-and-slab lasso with dynamic conditional posterior 
exploration from \cite{deshpande2019simultaneous}) and \textcolor{black}{HS-GHS (horseshoe-graphical horseshoe) from \cite{li2021joint}}. Note that any estimator 
obtained by maximizing a penalized likelihood can be interpreted as the posterior 
mode of an appropriate Bayesian model. The DPE and DCPE esitmators are essentially 
penalized likelihood estimators obtained by using spike and (Laplace) slab penalties 
for individual entries of $B$ and $\Omega$. In detailed simulations in 
\cite{deshpande2019simultaneous}, these methods are shown to provide significantly 
superior selection performance than the other penalized likelihood approaches such as 
MRCE \cite{rothman2010sparse} and CAPME \cite{cai2013covariate}, and we use them here as 
benchmarks for the selection performance of the proposed methods. Of course, these 
optimization based approaches do not generate samples from the posterior distribution
and can not provide uncertainty quantification in the form of posterior credible intervals/inclusion probabilities. The HS-GHS method of \cite{li2021joint} is a fully Bayesian approach based on the Gaussian likelihood.

The joint and step-wise methods were both run for 1000 burn-in iterations 
and then 2000 more follow-up iterations. The hyperparameters were chosen 
as described towards the end of Section \ref{MCMCoutput} (theoretically motivated choices for $q_1$ and $q_2$ and objective inverse-gamma priors for $\tau_1^2$ and $\tau_2^2$). We also consider learning $q_1$ and $q_2$ adaptively by using Beta hyperpriors on $q_1$ and $q_2$ and the results are presented in Tables 8 and 9 of Supplementary Section \ref{suppsec:hpselection}. We use traceplots and cumulative average plots to monitor and ensure the convergence of the MCMC. Some of these plots are provided in Figures \ref{fig:traceplot} and \ref{fig:avgplot}. The BANS algorithm was run using the default hyperparameter settings in \cite{ha2020bayesian} again with 
1000 burn-in and 2000 more follow-up iterations. DPE and DCPE are 
optimization algorithms for identifying the relevant posterior mode, and 
they were run with default settings provided in 
\cite{deshpande2019simultaneous}. 

\begin{figure*}
    \centering
    \includegraphics[width = 0.95\linewidth]{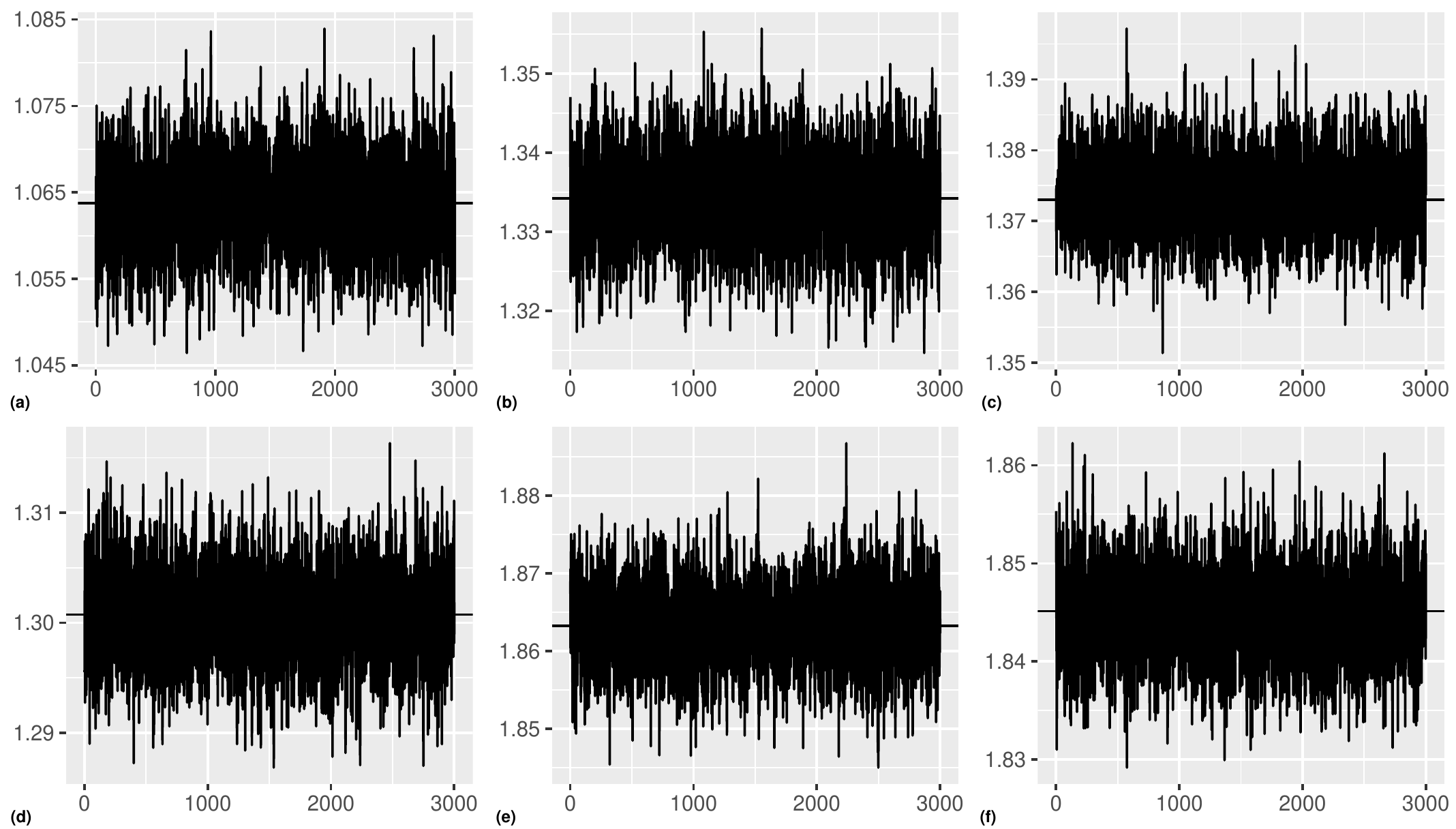}
    \caption{ Traceplots for a few randomly selected entries in $B$ when $(n,p,q)=(150,300,300)$ over the total 3000 Gibbs sampling iterations of the JRNS algorithm. The coordinates selected are (a) (162,37), (b)   (14,295), (c) (231,151), (d) (299,102), (e) (162,277), (f) (98,102). The black bold line represents the corresponding true value in $B_0$. These plots indicate sufficient mixing of the Markov chains.}
    \label{fig:traceplot}
\end{figure*}

\begin{figure*}
    \centering
    \includegraphics[width = 0.95\linewidth]{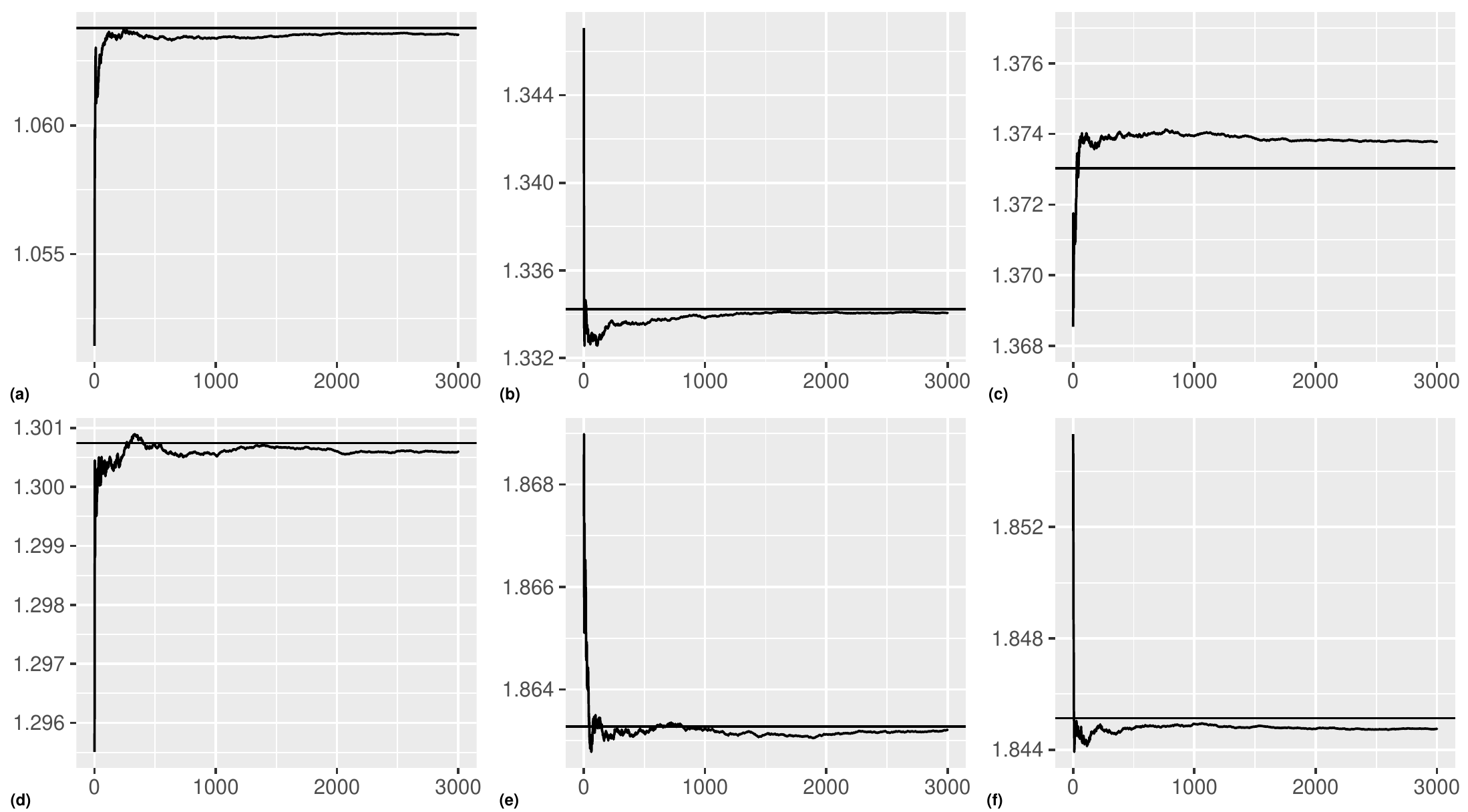}
    \caption{ Cumulative average plots for a few randomly selected entries in $B$ when $(n,p,q)=(150,300,300)$ over the total 3000 Gibbs sampling iterations of the JRNS algorithm. The coordinates selected are (a) (162,37), (b) (14,295), (c) (231,151), (d) (299,102), (e) (162,277), (f) (98,102). The black bold line represents the corresponding true value in $B_0$. These plots illustrate that the MCMC cumulative averages are converging to the respective posterior means, and these posterior means are very close to the corresponding true values in the data generating model. }
    \label{fig:avgplot}
\end{figure*}

{\bf Computational performance}: We start by evaluating the computational performance
of each of the methods. For each of the six combinations from Table \ref{npq:combination}, we
run each of these methods for the 200 replicated data sets and report the average wall clock 
time. All methods were run on HiPerGator, a 
high-performance computing cluster at the University of Florida with each node 
running an Intel Haswell E5-2698 processor. The average wall clock-times are reported
in Table \ref{wall:clock:time}. The results demonstrate the challenges with 
scalability for BANS due to the matrix inversion issues discussed in 
Section \ref{comparison}. For the four combinations of $(n,p,q)$ with at least $200$ 
predictors and responses, the required 3000 iterations of the MCMC algorithm for a single replication could 
not be completed in 4 days (mostly, less than 100 iterations were completed). 
\textcolor{black}{The HS-GHS method also encountered similar issues. For 
Settings 3, 4, 5 and 6 (where $p$ and $q$ are both at least 200), the HS-GHS 
could not complete the required number of iterations in 4 days. In fact, for 
all these settings less than 150 iterations were completed in 4 days. For all 200 replications in Setting 1
we get an error involving positive definiteness of an intermediate matrix calculation. Hence, results are 
only provided for Setting 2.} On the 
other hand, we see that both JRNS and Stepwise approaches scale well and can 
easily handle settings with large $p$ and $q$ values. As expected, the stepwise 
approach takes less computing time than the joint approach. While the DPE 
algorithm also has scalability issues with increasing $p$ and $q$, the DCPE algorithm
scales very well and is the fastest among all the five algorithms in most settings. 
In the $p=q=300$ setting, the stepwise algorithm is faster, and the Joint (JRNS) 
algorithm also roughly takes the same time as DCPE. However, as noted in 
\cite{deshpande2019simultaneous}, the faster speed of DCPE can come at the cost of 
sub-optimal performance (see also Table \ref{mccomegavalues} below). More 
importantly, the DCPE algorithm focuses on optimization of the posterior mode, and does not \textit{provide samples from the posterior distribution} for uncertainty quantification. On the other hand, output from the Joint (JRNS) and Stepwise methods 
can be used to construct posterior marginal inclusion probabilities and credible intervals. This is demonstrated below in 
Tables \ref{inclusion:probability:JRNS}, \ref{inclusion:probability:both} and 
\ref{tab:cov2} for the JRNS method.


\begin{table}[htbp]
    \centering
    \caption{Average wall-clock time (in seconds) over 200 replications for different methods. `TO' is short for `Timeout' which implies that the method could not complete the required number of iterations in 4 days. `PDE' refers to an error caused by intermediate matrices not being positive definite (PD).}
    \resizebox{0.95\textwidth}{!}{%
    \begin{tabular}{*{9}{c}}
        \toprule
          Density & \multicolumn{2}{c}{Cases}& Joint &  Stepwise & DPE & DCPE &  BANS & HS-GHS\\
        \cmidrule(lr){2-3} 
       & $n$ & $(p,q)$ & & &  &  &  \\ 
         \midrule
         \multirow{4}{*}{$(p/5,q/5)$}& 100 & $(30,60)$ & 6.86 & 6.72 & 1.46 & 0.13  & 3890.11 & PDE
        \\
         &100 & $(60,30)$ & 3.79 & 4.48 & 0.67 & 0.05 & 5922.96 & 5811.68 \\
         &150 & $(200,200)$ & 241.25 & 159.19 & 67295.27  & 62.41 & TO & TO \\
           &150 & $(300,300)$ & 833.80 & 280.29 & TO & 785.83 & TO & TO \\
           \midrule
           \multirow{ 2}{*}{$(p/30,q/5)$} & 100 & $(200,200)$& 233.93 & 97.57 & 126175.95  & 29.40  &  TO & TO \\
           &200 & $(200,200)$ & 294.20 & 138.50 & 4956.46 & 78.47 & TO & TO \\
        \bottomrule
    \end{tabular}
     }
     \label{wall:clock:time}
\end{table}

\noindent
{\bf Sparsity selection performance}: To assess the sparsity selection performance of
the methods developed, the following measures were evaluated after running each method on each of the 200 replicates, and comparing the estimated sparsity patterns with the true sparsity pattern: 
$$ \text{Sensitivity} = \frac{\text{TP}}{\text{TP} + \text{FN}}, \quad \text{Specificity} = \frac{\text{TN}}{\text{TN} + \text{FP}} $$ 
\begin{align*}
   &\text{\textbf{M}atthews \textbf{C}orrelation \textbf{C}oefficient (MCC)} \\&= \frac{\text{TP}\times\text{TN} - \text{FP}\times\text{FN}}{\sqrt{(\text{TP+FP})(\text{TP+FN})(\text{TN+FP})(\text{TN+FN})}} 
\end{align*}

\noindent
where TP, TN, FP, FN are the total number of true positive, true negative, false 
positive and false negative identifications made. The average MCC values for sparsity
estimation in $\Omega$ and $B$ for all methods across all combinations are provided 
in Tables \ref{mccomegavalues} and \ref{mccBvalues}, respectively. 

It can be seen that for sparsity selection in $\Omega$, the JRNS and Stepwise 
approaches significantly outperfrom DPE and DCPE in most settings. On closer 
examination of the outputs, one of the reasons appears to be that in some cases the DPE 
and DCPE estimate $\Omega$ by a diagonal matrix, thus failing to identify all 
non-zero off-diagonal elements. The performance of BANS in the settings where results
are available is quite sub-optimal compared to other approaches. For sparsity 
selection in $B$, the JRNS method gives the best performance.
Here the performance of BANS improves
compared to $\Omega$ sparsity selection, but remains sub-optimal compared to 
competing approaches.  The computationally faster approximations DCPE and 
Stepwise are in general less accurate than DPE and JRNS, respectively. 
We have results for HS-GHS only in Setting 2 (due to timeout 
issues discussed before) and its performance with respect to sparsity 
selection in that setting is comparable to other methods. 

\noindent
\textcolor{black}{{\bf Estimation performance:} To assess the estimation performance of the 
proposed methods, we compute the relative estimation error of the final estimates of $B$ and $\Omega$ which have been constructed using the majority voting approach described in Section \ref{MCMCoutput}. The relative estimation errors for $B$ are presented in Table \ref{tab:errorB} and those for $\Omega$ are presented in Table \ref{tab:erroromega}. As is seen from Table \ref{tab:errorB} the JRNS method performs very well in all the simulation settings, in fact it is the best performing method in terms of relative estimation error of $B$ in most of the settings. The performance of the Stepwise method is also quite competitive here. For estimation of $\Omega$, the transformation $h(\cdot)$ described at the end of 
Section \ref{MCMCoutput} was used to ensure positive definiteness of the iterates and the resulting 
estimate. We have included an additional column in Table \ref{tab:erroromega} for the refitted estimates of $\Omega$ as described in Section \ref{MCMCoutput}. It is evident from Table \ref{tab:erroromega} that the performance of the JRNS method (without \textit{refitting}) is competitive with the other methods. In Setting 2 where we were able to get HS-GHS output in a reasonable time, its performance is slightly better than JRNS, Stepwise, DPE and DCPE. The refitting based $\Omega$ estimates for JRNS exhibit better performance than any other method in most of the settings including Setting 2. Note that for refitting based estimates, the connection to the magnitudes of the entries in the $\Omega$ iterates of the JRNS MCMC output, and hence the corresponding credible intervals, is lost (for uncertainty quantification).}

\begin{table}[htbp]
    \centering
    \caption{MCC values for sparsity selection in $\Omega$ averaged over 200 replicates for different methods. `TO' is short for `Timeout' which implies that the method could not complete the required number of iterations in 4 days. `PDE' refers to an error caused by intermediate matrices not being positive definite (PD).}
    \resizebox{0.95\textwidth}{!}{%
    \begin{tabular}{*{9}{c}}
        \toprule
          Sparsity & \multicolumn{2}{c}{Cases}& Joint &  Stepwise &  DPE & DCPE & BANS & HS-GHS\\
        \cmidrule(lr){2-3} 
       & $n$ & $(p,q)$ & & &  &  & \\ 
         \midrule
         \multirow{4}{*}{$(p/5,q/5)$}& 100 & $(30,60)$ & 0.783 & 0.778 & 0.593 & 0.576  & 0.374 & PDE \\
         &100 & $(60,30)$ & 0.821 & 0.820 & 0.708 & 0.623  & 0.305 & 0.831 \\
         &150 & $(200,200)$ & 0.918 & 0.899 & 0.888 & 0.881 & 
         TO & TO  \\
           &150 & $(300,300)$ &  0.912 & 0.831  & 
         TO & 0.752 & TO & TO \\   \midrule
           \multirow{ 2}{*}{$(p/30,q/5)$} & 100 & $(200,200)$& 0.867 & 0.846  &  0.533 & 0.571  & 
         TO & TO \\
           &200 & $(200,200)$ &  0.969 & 0.968  & 0.959 & 0.964 & TO & TO \\
        \bottomrule
    \end{tabular}
     }%
     \label{mccomegavalues}
\end{table}

\begin{table}[htbp]
    \centering
    \caption{MCC values for sparsity selection in $B$ averaged over 200 replicates for different methods. `TO' is short for `Timeout' which implies that the method could not complete the required number of iterations in 4 days. `PDE' refers to an error caused by intermediate matrices not being positive definite (PD).}
    \resizebox{0.95\textwidth}{!}{%
    \begin{tabular}{*{9}{c}}
        \toprule
          Sparsity & \multicolumn{2}{c}{Cases}& Joint &  Stepwise &  DPE & DCPE & BANS & HS-GHS \\
        \cmidrule(lr){2-3} 
       & $n$ & $(p,q)$ & & &  &  & \\ 
         \midrule
         \multirow{4}{*}{$(p/5,q/5)$}& 100 & $(30,60)$ & 1.000  & 1.000  & 1.000  & 1.000  & 0.613 & PDE \\
         &100 & $(60,30)$ & 1.000 & 1.000 & 1.000 & 1.000  & 0.913 & 0.985\\
         &150 & $(200,200)$ & 1.000 & 0.997  & 1.000 & 1.000 & 
         TO & TO \\
           &150 & $(300,300)$ &  0.998  & 0.770  & TO & 0.938 & TO & TO \\   \midrule
           \multirow{ 2}{*}{$(p/30,q/5)$} & 100 & $(200,200)$ & 0.991 & 0.961 & 0.950 & 0.943 & 
         TO & TO \\
           &200 & $(200,200)$ & 1.000 & 0.956 & 0.997 & 0.924 & TO & TO\\
        \bottomrule
    \end{tabular}
     }%
     \label{mccBvalues}
\end{table}

\begin{table}[H]
    \centering
    \caption{Relative estimation error for $B$ averaged over 200 replicates for different methods. `TO' is short for `Timeout' which implies that the method could not complete the required number of iterations in 4 days. `PDE' refers to an error caused by intermediate matrices not being positive definite (PD). }
    \resizebox{0.95\textwidth}{!}{%
    \begin{tabular}{*{9}{c}}
        \toprule
          Sparsity & \multicolumn{2}{c}{Cases}& Joint &  Stepwise & DPE & DCPE &  BANS & HS-GHS \\
        \cmidrule(lr){2-3} 
       & $n$ & $(p,q)$ & & & &  & &\\ 
         \midrule
         \multirow{4}{*}{$(p/5,q/5)$}& 100 & $(30,60)$ & 0.0167 & 0.0169 & 0.0169 & 0.0169 & 0.9323 & PDE\\
         &100 & $(60,30)$ & 0.0269 & 0.0276 & 0.0275 & 0.0277 & 0.9317 & 0.0308 \\
         &150 & $(200,200)$& 0.0154 & 0.0172 & 0.0152 & 0.0153  & TO & TO \\
           &150 & $(300,300)$ & 0.0038 &  0.0434 & TO & 0.0140 & TO & TO\\
           \midrule
           \multirow{ 2}{*}{$(p/30,q/5)$} & 100 & $(200,200)$& 0.0043 & 0.0141 & 0.0063 & 0.0089  &  TO & TO \\
           &200 & $(200,200)$ &  0.00350 & 0.0116 & 0.0033 & 0.0109 & TO & TO \\
        \bottomrule
    \end{tabular}
     }
     \label{tab:errorB}
\end{table}

\begin{table}[H]
    \centering
    \caption{Relative estimation error for $\Omega$ averaged over 200 replicates for different methods. `TO' is short for `Timeout' which implies that the method could not complete the required number of iterations in 4 days. `PDE' refers to an error caused by intermediate matrices not being positive definite (PD). }
    \resizebox{0.95\textwidth}{!}{%
    \begin{tabular}{*{10}{c}}
        \toprule
          Sparsity & \multicolumn{2}{c}{Cases}& Joint & Joint-Refitted & Stepwise & DPE & DCPE &  BANS & HS-GHS \\
        \cmidrule(lr){2-3} 
       & $n$ & $(p,q)$ & & & &  & &\\ 
         \midrule
         \multirow{4}{*}{$(p/5,q/5)$}& 100 & $(30,60)$ & 0.2444 &  0.1794 & 0.2361 & 0.2300 & 0.2271 & 1.0247 & PDE\\
         &100 & $(60,30)$ & 0.2475 & 0.1874 & 0.2389 & 0.2424 & 0.2538 & 1.0125 & 0.2180 \\
         &150 & $(200,200)$ & 0.2197 & 0.1442 & 0.2092 & 0.1429 & 0.1444 & TO & TO  \\
           &150 & $(300,300)$ & 0.2181 & 0.1424 & 0.2306 & TO  & 0.1679 & TO & TO\\
           \midrule
           \multirow{ 2}{*}{$(p/30,q/5)$} & 100 & $(200,200)$& 0.2352 & 0.1854 & 0.2262 & 0.2423 & 0.2320 & TO  & TO  \\
           &200 & $(200,200)$ &  0.2209 & 0.1159 & 0.2029 & 0.1129 & 0.1100 & TO & TO  \\
        \bottomrule
    \end{tabular}
     }
     \label{tab:erroromega}
\end{table}

\noindent
{\bf Uncertainty quantification based on the generalized posterior distribution}: 
Next, we illustrate uncertainty quantification for JRNS using inclusion 
probabilities (see Section \ref{MCMCoutput}) and credible intervals 
obtained from the generalized posterior distribution. Note that the DPE 
and DCPE algorithms do not provide posterior samples for this purpose. 
We first consider the simulation setting where $(n,p,q) = $ $
(100,200,200)$, and randomly choose one out of the $200$ replicated data 
sets. Table \ref{inclusion:probability:JRNS} shows the estimated 
marginal inclusion probabilities for selected entries in $B$ and 
$\Omega$ using the JRNS algorithm. For the matrix, $B$, entries 
$(47, 4)$, $(30, 14)$, $(181, 43)$ are true positives: they are 
estimated as non-zero, since all have estimated inclusion probability $1$
(the corresponding values were chosen as non-zero for all $2000$ post 
burn-in iterations), and their true values in $B_0$ are non-zero. Entry $(78, 84)$ is a false positive: it is estimated as non-zero since 
the estimated inclusion probability is $0.632 > 0.5$ (the corresponding 
values were chosen as non-zero for $1262$ out of $2000$ post burn-in 
iterations), but its true value in $B_0$ is zero. Hence, the inclusion
probabilities indicate that the decision to classify $(78, 84)$ as 
non-zero is not supported with the same certainty by the posterior distribution as the 
decision to classify $(47, 4)$, $(30, 14)$, $(181, 43)$. Finally, entries $(67,5)$, $(12, 72)$ are true negatives: 
they are estimated as zero since the inclusion probabilities $0.005$ and $0.0915$
are less than $0.5$, and their true values in $B_0$ are zero. For the $\Omega$ matrix, 
entries $(109, 136), (30, 32),$ $ (9, 200)$ are true positives with estimated marginal 
inclusion probabilities $1$, entry $(101,122)$ is false positive with
estimated marginal inclusion probability $0.568$, and entries $(180, 2), (103, 
13)$ are true negatives both with estimated marginal inclusion probabilities $0$. The network plots indicating the associations between the predictors and the response variables and also among the response variables for this replication are presented in Figure \ref{fig:Sim.igraph}.
Note that the BANS algorithm can not provide a full set of iterations for the 
$(n,p,q) = (100,200,200)$ setting due to computational scalability issues. Hence, 
inclusion probabilities for the BANS algorithm are not included in 
Table \ref{inclusion:probability:JRNS}. 
\begin{figure*}
    \centering
    \includegraphics[width = 1\linewidth, trim={1cm 0 2cm 0}]{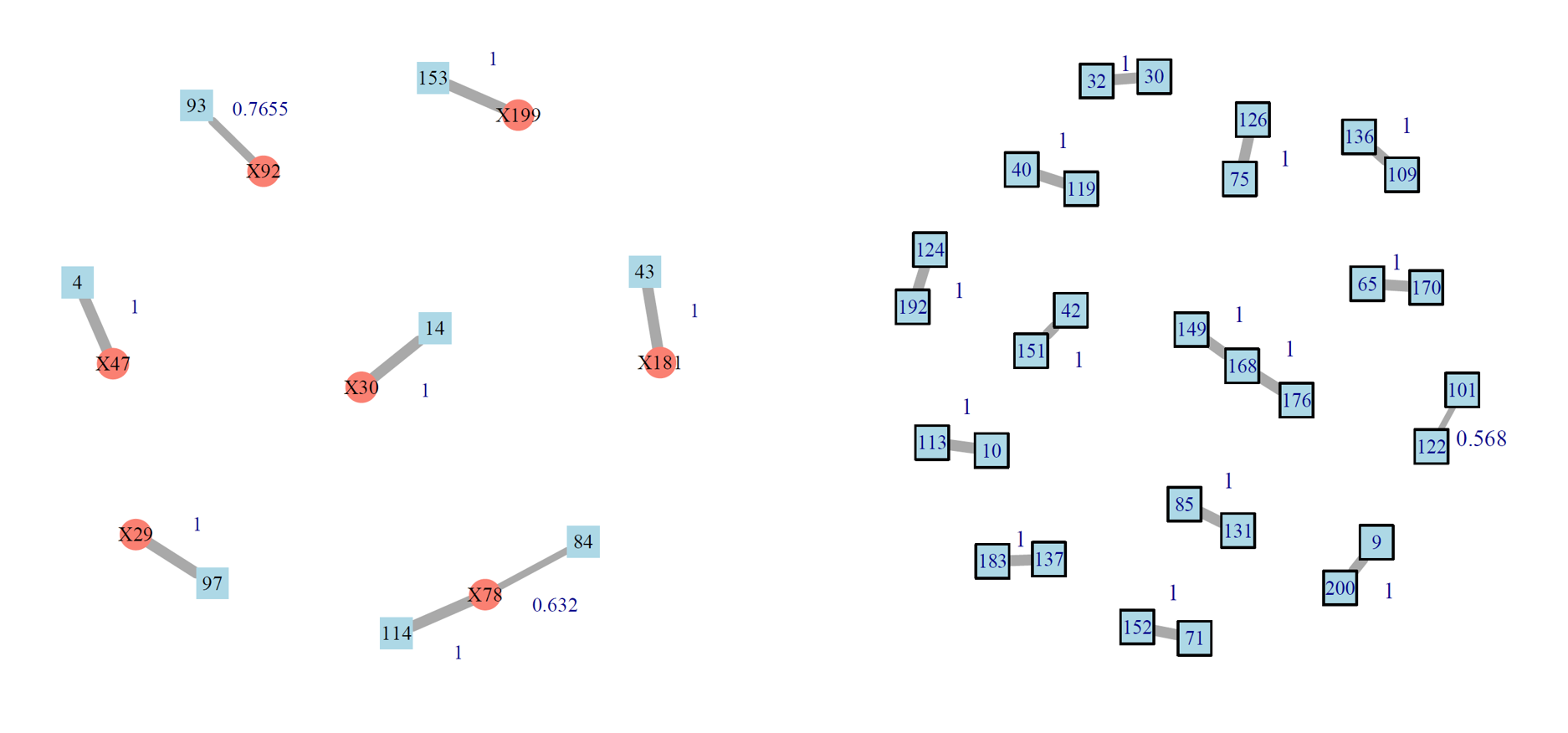}
    \caption{ Left : Network graph indicating associations between the predictors and the response variables for a randomly chosen replication when $(n,p,q) = (100,200,200)$ . Right : Network graph indicating associations among the response variables for the same replication. The red circles represent the different predictor variables and the blue squares represent the different response variables. The inclusion probabilities are mentioned on each edge. \textit{All the edge widths are proportional to the corresponding inclusion probabilities.}}
    \label{fig:Sim.igraph}
\end{figure*}

\begin{table}[htbp]
\caption{Illustration of classification based on marginal posterior inclusion probabilities using 
the joint (JRNS) method for selected entries of $B$ and $\Omega$ for the randomly 
chosen replication for $(n,p,q) = (100,200,200).$}
\resizebox{1.0\textwidth}{!}{%
\begin{tabular}{ccccc}
\hline
Matrix & Entry & JRNS classification & Inclusion probability & True classification\\
\hline
$B$ & $(47, 4)$ & Non-zero & $1$ & Non-zero\\
$B$ & $(30, 14)$ & Non-zero & $1$ & Non-zero\\
$B$ & $(181, 43)$ & Non-zero & $1$ & Non-zero\\
$B$ & $(78, 84)$ & Non-zero & $0.632$ & Zero\\
$B$ & $(67, 5)$ & Zero & $0.005$ & Zero\\
$B$ & $(12, 72)$ & Zero & $0.0915$ & Zero\\
$\Omega$ & $(109, 136)$ & Non-zero & $1$ & Non-zero\\
$\Omega$ & $(30, 32)$ & Non-zero & $1$ & Non-zero\\
$\Omega$ & $(9, 200)$ & Non-zero & $1$ & Non-zero\\
$\Omega$ & $(101, 122)$ & Non-zero & $0.568$ & Zero\\
$\Omega$ & $(180, 2)$ & Zero & $0$ & Zero\\
$\Omega$ & $(103, 13)$ & Zero & $0$ & Zero\\
\hline
\end{tabular}
}%
\label{inclusion:probability:JRNS}
\end{table}
For a comparative illustration with both JRNS and BANS, we consider the $(n,p,q) = 
(100,30,60)$ setting. Marginal inclusion probability estimates for selected entries 
of $B$ and $\Omega$ for both the joint (JRNS) method and the BANS approach (based on 
$2000$ post burn-in iterations) are provided in 
Table \ref{inclusion:probability:both}. Entries $(30, 5), (25, 6)$ in 
$B$, and $(9,47)$ in $\Omega$ are true positives for both methods 
(correctly identified as non-zero), but the inclusion probabilities for BANS are smaller
than those of JRNS for all three entries.
Entries $(21,48)$ in $B$ and $(8,31)$ 
in $\Omega$ are falsely identified as non-zero by BANS based on inclusion 
probabilities greater than $0.5$, but correctly identified as zero by JRNS. Entry $(10,40)$ in $\Omega$ is correctly identified as non-zero by JRNS with an inclusion probability of $1$ while BANS incorrectly identifies it as zero with a low inclusion probability. Other 
entries in the table are true negatives for both methods (correctly identified as 
zero), but JRNS has a lower inclusion probability for all as compared to BANS. While 
the entries reported in Table \ref{inclusion:probability:both} are just a small 
subset, we found that the pattern of JRNS having a higher/lower inclusion probability
than BANS when the true value is non-zero/zero is repeated for most entries of 
$B$ and $\Omega$. This is not surprising given the significantly better selection 
performance of JRNS in this setting (see Tables \ref{mccomegavalues} and \ref{mccBvalues}).

\begin{table}[htbp]
\caption{Illustration of classification based on marginal posterior inclusion probabilities using 
the joint (JRNS) method and the BANS algorithm in \cite{ha2020bayesian} for selected 
entries of $B$ and $\Omega$ for a randomly chosen replication for $(n,p,q) = 
(100,30,60)$.}
\resizebox{1.0\textwidth}{!}{%
\begin{tabular}{ccccccc}
\toprule
Matrix & Entry &\multicolumn{2}{c}{Classification}& \multicolumn{2}{c}{Inclusion probability} & True classification\\
 \cmidrule(lr){3-4} \cmidrule(lr){5-6}
  & & JRNS & BANS & JRNS & BANS & \\
\hline
$B$ & $(30, 5)$ & Non-zero & Non-zero & $1$ & 0.972 & Non-zero\\
$B$ & $(25, 6)$ & Non-zero & Non-zero & $1$ & 0.765 & Non-zero\\
$B$ & $(21,48)$ & zero & Non-zero & $0.04$ & 0.772 & zero\\
$B$ & $(6,24)$ & zero & zero & $0.0625$ & 0.253 & zero\\
$B$ & $(24,57)$ & zero & zero & $0.0295$ & 0.2505 & zero\\
$\Omega$ & $(30, 9)$ & zero & zero & $0$ & 0.0495 & zero\\
$\Omega$ & $(53,60)$ & zero & zero & $0$ & 0.293 & zero\\
$\Omega$ & $(10,40)$ & Non-zero & zero & $1$ & 0.112 & Non-zero\\
$\Omega$ & $(9,47)$  & Non-zero & Non-zero &  1 & 0.952 & Non-zero\\
$\Omega$ & $(8,31)$ & zero & Non-zero & 0 & 0.5605 & zero\\
$\Omega$ & $(16,6)$ &  zero & zero & $0$ & 0.492 & zero\\
$\Omega$ & $(21,59)$ &  zero & zero & $0$ & 0.4075 & zero\\
\hline
\end{tabular}
}%
\label{inclusion:probability:both}
\end{table}

\noindent

Next, we consider the second simulation setting, where $(n,p,q) = (100, 60, 
30)$, for a comparison of the empirical coverage probabilities of the $95\%$ 
posterior credible intervals by JRNS and HS-GHS. For each of 12 true non-zero 
entries of $B$, and each of the $200$ replications, we compute the $95\%$ 
posterior credible interval obtained by using the relevant sample quantiles of
the non-zero values in the $2000$ post burn-in iterations (for both the 
methods). The proportion of credible intervals (out of $200$) which contain
the true value gives us an estimate of the coverage probability for each 
method. Table \ref{tab:cov2} presents the average coverage over the 200 replicated datasets of true value in the 95$\%$ credible intervals for these 12 entries of $B_0$. For $B$ entries, both methods perform very well with respect to including the true value in their corresponding credible intervals. The average coverage probability for JRNS is 0.948, while that for HS-GHS is 0.945. We also provide Figure 4 for a visual comparison of the posterior credible intervals for the non-zero entries of the true $B$ in one of the replicates. The plot shows the credible intervals by both methods for all 12 non-zero values in the true $B$ for a single data set. In this particular data set, most of the credible intervals by JRNS are in general narrower than the corresponding credible intervals by HS-GHS, though the difference is relatively small. However, for co-ordinate (56,8) the HS-GHS credible interval fails to capture the true value, while the JRNS credible interval contains the true one. Similar patterns were observed in the credible intervals for other replicates.

\noindent
We also obtain credible intervals for the three true non-zero entries of $\Omega$ in this setting and computed the coverage probability in a similar process as mentioned above for $B$. For a randomly selected replicate, we plot the credible intervals of the true non-zero entries of $\Omega$ by JRNS and HS-GHS in Figure \ref{fig:compCredibleomega} and the average coverage probabilities are listed in Table \ref{tab:cov2}. The credible intervals by JRNS are narrower, however, the comparison of coverage performance is mixed. Due to the narrower credible intervals of JRNS, the true value can sometimes lie just outside the credible interval and hence the coverage probability gets negatively impacted by this. We also obtain credible intervals for the three true non-zero entries of $\Omega$ in this setting and computed the coverage probability in a similar process as mentioned above for $B$. For a randomly selected replicate we plot the credible intervals of true non-zero entries of $\Omega$ by JRNS and HS-GHS in Figure \ref{fig:compCredibleomega} and the average coverage probabilities are listed in Table \ref{tab:cov2}. The credible intervals by JRNS are narrower, however, the comparison of coverage performance is mixed. Due to the narrower credible intervals of JRNS the true value can sometimes lie just outside the credible interval and hence the coverage probability gets negatively impacted by this.

We also consider a simulation setting with $(n,p,q)$ $ = (150, 300, 300)$, and select a group of 
entries in $B_0$ which are non-zero. The coverage probabilities for the $95\%$ credible intervals,  as described before, are estimated by the proportion of credible intervals (out  of  200) containing  the  true  value. The average coverage probability over all true non-zero entries in $B$ and over all $200$ replications is 0.9422. 
Recall that the values for BANS and HS-GHS in this setting are not available 
due to computational scalability issues. Next, we present a comparison between the credible intervals obtained from JRNS and the frequentist confidence intervals obtained from the debiased lasso approach \cite{van2014asymptotically} in Figure \ref{fig:CI4} for 7 randomly selected coordinates of $B$. The plot indicates that for all of these coordinates the JRNS approach provides narrower and more precise intervals while containing the corresponding true values for most of these coordinates. The codes implementing the two proposed methods, namely the JRNS and the Stepwise methods are available at \url{https://github.com/srijata06/JRNS_Stepwise}.
\begin{table}[ht]
\centering
\caption{A comparison of the average Coverage of true value in $95\%$ posterior credible intervals for the non-zero values in $B_0$ and $\Omega_0$ for $(n,p,q) = (100,60,30)$.}
\resizebox{0.45\textwidth}{!}{%
\begin{tabular}{rrrr}
  \hline
 & coordinates & JRNS & HS-GHS \\ 
  \hline
$B$ & (30,3) & 0.940 & 0.945 \\ 
$B$ &  (55,3) & 0.945 & 0.930 \\ 
$B$ &  (20,5) & 0.955 & 0.925 \\ 
$B$ &  (29,7) & 0.970 & 0.970 \\ 
$B$ &  (45,8) & 0.930 & 0.950 \\ 
$B$ &  (56,8) & 0.940 & 0.950 \\ 
$B$ &  (17,14) & 0.945 & 0.935 \\ 
$B$ &  (55,20) & 0.940 & 0.945 \\ 
$B$ &  (9,22) & 0.945 & 0.925 \\ 
$B$ &  (53,23) & 0.970 & 0.980 \\ 
$B$ &  (60,27) & 0.955 & 0.945 \\ 
$B$ & (57,30) & 0.940 & 0.935 \\ 
   \hline
$\Omega$ & (2,8) & 0.607 & 0.865\\
$\Omega$ & (3,19) & 0.938 & 0.800\\
$\Omega$ & (18,26) & 0.778 &  0.810\\
\hline
\end{tabular}
}%
\label{tab:cov2}
\end{table}

\begin{figure}[htbp]
    \centering
    \includegraphics[width = 0.95\linewidth]{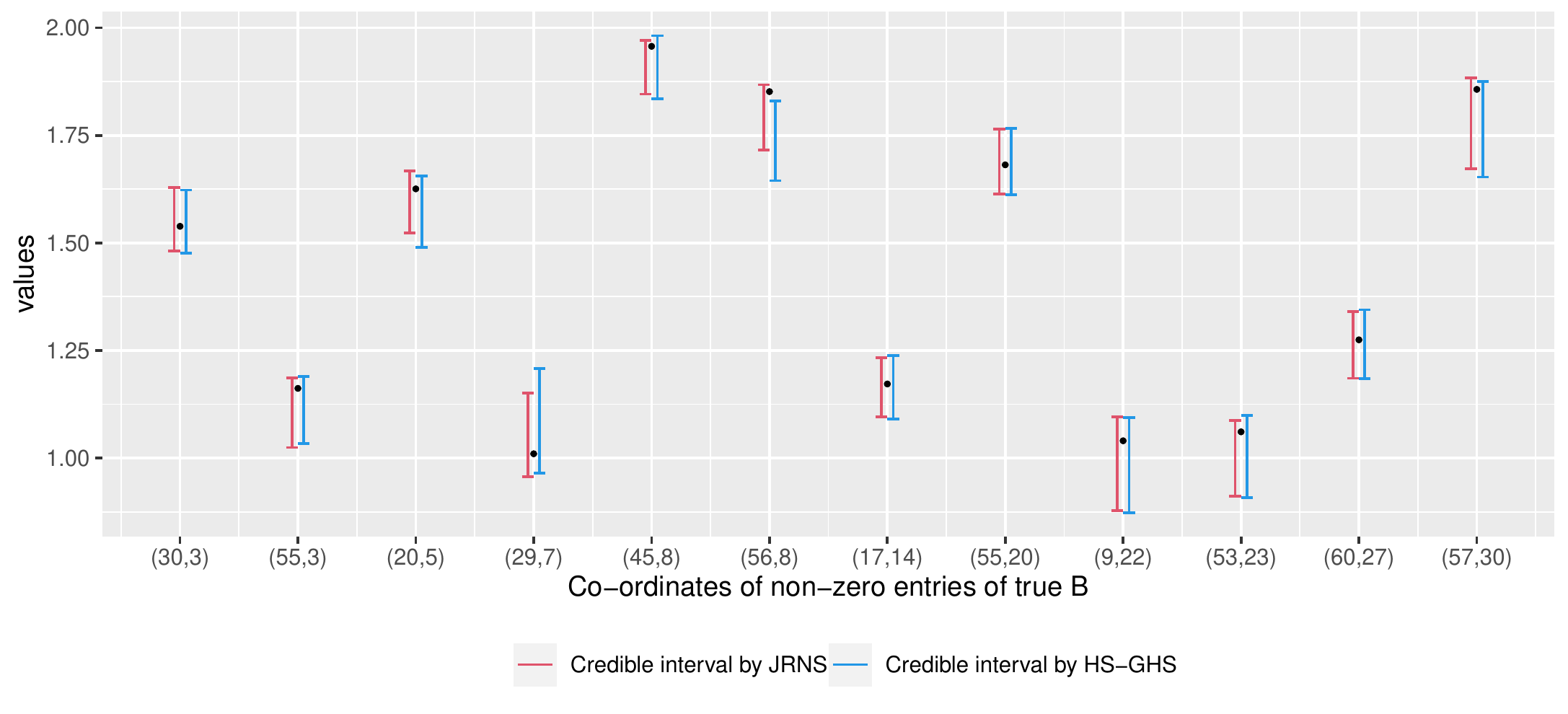}
    \caption{A comparison of the coverage of credible interval by JRNS and by HS-GHS for non-zero entries of $B_0$ when $(n,p,q) = (100,60,30)$. The true values are represented by the black circles.}
    \label{fig:compCredible}
\end{figure}

\begin{figure}[htbp]
    \centering
    \includegraphics[width = 0.99\linewidth]{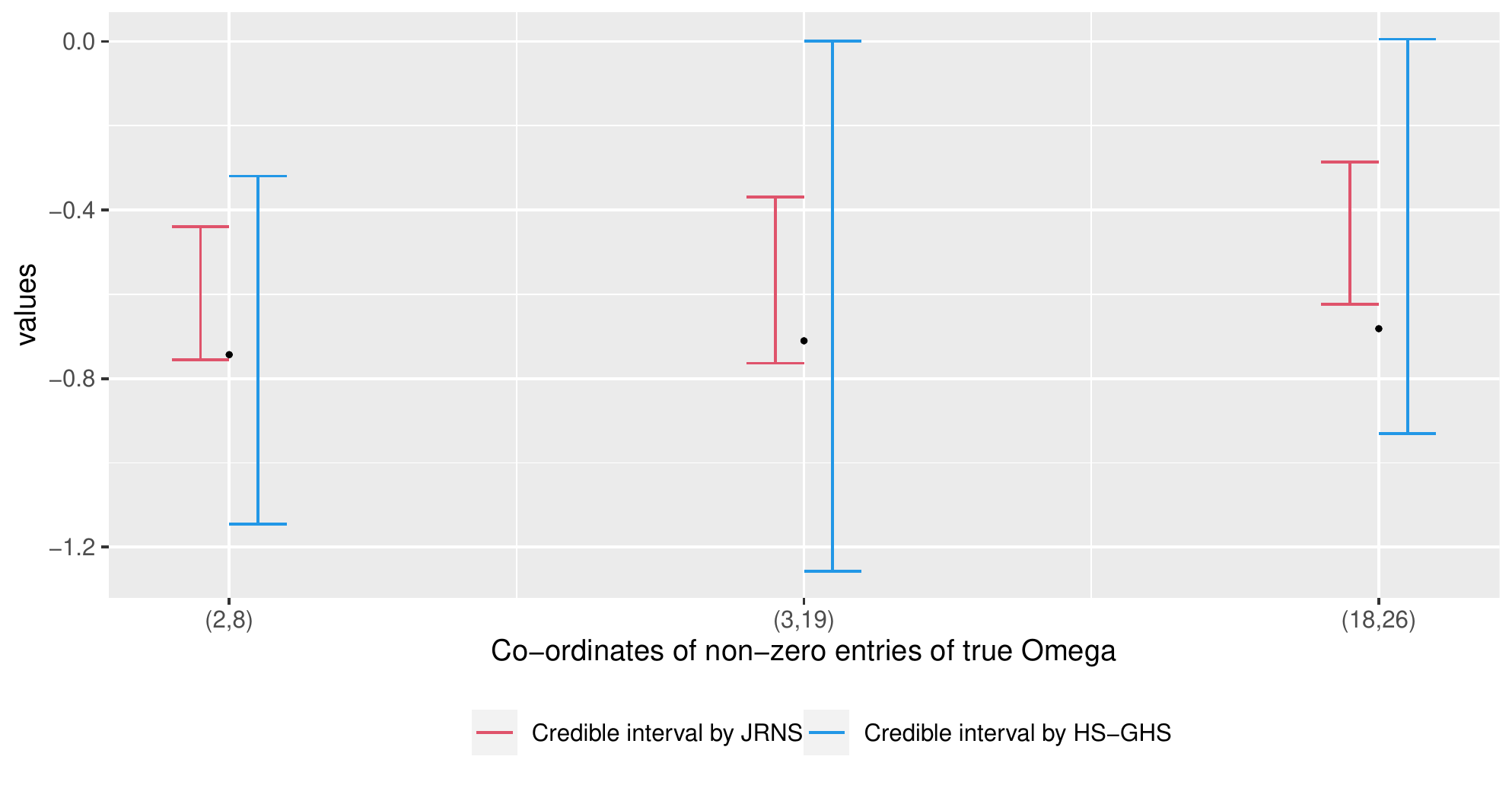}
    \caption{A comparison of the coverage of credible interval by JRNS and by HS-GHS for non-zero entries of $\Omega_0$ when $(n,p,q) = (100,60,30)$. The true values are represented by the black circles.}
    \label{fig:compCredibleomega}
\end{figure}

\begin{figure}[htbp]
    \centering
    \includegraphics[width = 0.95\textwidth]{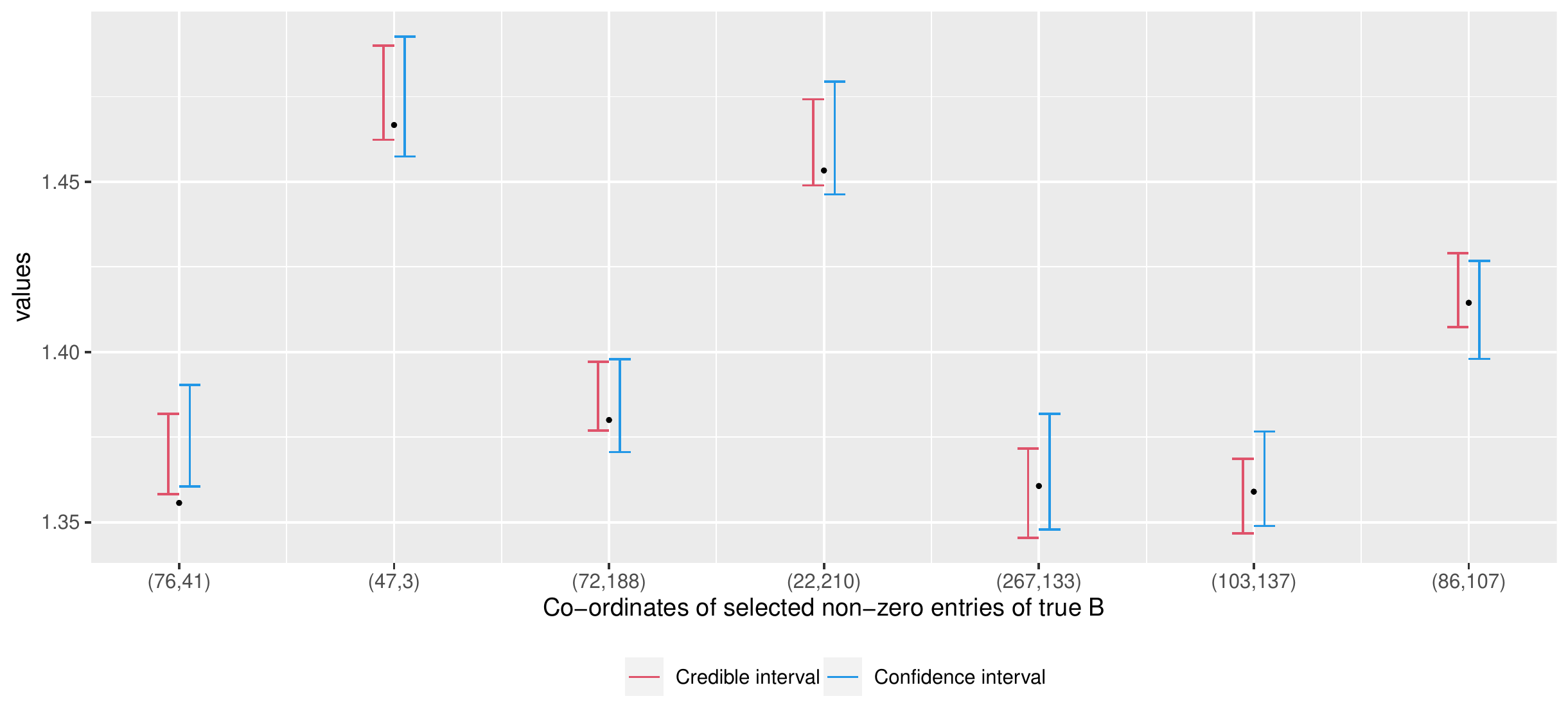}
    \caption{A comparison of the coverage of credible interval by JRNS and Confidence interval by Debiased Lasso for selected coordinates of $B_0$ when $(n,p,q) = (150,300,300)$. The true values are represented by the black circles. The plot shows that JRNS provides narrower intervals than the corresponding frequentist confidence intervals.}
    \label{fig:CI4}
\end{figure}

\section{Analysis of TCGA cancer data} \label{data:analysis}
\noindent

\begin{table}[htbp]
\centering
\caption{ $(n,p,q)$ values for the datasets on seven different cancer types. }
\begin{tabular}{|c|c|c|c|c|}
\hline
 & Cancer type  & $n$ & $p$ & $q$\\
\hline
1 & READ & 121 & 73 & 76\\
2 & LUAD & 356 & 73 & 76\\
3 & COAD & 338 & 73  & 76\\
4 & LUSC & 309 & 73 & 86\\
5 & OV & 227 & 73 & 77\\
6 & SKCM & 333 & 73 & 76\\
7 & UCEC & 393 & 73 & 77\\
\hline
\end{tabular}
\label{npq:ov}
\end{table}

To further illustrate the performance of the proposed methods, we present results from the analysis of cancer data from TCGA (The Cancer Genome Atlas). We consider data for 7 different TCGA tumor types: colon adenocarcinoma
(COAD), lung adenocarcinoma (LUAD),
lung squamous cell carcinoma (LUSC),  ovarian serous cystadenocarcinoma (OV), rectum adenocarcinoma (READ) skin cutaneous melanoma (SKCM) and uterine corpus endometrial carcinoma (UCEC). For each of these cancer types we have mRNA expression data and RPPA-based proteomic data. As mentioned in the introduction, since mRNA is translated to protein, it is natural to consider protein expression data to be the response variable and the mRNA expression data to be the predictors. The sample size $(n)$, number of predictors $(p)$ and the number of response variables $(q)$ for the 7 data sets corresponding to each cancer type are  given in Table \ref{npq:ov}.

\begin{table*}
\centering
    \caption{Inclusion probability of each edge for the LUAD Network graph indicating associations between mRNA and protein presented in Figure \ref{fig:LUADnetwork05}.}
    \resizebox{0.9\textwidth}{!}{%
    \begin{minipage}{.4\linewidth}
      
      \centering
        \begin{tabular}{cccc}
            \hline
 & Gene  & Protein  & Inclusion Probability  \\ 
  \hline

 1 & X1 &   1 & 0.98 \\ 
  2 & X2 &   2 & 1.00 \\ 
  3 & X64 &   2 & 0.88 \\ 
  4 & X60 &   3 & 0.78 \\ 
  5 & X4 &   4 & 1.00 \\ 
  6 & X32 &   4 & 0.53 \\ 
  7 & X5 &   5 & 1.00 \\ 
  8 & X16 &   5 & 0.56 \\ 
  9 & X6 &   6 & 1.00 \\ 
  10 & X7 &   7 & 1.00 \\ 
  11 & X20 &   7 & 0.98 \\ 
  12 & X8 &   8 & 1.00 \\ 
  13 & X56 &   9 & 0.96 \\ 
  14 & X60 &   9 & 1.00 \\ 
  15 & X11 &  10 & 1.00 \\ 
  16 & X11 &  11 & 1.00 \\ 
  17 & X20 &  11 & 0.76 \\ 
  18 & X12 &  12 & 1.00 \\ 
  19 & X13 &  13 & 1.00 \\ 
  20 & X14 &  13 & 0.91 \\ 
  21 & X16 &  13 & 0.63 \\ 
  22 & X70 &  14 & 0.79 \\ 
  23 & X15 &  16 & 1.00 \\ 
  24 & X40 &  16 & 0.72 \\ 
  25 & X10 &  17 & 0.94 \\ 
  26 & X16 &  17 & 1.00 \\ 
  27 & X17 &  18 & 1.00 \\
   \hline
        \end{tabular}
    \end{minipage}%
    \quad \quad
    \begin{minipage}{.4\linewidth}
      \centering
         \begin{tabular}{cccc}
            \hline
 & Gene  & Protein  & Inclusion Probability   \\ 
  \hline
 28 & X68 &  18 & 0.52 \\ 
  29 & X18 &  19 & 1.00 \\ 
  30 & X11 &  20 & 0.69 \\ 
  31 & X17 &  21 & 0.72 \\ 
  32 & X21 &  21 & 1.00 \\ 
  33 & X6 &  22 & 0.86 \\ 
  34 & X21 &  24 & 0.75 \\ 
  35 & X24 &  24 & 1.00 \\ 
  36 & X23 &  25 & 1.00 \\ 
  37 & X28 &  25 & 0.81 \\ 
  38 & X63 &  25 & 0.66 \\ 
  39 & X70 &  25 & 0.97 \\ 
  40 & X10 &  26 & 0.54 \\ 
  41 & X8 &  27 & 0.60 \\ 
  42 & X11 &  27 & 0.89 \\ 
  43 & X27 &  27 & 1.00 \\ 
  44 & X28 &  28 & 1.00 \\ 
  45 & X58 &  28 & 0.98 \\ 
  46 & X29 &  29 & 1.00 \\ 
  47 & X64 &  30 & 0.89 \\ 
  48 & X17 &  31 & 1.00 \\ 
  49 & X31 &  31 & 1.00 \\ 
  50 & X32 &  31 & 0.81 \\ 
  51 & X42 &  32 & 0.99 \\ 
  52 & X34 &  34 & 1.00 \\ 
  53 & X70 &  34 & 0.83 \\ 
  54 & X33 &  35 & 0.99 \\
  \hline
  \end{tabular}
    \end{minipage}%
    \quad \quad
    \begin{minipage}{.4\linewidth}
      \centering
         \begin{tabular}{cccc}
            \hline
 & Gene  & Protein  & Inclusion Probability   \\ 
  \hline
  55 & X35 &  35 & 1.00 \\ 
  56 & X36 &  37 & 0.51 \\ 
  57 & X31 &  38 & 0.58 \\ 
  58 & X37 &  38 & 1.00 \\ 
  59 & X38 &  39 & 1.00 \\ 
  60 & X25 &  40 & 0.65 \\ 
  61 & X39 &  40 & 1.00 \\ 
  62 & X8 &  41 & 0.53 \\ 
  63 & X47 &  41 & 1.00 \\ 
  64 & X24 &  43 & 0.53 \\ 
  65 & X43 &  43 & 1.00 \\ 
  66 & X44 &  44 & 0.92 \\ 
  67 & X46 &  44 & 0.94 \\ 
  68 & X11 &  47 & 0.96 \\ 
  69 & X47 &  47 & 1.00 \\ 
  70 & X36 &  49 & 0.91 \\ 
  71 & X50 &  50 & 0.71 \\ 
  72 & X6 &  52 & 0.81 \\ 
  73 & X61 &  53 & 1.00 \\ 
  74 & X68 &  53 & 0.88 \\ 
  75 & X72 &  56 & 0.96 \\ 
  76 & X58 &  57 & 1.00 \\ 
  77 & X60 &  57 & 1.00 \\ 
  78 & X15 &  58 & 0.81 \\ 
  79 & X58 &  58 & 1.00 \\ 
  80 & X28 &  59 & 0.78 \\ 
  81 & X58 &  59 & 1.00 \\ 
  \hline
  \end{tabular}
    \end{minipage}%
    \quad \quad
    \begin{minipage}{.4\linewidth}
      \centering
         \begin{tabular}{cccc}
            \hline
 & Gene  & Protein  & Inclusion Probability  \\ 
  \hline
  82 & X59 &  59 & 0.92 \\ 
  83 & X61 &  61 & 0.51 \\ 
  84 & X6 &  62 & 0.64 \\ 
  85 & X29 &  63 & 0.98 \\ 
  86 & X62 &  63 & 1.00 \\ 
  87 & X70 &  63 & 1.00 \\ 
  88 & X63 &  64 & 1.00 \\ 
  89 & X33 &  65 & 0.90 \\ 
  90 & X63 &  65 & 1.00 \\ 
  91 & X63 &  66 & 1.00 \\ 
  92 & X17 &  68 & 0.83 \\ 
  93 & X46 &  68 & 0.86 \\ 
  94 & X24 &  69 & 0.78 \\ 
  95 & X11 &  71 & 0.52 \\ 
  96 & X57 &  71 & 0.64 \\ 
  97 & X67 &  71 & 0.94 \\ 
  98 & X59 &  72 & 0.51 \\ 
  99 & X69 &  73 & 1.00 \\ 
  100 & X3 &  74 & 0.53 \\ 
  101 & X70 &  74 & 1.00 \\ 
  102 & X71 &  74 & 0.99 \\ 
  103 & X14 &  75 & 0.85 \\ 
  104 & X72 &  75 & 1.00 \\ 
  105 & X73 &  76 & 1.00 \\
  & & &\\
  & & & \\
  & & & \\
  \hline
        \end{tabular}
        \end{minipage} 
    }
    \label{tab:B:incprob}
\end{table*}

\begin{figure*}
    \centering
    \includegraphics[width = \linewidth]{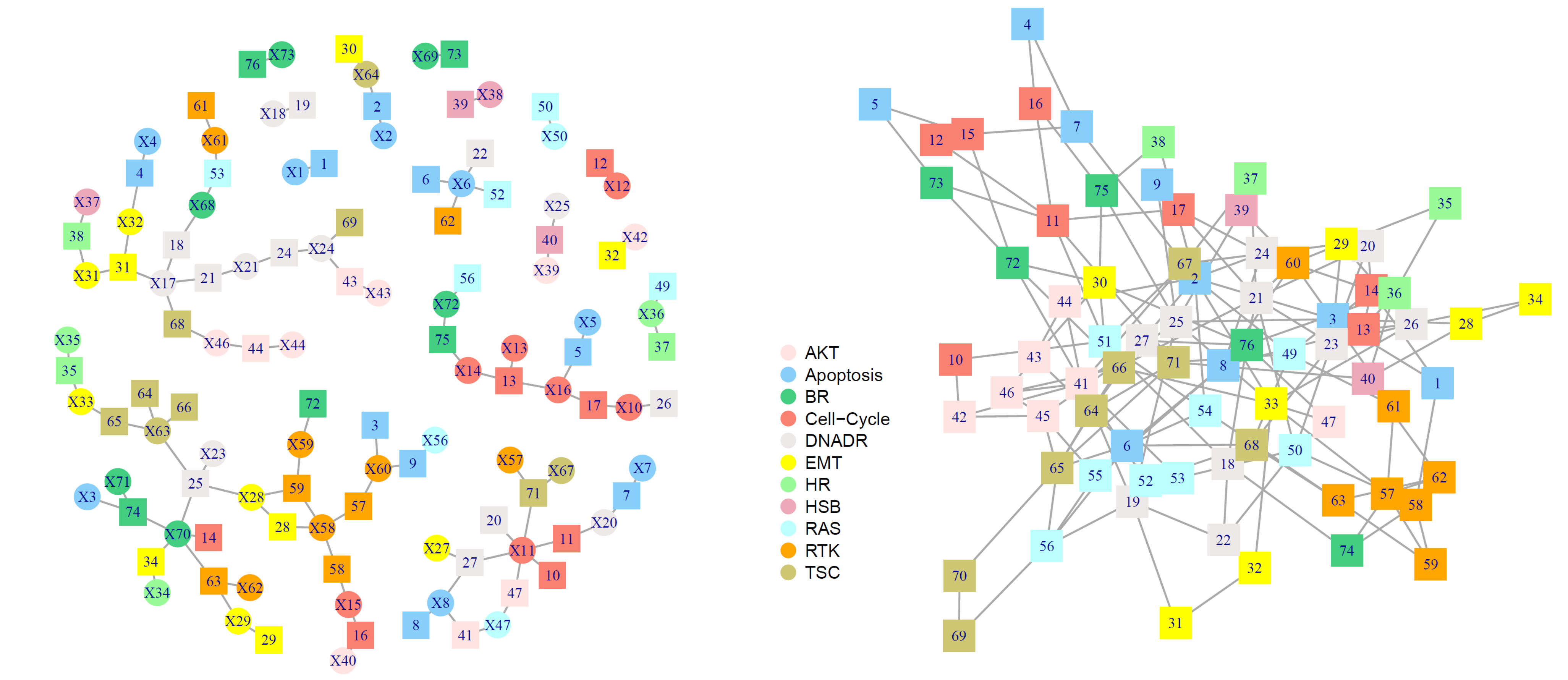}
    \caption{LUAD networks with 0.5 as the inclusion probability cutoff. The circles represent genes and the squares represent proteins. The different colors represent the different pathways listed in Table \ref{tab:pathway}. Left : Network graph indicating associations between mRNA and protein. The inclusion probabilities are listed in Table \ref{tab:B:incprob}. Right : Network graph indicating associations among proteins. The inclusion probabilities are listed in Table \ref{LUAD:Omega:incprob} in the Supplementary document. }
    \label{fig:LUADnetwork05}
\end{figure*}

We carry out a separate data analysis for each of the seven cancer types.  
For JRNS and the Stepwise estimation methods the Gibbs samplers were run 
for 1000 iterations for burn-in followed by additional 2000 iterations for 
calculating the regression coefficients and the precision matrices. As  
noted earlier, DPE and DCPE do not provide uncertainty quantification. While
BANS does provide uncertainty quantification, computationally it takes a 
prohibitively long time with the above $(n,p,q)$ values. In 
\cite{ha2020bayesian}, this dataset was analyzed but the dataset for each 
cancer type was further broken based on pathway information, which 
significantly reduces the dimensionality of the problem.

We present the estimated network plots obtained using JRNS depicting (1) 
the associations between mRNA and proteins and (2) that among the proteins 
for the LUAD cancer type in Figure \ref{fig:LUADnetwork05}. The indices/serial numbers for 
genes and proteins for the LUAD dataset are given in Table \ref{tab:geneIndices} in the Appendix. Figure \ref{fig:LUADnetwork05} depicts the sparsity estimates of $B$ and $\Omega$ for the LUAD type cancer based on a 0.5 cutoff for the inclusion probabilities. The genes and proteins are mapped to their respective functional pathways to aid interpretation. The list of pathways and the corresponding genes for each of the pathways is listed in Table \ref{tab:pathway} in the Appendix. For the associations encoded in matrix $B$ (see left panel of Figure \ref{fig:LUADnetwork05}), we see genes and proteins from the following pathways to be involved : RTK, EMT, Cell Cycle and Apoptosis. The results are broadly consistent with 
known functional mechanisms for the disease including stimulation of RTK to activate downstream signaling that encodes EMT's inducing transcription factors \cite{gonzalez2014signaling}. The epithelial mesenchymal transition (EMT) is an essential mechanism that contributes to the progression in cancer and involves apoptotic responses and the cell cycle, all elements captured in some of the connections depicted in the Figure. Further, we see similar connections at the protein expression network in the right panel of Figure \ref{fig:LUADnetwork05}. One can also see that there are strong connections within members of the same pathway, as well as cross-talk with members of other pathways. We particularly focus on the LUAD network plots here as it shows some very interesting biological connections. The network plots for the other cancer types are included in the Supplementary file. 

Next, we present Figure \ref{fig:LUAD_CI} which depicts the coverage of the credible intervals by JRNS and the confidence intervals by Debiased Lasso for six randomly selected entries of $B$ for the lung adenocarcinoma (LUAD) cancer data. We randomly selected 6 gene-protein coordinates in $B$. Here the credible intervals are not only much shorter than the corresponding confidence intervals, but in most cases are subsets of their corresponding confidence intervals. 

\begin{figure*}
    \centering
    \includegraphics[width = 0.95\textwidth]{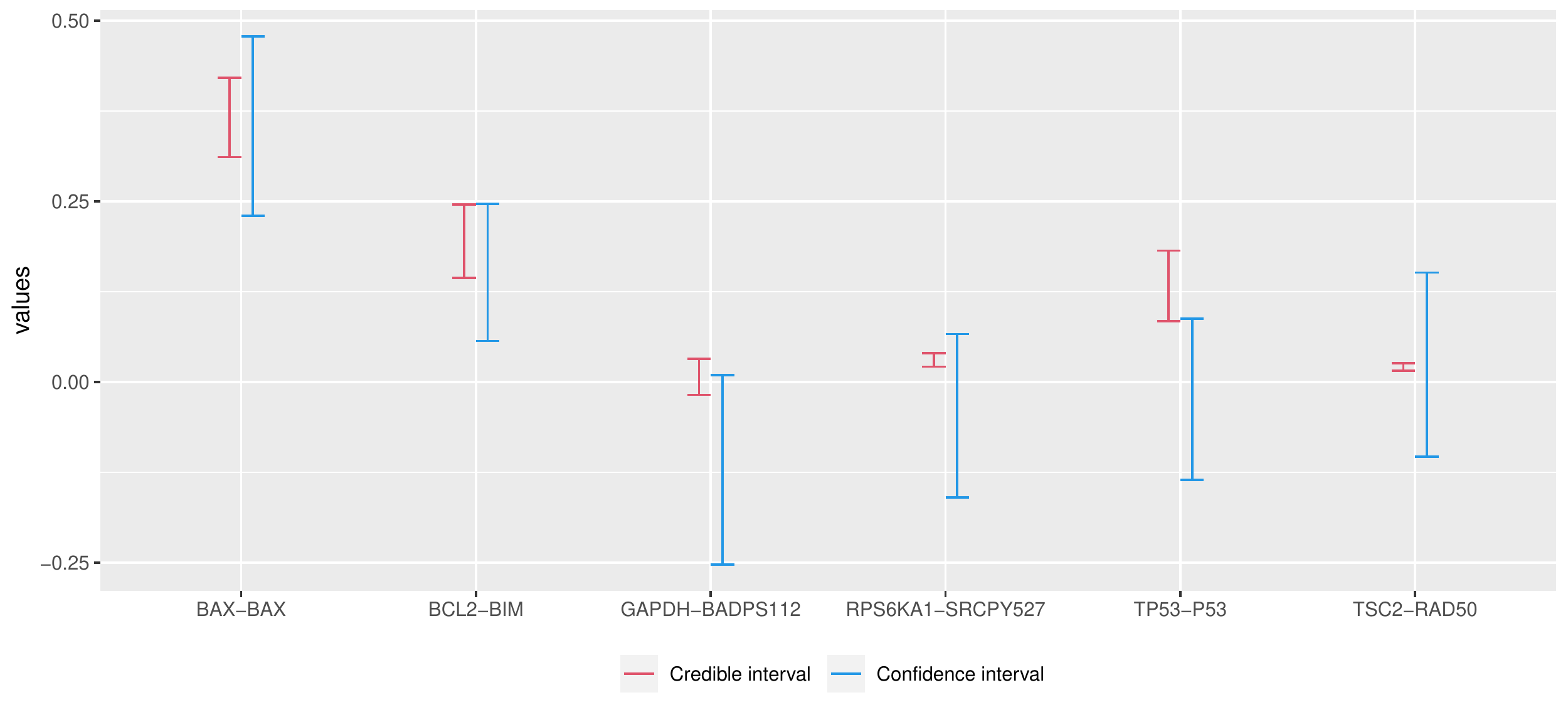}
    \caption{A comparison of the coverage of credible interval by JRNS and Confidence interval by debiased lasso for LUAD lung cancer for a few coordinates of $B$. The x-labels are in the form "gene-protein".}
    \label{fig:LUAD_CI}
\end{figure*}

\begin{table}[H]
    \centering
    \caption{Relative prediction errors using 5-fold cross-validation and  normalized with respect to the vanilla linear regression approach  for JRNS, Stepwise method, DPE and DCPE.}
    \resizebox{0.45\textwidth}{!}{%
    \begin{tabular}{ccccc}
    \toprule
    & JRNS & Stepwise & DPE & DCPE\\
    \midrule
        READ & 0.509 & 0.513 & 0.516 & 0.513\\ 
        LUAD & 0.878 & 0.869 & 0.874 & 0.876\\
        LUSC & 0.838 & 0.839 & 0.834 & 0.836\\
        COAD & 0.887 & 0.881 & 0.886 & 0.884 \\
        OV   & 0.778 &  0.779 & 0.774 & 0.779 \\
        SKCM & 0.859 & 0.860 & 0.859 & 0.860\\
        UCEC & 0.919 & 0.912 & 0.917 & 0.919 \\
       
        \bottomrule
    \end{tabular}
    }%
    \label{tab:pred.error}
\end{table}

       

We also compare the prediction accuracy of the proposed methods with DPE 
and DCPE. Default settings were chosen for these methods as mentioned in 
Section \ref{Sim:results}. Results for HSGHS could not be obtained since we get the same error involving positive definiteness of an intermediate matrix calculation here as well. For prediction evaluation purposes, we perform a 5-fold cross validation in which we randomly divide the data set for each cancer type into 5 parts. The model for each of the listed approaches in Table \ref{tab:pred.error} is built 5 times, each time using one of the parts as the test set and the rest as the training set. The average prediction error is then normalized with respect to that corresponding to the vanilla regression method ($q$ separate response-specific linear models). A relative prediction error less than $1$ implies that the corresponding method has better prediction performance than the vanilla regression approach. All the relative prediction errors are listed in Table \ref{tab:pred.error}. The results show that the proposed methods have a very similar and competitive predictive performance compared to DPE and DCPE, while additionally providing uncertainty quantification by sampling from the posterior distribution.

\section{Discussion}

\noindent
In this paper, we use a biconvex generalized likelihood function along with sparsity
inducing spike-and-slab prior distributions for joint sparsity selection and estimation 
of the regression coefficient and error covariance matrices in multivariate
linear regression models. The proposed JRNS and Stepwise algorithms are 
significantly faster than related (generalized) Bayesian methods both due 
to the simpler algebraic structure of the generalized likelihood used and 
also due to more efficient MCMC implementation (as discussed in Section 
\ref{comparison}), provide samples from the generalized posterior 
distribution for uncertainty quantification, and perform competitively in 
terms of selection/estimation performance in simulated data settings and 
in the TCGA cancer data application. 

Intuitively, the joint JRNS approach should provide better accuracy than 
the Stepwise approach as it utilizes the cross-correlations among the 
errors in estimation of $B$ (while the Stepwise approach ignores them). 
This is borne out in the simulations, especially for Setting 4 with 
$p=q=300$. However, theoretical analysis of the joint generalized posterior
of $(B,\Omega)$ for the JRNS approach is much more complicated than the 
corresponding analysis for the Stepwise approach. One possible direction of
future enquiry is to establish high-dimensional posterior consistency 
results for the joint JRNS approach (analogous to those in 
Theorem \ref{jp:theorem1} for the Stepwise approach). Another possible 
future direction would be to explore the use of the biconvex generalized 
likelihood functions along with continuous shrinkage prior distributions, such as the Horseshoe one and
study the computational and theoretical properties of such an approach.

\section*{Appendix: Pathways for TCGA cancer data}

Table \ref{tab:geneIndices} listes the indices of all the genes and proteins in the LUAD cancer data and Table \ref{tab:pathway} lists all the pathways that have been considered in the analysis of the TCGA cancer data in Section \ref{data:analysis} and their gene members.

    \begin{table*}[htbp]
\centering
    \caption{Indices of genes and proteins for LUAD lung cancer data. The first column lists the components of the dataset mRNA(genes) and the second column lists the components of the dataset RPPA(proteins).}
    \resizebox{0.8\textwidth}{!}{%
    \begin{minipage}{.5\linewidth}
      
      \centering
        \begin{tabular}{rll}
            \hline
 & Gene & Protein \\ 
  \hline
1 & BAK1 & BAK \\ 
  2 & BAX & BAX \\ 
  3 & BID & BID \\ 
  4 & BCL2L11 & BIM \\ 
  5 & CASP7 & CASPASE7CLEAVEDD198 \\ 
  6 & BAD & BADPS112 \\ 
  7 & BCL2 & BCL2 \\ 
  8 & BCL2L1 & BCLXL \\ 
  9 & BIRC2 & CIAP \\ 
  10 & CDK1 & CDK1 \\ 
  11 & CCNB1 & CYCLINB1 \\ 
  12 & CCNE1 & CYCLINE1 \\ 
  13 & CCNE2 & CYCLINE2 \\ 
  14 & CDKN1B & P27PT157 \\ 
  15 & PCNA & P27PT198 \\ 
  16 & FOXM1 & PCNA \\ 
  17 & TP53BP1 & FOXM1 \\ 
  18 & ATM & 53BP1 \\ 
  19 & BRCA2 & ATM \\ 
  20 & CHEK1 & CHK1PS345 \\ 
  21 & CHEK2 & CHK2PT68 \\ 
  22 & XRCC5 & KU80 \\ 
  23 & MRE11A & MRE11 \\ 
  24 & TP53 & P53 \\ 
  25 & RAD50 & RAD50 \\ 
  26 & RAD51 & RAD51 \\ 
  27 & XRCC1 & XRCC1 \\ 
  28 & FN1 & FIBRONECTIN \\ 
  29 & CDH2 & NCADHERIN \\ 
  30 & COL6A1 & COLLAGENVI \\ 
  31 & CLDN7 & CLAUDIN7 \\ 
  32 & CDH1 & ECADHERIN \\ 
  33 & CTNNB1 & BETACATENIN \\ 
  34 & SERPINE1 & PAI1 \\ 
  35 & ESR1 & ERALPHA \\ 
  36 & PGR & ERALPHAPS118 \\ 
  37 & AR & PR \\ 
  38 & INPP4B & AR \\
   \hline
        \end{tabular}
    \end{minipage}%
    \quad \quad
    \begin{minipage}{.5\linewidth}
      \centering
         \begin{tabular}{rll}
            \hline
 & Gene & Protein \\ 
  \hline
 39 & GATA3 & INPP4B \\ 
  40 & AKT1 & GATA3 \\ 
  41 & AKT2 & AKTPS473 \\ 
  42 & AKT3 & AKTPT308 \\ 
  43 & GSK3A & GSK3ALPHABETAPS21S9 \\ 
  44 & GSK3B & GSK3PS9 \\ 
  45 & AKT1S1 & PRAS40PT246 \\ 
  46 & TSC2 & TUBERINPT1462 \\ 
  47 & PTEN & PTEN \\ 
  48 & ARAF & ARAFPS299 \\ 
  49 & JUN & CJUNPS73 \\ 
  50 & RAF1 & CRAFPS338 \\ 
  51 & MAPK8 & JNKPT183Y185 \\ 
  52 & MAPK1 & MAPKPT202Y204 \\ 
  53 & MAPK3 & MEK1PS217S221 \\ 
  54 & MAP2K1 & P38PT180Y182 \\ 
  55 & MAPK14 & P90RSKPT359S363 \\ 
  56 & RPS6KA1 & YB1PS102 \\ 
  57 & YBX1 & EGFRPY1068 \\ 
  58 & EGFR & EGFRPY1173 \\ 
  59 & ERBB2 & HER2PY1248 \\ 
  60 & ERBB3 & HER3PY1298 \\ 
  61 & SHC1 & SHCPY317 \\ 
  62 & SRC & SRCPY416 \\ 
  63 & EIF4EBP1 & SRCPY527 \\ 
  64 & RPS6KB1 & 4EBP1PS65 \\ 
  65 & MTOR & 4EBP1PT37T46 \\ 
  66 & RPS6 & 4EBP1PT70 \\ 
  67 & RB1 & P70S6KPT389 \\ 
  68 & CAV1 & MTORPS2448 \\ 
  69 & MYH11 & S6PS235S236 \\ 
  70 & RAB11A & S6PS240S244 \\ 
  71 & RAB11B & RBPS807S811 \\ 
  72 & GAPDH & CAVEOLIN1 \\ 
  73 & RBM15 & MYH11 \\ 
  74 &  & RAB11 \\ 
  75 &  & GAPDH \\ 
  76 &  & RBM15 \\ 
  \hline
        \end{tabular}
        \end{minipage} 
    }
    \label{tab:geneIndices}
\end{table*}

\begin{table*}[htbp]
\centering
\caption{ Pathways and gene membership }
\resizebox{1.0\textwidth}{!}{%
\begin{tabular}{|c|c|c|}
\hline
    & Pathway & Genes \\
\hline
1 & AKT/PI3K & AKT1, AKT2, AKT3, GSK3A, GSK3B, CDKN1B, AKT1S1, TSC2,
INPP4B, PTEN\\
2 & Apoptosis &  BAK1, BAX, BID, BCL2L11, CASP7, BAD, BCL2, BCL2L1, BIRC2 \\
3 & Breast Reactive & CAV1, MYH11, RAB11A, RAB11B, CTNNB1, GAPDH, RBM15 \\
4 & Cell Cycle & CDK1, CCNB1, CCNE1, CCNE2, CDKN1B, PCNA, FOXM1\\
5 & DNA damage response & TP53BP1, ATM, BRCA2, CHEK1, CHEK2, XRCC5, MRE11A, TP53,RAD50, RAD51, XRCC1  \\
6 & EMT & FN1, CDH2, COL6A1, CLDN7, CDH1, CTNNB1, SERPINE1  \\
7 & Hormone Receptor & ES1, EGR, PR \\
8 & Hormone Signaling (Breast) & INPP4B, GATA3, BCL2\\
9 & RAS & ARAF, JUN, RAF1, MAPK8, MAPK1, MAPK3, MAP2K1, MAPK14,
RPS6KA1, YBX1\\
10 & RTK & EGFR, ERBB2, ERBB3, SHC1, SRC\\
11 & TSC & EIF4EBP1, RPS6KB1, MTOR, RPS6, RB1\\
\hline
\end{tabular}
}%
\label{tab:pathway}
\end{table*}


%
%


\section{Details of the proof of Theorem 1(a), 1(b)}
\label{suppsec:theorem1a}
\subsection{Assumptions required for Theorem 1(a), 1(b)} \label{sec:assumptions}

\noindent
We recall that $\gamma_{jk} = 1_{\{b_{jk} \neq 0\}}$ $ (j = 1, \ldots, p, \; k = 1, \ldots, q),$ and $\gamma = ((\gamma_{jk}))$ represents the sparsity indicator of $B$. Also, $\gamma_t$ denotes the sparsity indicator of the true parameter $B_0$. Let $\gamma_{t_k}(\gamma_k)$ denote the $k$-th column of $\gamma_t(\gamma) (k = 1,\ldots,q)$ and $\nu_{t_k}(\nu_k)$ be the number of non-zero entries in $\gamma_{t_k} (\gamma_k)$. We will consider only the models with sparsity indicator $\gamma$ for which $\nu_k \leq M_n$ for all $k$ where $M_n$ is a realistic model cut-off size (See Assumption \ref{assumpA:q1defn}). Below, for a matrix $A$ we will use the operator norm $\norm{A}_2 = \sqrt{eig_{max}(A'A)}$, the Frobenius norm $\norm{A}_F = \sqrt{\sum_{i}\sum_{j} a_{ij}^2}$ and the norms $\norm{A}_1 = \max \limits_{j}\sum_{i}|a_{ij}|$ and $\norm{A}_{max} = \max\limits_{(i,j)} |a_{ij}|$. Let $\delta > 0.02$ be an arbitrarily fixed constant. Also, we define
$$k_n  = \max\limits_{1\leq k\leq q} \nu_{t_{k}} + 1 \quad \text{and} \quad
s^2_n = \inf\limits_{j,k : B_{0n}(j,k)\neq0}B_{0n}^2(j,k).$$
where $B_{0n}(j,k)$ is the $(j,k)$-th element of $B_{0} = B_{0n}$.

\begin{assumpA}\label{assumpA:eigen}
There exists  $0 < \lambda_1 < \lambda_2 < \infty$ and $0 < \sigma_{min}^2 < \sigma_{max}^2 < \infty,$ not depending on $n$ such that the eigenvalues of all submatrices of $R_0$ are bounded below and above by $\lambda_1$ and $\lambda_2$ respectively, and $$ \sigma_{min}^2 \leq \sigma_{k0,n}^2\leq \sigma_{max}^2 \quad  \text{ for all } n,k.$$ where $\sigma^2_{k0,n} = \sigma^2_{k0}$ is the $k$-th diagonal element of $\Omega^{-1}_0$.

\end{assumpA}

\begin{assumpA}\label{assumpA:q1defn}
$q_1 = (pq)^{-(1+\delta)\kappa}\text{ where } \delta >0,\; \kappa > \frac{2(8\sigma_{max}^2(1+\frac{\delta^{'}}{8})+\epsilon)}{\delta^{'}\sigma_{min}^2} $ for some $\epsilon > 0$ and $\delta^{'} = \frac{5}{64}$  and  
 $M_n= k_0 \frac{n}{\log{(pq)}}$ 
where $k_0 < \text{min} \left(\frac{1}{1024}, \frac{(\delta^{*}\lambda_1)^2}{1024\lambda_2^2}, \frac{(1-2\delta^{'})\sigma_{min}^2}{16\sigma_{max}^2}\right)$ for some $ 0 < \delta^{*} < 1.$
\end{assumpA}

\begin{assumpA} \label{assumpA:tau1}

The slab variance $\tau_1^2$ satisfies $\max\left(\frac{k_n}{n}, \max\limits_{1\leq k\leq q}\frac{\norm{b_{0k}}^2_2}{\log{(pq)}}\right) = o(\tau_1^2)$ where $b_{0k}$ denotes the $k$-th column of $B_0$.
\end{assumpA}

\begin{assumpA} \label{assumpA:sn2}
$k_n\frac{\log(n\tau_1^2)+\log(pq)}{ns^2_n} = o(1)$
\end{assumpA}

\subsection{Proof of Theorem 1(a)} \label{sec:proof}
    Let $\pi(\gamma|Y)$ denote the posterior probability of $\gamma$. Given the true model with sparsity indicator $\gamma_t$ and another arbitrary model with sparsity indicator $\gamma_m$, the ratio of posterior probabilities can be shown to satisfy
    \begin{align} \label{B:defn}
        \frac{\pi(\gamma_m|Y)}{\pi(\gamma_t|Y)} &:=  \prod_{k=1}^q \frac{\pi_k(\gamma_{m_k}|Y)}{\pi(\gamma_{t_k}|Y)}\nonumber \\
        &\leq 8  \prod_{k=1}^q {\left(\frac{2q_1}{\tau_1\sqrt{n}}\right)}^{\nu_{m_k}-\nu_{t_k}}
        \frac{\left|{\frac{X_{m_k}'X_{m_k}}{n} + \frac{I_{\nu_{m_k}}}{n\tau_1^2} }\right|^{-1/2}}{\left|{\frac{X_{t_k}'X_{t_k}}{n} + \frac{I_{\nu_{t_k}}}{n\tau_1^2} }\right|^{-1/2}}
        \left(\frac{S_{{t_k}} + \beta/n}{S_{{m_k}}+\beta/n}\right)^{(n/2 + \alpha)}\\
        &:= 8 \prod_{k=1}^q B(\gamma_{m_k}, \gamma_{t_k})
    \end{align}
    Here $X_{m_k}(X_{t_k})$ represents the submatrix of $X$ consisting of columns corresponding to the active indices in $\gamma_{m_k}(\gamma_{t_k})$, $I_{\nu}$ represents the identity matrix of order $\nu$ and $$S_{m_k} = \frac{y_{.k}^{'}y_{.k}}{n} - \frac{y_{.k}^{'}X_{m_k}}{n}\left(\frac{X_{m_k}^{'}X_{m_k}}{n} + \frac{I_{\nu_{m_k}}}{n\tau_1^2}\right)^{-1}\frac{X_{m_k}^{'}y_{.k}}{n}.$$
    The derivation of (\ref{B:defn}) follows from computations similar to those given in \cite{ghoshstrong}. Let $P_{m_k}$ denote the projection matrix into the column space of $X_{m_k}$ and $$ \tilde{P}_{m_k} = X_{m_k}(\frac{1}{\tau_1^2}I_{\nu_{m_k}} + X_{m_k}^{'}X_{m_k})^{-1} X_{m_k} ^{'}.$$
	We define four events below and show that they occur with  probability tending to 1.
	
	   \begin{align*}
	   G_{1,n} &:= {\bigcap}_{k = 1}^{q}{\bigcap}_{\gamma_{m_k}:1\leq\nu_{m_k}\leq M_n}\left\{\norm{\frac{X_{m_k}^{'}X_{m_k}}{n}-R_{m_k}}_{2}\leq 32\lambda_2\sqrt{\frac{M_n\log{(pq)}}{n}}\right\}\\
	   G_{2,n} &:= {\bigcap}_{k = 1}^{q}{\bigcap}_{\gamma_{m_k}:1\leq\nu_{m_k}\leq \frac{n}{2}}\left\{ \varepsilon_{.k}^{'}P_{m_k}\varepsilon_{.k} \leq 8\sigma_{k0}^2\nu_{m_k}\log{(pq)} \right\}\\
	   G_{3,n} &:= {\bigcap}_{k=1}^{q}\left\{(1-\delta^{'})\sigma_{min}^2\leq \frac{\varepsilon_{.k}^{'}\varepsilon_{.k}}{n}\leq (1+\delta^{'})\sigma_{max}^2 \right\}\\
	   G_{4,n} &:= {\bigcap}_{k=1}^{q} {\bigcap}_{\gamma_{m_k}:\gamma_{m_k}\supset\gamma_{t_k},\nu_{m_k}\leq \frac{n}{2}}\left\{ \varepsilon_{.k}^{'}(P_{{m_k}}-P_{{t_k}})\varepsilon_{.k} \leq 8\sigma_{k0}^2(\nu_{m_k}-\nu_{t_k})\log{(pq)}\right\}
	    \end{align*} where $R_{m_k}$ represents the sub-matrix of R consisting of rows and columns corresponding to the active indices in $\gamma_{m_k}.$
	  We also define $G_{n} := G_{1,n} \cap G_{2,n} \cap G_{3,n} \cap G_{4,n}.$\\ 
	  \vskip10pt
	  Using Theorem 6.2.1 from \cite{vershynin2018highdim} and Lemma F.2 from \cite{basu2015regularized}  we get 
	  
	  	  \begin{align*}
	      \mathbb{P}_0\left(\norm{\frac{X_{m_k}^{'}X_{m_k}}{n}-R_{m_k}}_{2} \geq 32\lambda_2\sqrt{\frac{M_n\log(pq)}{n}}\right) \leq 2(pq)^{-3\nu_{m_k}}.
	  \end{align*}
	  Hence \begin{align}
	      \mathbb{P}_0\left(G_{1,n}^{c}\right) &\leq \sum_{k=1}^{q}\sum_{\gamma_{m_k}:1\leq\nu_{m_k}\leq M_n} 2(pq)^{-3\nu_{m_k}}\nonumber \\
	      &\leq\sum_{k=1}^{q}\sum_{i=1}^{M_n} \binom{p}{i}2(pq)^{-3i}\nonumber \\
	      &\leq 2q^{-3}\sum_{k=1}^{q}\sum_{i=1}^{M_n}p^{i}p^{-3i}\nonumber \\
	      &\leq 2q^{-3}\sum_{k=1}^{q}\sum_{i=1}^{\infty}p^{-2i}\nonumber \\
	      &\leq \frac{2}{q^2(p^2-1)} \to 0 \text{ as } n\to \infty.
	  \end{align}
	  
	  Using Lemma 4.1 from \cite{cao2020high} and the fact that $\varepsilon_{.k}^{'} P_{{m_k}}\varepsilon_{.k} \sim \sigma_{k0}^2 \chi_{\nu_{m_k}}^2$, it can be shown that
	  \begin{align}
	      \mathbb{P}_0\left(\varepsilon_{.k}^{'}P_{{m_k}}\varepsilon_{.k} \geq 8\sigma_{k0}^2\nu_{m_k}\log{(pq)}\right) &\leq 2(pq)^{-\frac{3}{2}\nu_{m_k}}.\nonumber
	  \end{align}
	  Hence,
	  \begin{align}
	      \mathbb{P}_0\left(G_{2,n}^{c}\right) &\leq \sum_{k=1}^q \sum_{\gamma_{m_k}:1\leq\nu_{m_k}\leq n/2} 2(pq)^{-\frac{3}{2}\nu_{m_k}} \nonumber \\
	      &\leq \sum_{k=1}^q \sum_{i =1} ^{ n/2} 2 \binom{p}{i} p^{-\frac{3i}{2}}q^{-\frac{3}{2}}\nonumber \\
	      & \leq 2 q^{-3/2} \sum_{k=1}^q \sum_{i=1}^{n/2}p^{i} p^{-\frac{3i}{2}} \nonumber \\
	      & \leq  \frac{2}{\sqrt{q}(\sqrt{p}-1)} \to 0 \text{ as } n \to \infty.
	      \end{align}
	  \begin{align}
	      \mathbb{P}_0\left(G_{3,n}^{c}\right) &\leq \sum_{k=1}^{q}\mathbb{P}_0\left(\left|\frac{1}{n}\varepsilon_{.k}^{'}\varepsilon_{.k}-\sigma_{k0}^2\right|> \delta^{'}\sigma_{k0}^2\right)\nonumber \\
	      &= \sum_{k=1}^{q}\mathbb{P}_0\left(\left|\frac{1}{\sigma_{k0}^2}\varepsilon_{.k}^{'}\varepsilon_{.k}-n\right|> \delta^{'}n\right)\nonumber \\
	      &= \sum_{k=1}^q P\left(|\chi_n^2 - n| > \delta^{'}n \right)\nonumber\\
	      &\leq 2q \exp\left[- \frac{(\delta^{'})^2n}{4(1+\delta^{'})} \right]  \to 0 \text{ as } n \to \infty.
	  \end{align}
	  
	  Here we use an upper bound for $P\left(|\chi_p^2 - p| > a \right)$ as obtained in the proof of Lemma 4.1 of \cite{cao2020high}. Using arguments similar to those in the proof for $G_{2,n}$ it can be shown that $\varepsilon_{.k}^{'}( P_{{m_k}}- P_{{t_k}})\varepsilon_{.k} \sim \sigma_{k0}^2 \chi_{(\nu_{m_k}-\nu_{t_k})}^2$ and that
	   \begin{align}
	      \mathbb{P}_0\left(\varepsilon_{.k}^{'}(P_{{m_k}}-P_{{t_k}})\varepsilon_{.k} \geq 8\sigma_{k0}^2\nu_{m_k}\log{(pq)}\right) &\leq 2(pq)^{-\frac{3}{2}(\nu_{m_k}-\nu_{t_k})}\nonumber
	  \end{align}
	  It then follows that
	  \begin{align}
	      \mathbb{P}_0\left(G_{4,n}^{c}\right) &\leq \sum_{k=1}^q \sum_{\gamma_{m_k}:\gamma_{m_k} \supset \gamma_{t_k}, \nu_{m_k}< n/2} 2(pq)^{-\frac{3}{2}\nu_{m_k}} \nonumber \\
	      &\leq \sum_{k=1}^q \sum_{i=1}^{n/2} 2 \binom{p- \nu_{t_k}}{i} p^{-\frac{3i}{2}}q^{-\frac{3}{2}}\nonumber \\
	      & \leq 2 q^{-3/2} \sum_{k=1}^q \sum_{i=1}^{n/2}p^{i} p^{-\frac{3i}{2}} \nonumber \\
	      &\leq 2 q^{-3/2} \sum_{k=1}^q \sum_{i=1}^{\infty} p^{-\frac{i}{2}} \nonumber \\
	      & \leq  \frac{2}{\sqrt{q}(\sqrt{p}-1)} \to 0 \text{ as } n \to \infty.
	      \end{align}
	  
	We now state and prove two lemmas which will be used to prove Theorem 1(a).  	
	
	\begin{lem}
	\label{lem1}
		If for a particular $k \; ( k = 1,2,\ldots,q), \gamma_{m_{k}} \supset \gamma_{t_{k}} $, then there exists $N_1$ (not depending on $m$ or $k$) such that for all $n \geq N_1$ on the set $G_{n}$, we have 
		$$  B(\gamma_{m_{k}},\gamma_{t_{k}}) \leq (pq)^{-(1+\delta)(\nu_{m_{k}}-\nu_{t_{k}})}$$
	
	\end{lem}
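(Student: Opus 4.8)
\textbf{Proof proposal for Lemma \ref{lem1}.}
The plan is to bound $B(\gamma_{m_k},\gamma_{t_k})$ by treating separately the three multiplicative pieces in its definition — the prior-odds factor $(2q_1/(\tau_1\sqrt n))^{\Delta}$, the determinant ratio, and the residual-sum-of-squares ratio $\bigl((S_{t_k}+\beta/n)/(S_{m_k}+\beta/n)\bigr)^{n/2+\alpha}$, where I write $\Delta:=\nu_{m_k}-\nu_{t_k}\ge 1$ — and to control each factor on the event $G_n$, then multiply. Since $\gamma_{m_k}\supset\gamma_{t_k}$, this is the ``overfitted model'' case, which is the more benign one: the minimum-signal quantity $s_n^2$ and Assumption A4 do not enter here (they are needed only in the companion lemma handling models that miss some true variable).

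For the determinant ratio I would use that the columns of $X_{t_k}$ form a subset of those of $X_{m_k}$, write $\frac{X_{m_k}'X_{m_k}}{n}+\frac{I_{\nu_{m_k}}}{n\tau_1^2}$ in block form with the $\gamma_{t_k}$ rows and columns first, and apply the Schur-complement determinant identity; the ratio then equals $\det(\text{Schur complement})^{-1/2}$, with the Schur complement a $\Delta\times\Delta$ positive definite matrix whose eigenvalues are at least the smallest eigenvalue of the whole Gram-plus-ridge matrix. On $G_{1,n}$ the latter is at least $\lambda_1-32\lambda_2\sqrt{M_n\log(pq)/n}=\lambda_1-32\lambda_2\sqrt{k_0}$, and the bound $k_0<(\delta^*\lambda_1)^2/(1024\lambda_2^2)$ from Assumption A2 forces this to exceed $(1-\delta^*)\lambda_1>0$. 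Hence the determinant ratio is at most $L^{\Delta}$ for the fixed constant $L=((1-\delta^*)\lambda_1)^{-1/2}$, independent of $m,k,n$.

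For the residual ratio I would first get a uniform positive lower bound on $S_{m_k}$: because $\gamma_{m_k}\supset\gamma_{t_k}$, fitting $X_{m_k}$ removes the $XB_0$ signal, so up to an $O(1/\tau_1^2)$ ridge correction (negligible since $\tau_1^2$ is large by Assumption A3) $S_{m_k}$ equals $\varepsilon_{.k}'(I-P_{m_k})\varepsilon_{.k}/n$; combining the lower bound on $\varepsilon_{.k}'\varepsilon_{.k}/n$ from $G_{3,n}$, the bound $\varepsilon_{.k}'P_{m_k}\varepsilon_{.k}\le 8\sigma_{k0}^2\nu_{m_k}\log(pq)$ from $G_{2,n}$, and $\nu_{m_k}\le M_n$ with $k_0<(1-2\delta')\sigma_{min}^2/(16\sigma_{max}^2)$, this is at least a fixed positive constant. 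Similarly $0\le S_{t_k}-S_{m_k}$, and on $G_{4,n}$ (again absorbing ridge corrections) $S_{t_k}-S_{m_k}\le 8\sigma_{max}^2\Delta\log(pq)/n$. Therefore $\bigl((S_{t_k}+\beta/n)/(S_{m_k}+\beta/n)\bigr)^{n/2+\alpha}\le\exp\{(n/2+\alpha)(S_{t_k}-S_{m_k})/S_{m_k}\}\le (pq)^{C_3\Delta}$ for a constant $C_3$ built from $\sigma_{max}^2,\sigma_{min}^2,\delta'$.

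Finally, since $\tau_1^2\gg k_n/n\ge 1/n$ by Assumption A3 we have $\tau_1\sqrt n\to\infty$, so the prior-odds factor satisfies $(2q_1/(\tau_1\sqrt n))^{\Delta}\le q_1^{\Delta}=(pq)^{-(1+\delta)\kappa\Delta}$ for all large $n$. Multiplying the three bounds gives $B(\gamma_{m_k},\gamma_{t_k})\le L^{\Delta}(pq)^{(C_3-(1+\delta)\kappa)\Delta}$, and the explicit lower bound $\kappa>2(8\sigma_{max}^2(1+\delta'/8)+\epsilon)/(\delta'\sigma_{min}^2)$ in Assumption A2 is exactly what guarantees $(1+\delta)\kappa-C_3>1+\delta$ with enough room left over to swallow the constant $L$, uniformly in $m$ and $k$, yielding a single threshold $N_1$ as claimed. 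The step requiring the most care is this residual-sum-of-squares analysis: obtaining the lower bound on $S_{m_k}$ and the upper bound on $S_{t_k}-S_{m_k}$ uniformly over all overfitted models while rigorously controlling the ridge ($1/\tau_1^2$) perturbations of the projection matrices, and then checking that the resulting constant $C_3$ is dominated by $(1+\delta)\kappa$ given the precise numerical constants fixed in Assumptions A1--A2.
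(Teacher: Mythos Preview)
Your proposal is correct and follows the same overall strategy as the paper: split $B(\gamma_{m_k},\gamma_{t_k})$ into the prior-odds, determinant, and residual-ratio factors, control each on $G_n$ using $G_{2,n},G_{3,n},G_{4,n}$ for the residual piece, and multiply. The one tactical difference is in the determinant ratio. You bound it alone by $L^{\Delta}$ via a Schur-complement eigenvalue argument on $G_{1,n}$ and then separately discard the $(\tau_1\sqrt n)^{-\Delta}$ factor; the paper instead couples the determinant ratio with the $(n\tau_1^2)^{-\Delta/2}$ piece of the prior-odds factor and shows the product is $\le 1$ \emph{deterministically} (since the Schur complement of $X_{m_k}'X_{m_k}/n + I/(n\tau_1^2)$ has all eigenvalues $\ge 1/(n\tau_1^2)$), leaving only $(2q_1)^{\Delta}$ from that factor and never invoking $G_{1,n}$ in this lemma. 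The paper also makes explicit the ridge correction you describe as ``absorbing'': it uses Woodbury's identity to show $y_{.k}'(P_{t_k}-\tilde P_{t_k})y_{.k}\le \log(pq)\cdot o(1)$ via Assumption~A3, which is exactly the step you flagged as requiring the most care.
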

	\begin{proof}
		For any fixed $k$,
		\begin{align}\label{eqn1}
	   S_{{t_k}} &= \frac{1}{n}y_{.k}^{'}(I - \tilde{P}_{{t_k}})y_{.k}\nonumber\\
	   &= \frac{1}{n}y_{.k}^{'}(I - P_{{t_k}})y_{.k} + \frac{1}{n}y_{.k}^{'}( P_{{t_k}} - \tilde{P}_{{t_k}})y_{.k}\nonumber\\
	   &=\frac{1}{n}\varepsilon_{.k}^{'}(I - P_{{t_k}})\varepsilon_{.k} + \frac{1}{n}y_{.k}( P_{{t_k}} - \tilde{P}_{{t_k}})y_{.k}
	\end{align} and,
	\begin{align}\label{eqn2}
	 S_{{m_k}}&=\frac{1}{n}y_{.k}^{'}(I - \tilde{P}_{{m_k}})y_{.k}\nonumber\\
	 &\geq\frac{1}{n}y_{.k}^{'}(I - P_{{m_k}})y_{.k}\nonumber\\
	 &= \frac{1}{n}\varepsilon_{.k}^{'}(I - P_{{m_k}})\varepsilon_{.k}  
	\end{align}
		Hence using (\ref{B:defn}), (\ref{eqn1}) and (\ref{eqn2}),  $B(\gamma_{m_k}, \gamma_{t_k})$ can be written as
		
		\begin{align} \label{lemma2.1}
		    B(\gamma_{m_k}, \gamma_{t_k}) &\leq \left(\frac{2q_1}{\tau_1\sqrt{n}}\right)^{\nu_{m_k}-\nu_{t_k}}\frac{\left|{\frac{X_{m_k}'X_{m_k}}{n} + \frac{I_{\nu_{m_k}}}{n\tau_1^2} }\right|^{-1/2}}{\left|{\frac{X_{t_k}'X_{t_k}}{n} + \frac{I_{\nu_{t_k}}}{n\tau_1^2} }\right|^{-1/2}}\nonumber\\&\qquad\times\left(1 +\frac{\varepsilon_{.k}^{'}(P_{m_{k}}-P_{t_{k}})\varepsilon_{.k}/n + y_{.k}^{'}(P_{t_{k}}-\tilde{P}_{t_{k}})y_{.k}/n }{\varepsilon_{.k}^{'}(I-P_{m_{k}})\varepsilon_{.k}/n + 2\beta/n}\right)^{n/2 + \alpha}
		\end{align}
	  Next using Woodbury's Identity and Assumptions \ref{assumpA:q1defn} and \ref{assumpA:tau1} we have on $G_{2,n}$,
	\begin{align} \label{lemma2.num2}
	   &\quad y_{.k}^{'}(P_{t_{k}}-\tilde{P}_{t_{k}})y_{.k}\nonumber\\ 
	 &= y_{.k}^{'}X_{t_{k}}(X_{t_{k}}^{'}X_{t_{k}})^{-1}\left[X_{t_{k}}^{'}X_{t_{k}} - X_{t_{k}}^{'}X_{t_{k}}(X_{t_{k}}^{'}X_{t_{k}} +\frac{I}{\tau_1^2})^{-1}X_{t_{k}}^{'}X_{t_{k}}\right](X_{t_{k}}^{'}X_{t_{k}})^{-1}X_{t_{k}}^{'}y_{.k}\nonumber\\
	 &=  y_{.k}^{'}X_{t_{k}}(X_{t_{k}}^{'}X_{t_{k}})^{-1}[\tau_1^2 I_{t_k} + (X_{t_{k}}^{'}X_{t_{k}})^{-1}]^{-1}(X_{t_{k}}^{'}X_{t_{k}})^{-1}X_{t_{k}}^{'}y_{.k}\nonumber\\
	 & \leq \frac{1}{\tau_1^2}y_{.k}^{'}X_{t_{k}}(X_{t_{k}}^{'}X_{t_{k}})^{-2}X_{t_{k}}^{'}y_{.k}\nonumber\\
	 &\leq \frac{2}{\tau_1^2}b_{0t_{k}}^{'}b_{0t_{k}} + \frac{2}{\tau_1^2}\varepsilon_{.k}^{'}X_{t_{k}}(X_{t_{k}}^{'}X_{t_{k}})^{-2}X_{t_{k}}^{'}\varepsilon_{.k}\nonumber\\
	 &\leq \text{log}(pq) o(1)
	 	\end{align}
	 	where $o(1) \to 0$ uniformly in $m$ and $k$. Note that $M_n \leq \frac{n}{2}$ for all sufficiently large $n$. Hence for $\nu_{m_k} < M_n ,\nu_{m_k} < n/2$ and on $G_{4,n}$ we have
	 	\begin{align} \label{lemma2.num1}
	 	   \varepsilon_{.k}^{'}(P_{m_k}- P_{t_k})\varepsilon_{.k} 
	 	   &\leq 8\sigma_{max}^2(\nu_{m_k}-\nu_{t_k})\log (pq).
	 	\end{align}
	 	On $G_{2,n} \cap G_{3,n}$ we have for all $\gamma_{m_k}$ with $\nu_{m_k} \leq M_n$
	 	\begin{align}\label{lemma2.denom}
	 	    	\frac{\varepsilon_{.k}^T(I- P_{m_k})\varepsilon_{.k}}{n} 
	 	    	&=\frac{\varepsilon_{.k}^T\varepsilon_{.k}}{n}-\frac{\varepsilon_{.k}^TP_{m_k}\varepsilon_{.k}}{n}\nonumber\\
	 	    	&\geq (1-\delta^{'})\sigma_{min}^2-\frac{8\sigma_{max}^2\nu_{m_k}\log (pq)}{n}\nonumber\\
	 	    	&\geq (1-\delta^{'})\sigma_{min}^2-8\sigma_{max}^2 k_0\nonumber\\
	 	    	&\geq\delta^{'}\sigma_{min}^2
	 	\end{align}
	 
	 

	 by Assumption \ref{assumpA:q1defn}. Since $\gamma_{m_k} \supset \gamma_{t_k}$ it can be shown that
	 \begin{align} \label{lemma2.det}
	     (n\tau_1^2)^{(\nu_{t_{k}}-\nu_{m_{k}})/2}\frac{\left|{\frac{X_{m_k}'X_{m_k}}{n} + \frac{I_{\nu_{m_k}}}{n\tau_1^2} }\right|^{-1/2}}{\left|{\frac{X_{t_k}'X_{t_k}}{n} + \frac{I_{\nu_{t_k}}}{n\tau_1^2} }\right|^{-1/2}} &\leq 1.
	 \end{align}
     Finally using (\ref{lemma2.1}),(\ref{lemma2.num2}),(\ref{lemma2.num1}),(\ref{lemma2.denom}),(\ref{lemma2.det}) we get
     \begin{align}
        B(\gamma_{m_k}, \gamma_{t_k}) &\leq (2q_1)^{(\nu_{m_k}-\nu_{t_k})}\left[1+\frac{8\sigma_{max}^2(\nu_{m_k}-\nu_{t_k})\log(pq) + \log(pq)o(1)}{\delta^{'}\sigma_{min}^2n}\right]^{\frac{n}{2} + \alpha}\nonumber\\
	   &\leq (pq)^{-(1+\delta)(\nu_{m_k}-\nu_{t_k})}. \nonumber
     \end{align}
     The last inequality follows from Assumption \ref{assumpA:q1defn} and the inequality $(1+x) \leq e^x$.
	\end{proof}
	\begin{lem}
	\label{lem2}
		If for a particular $k \; ( k = 1,2,\ldots,q), \gamma_{m_{k}}$ is such that $\gamma_{m_{k}}^c \cap \gamma_{t_{k}} \neq \phi$, then there exists $N_2$ (not depending on $m$ or $k$) such that for all $n \geq N_2$ on the set $G_{n}$, we have $$ B(\gamma_{m_{k}},\gamma_{t_{k}}) \leq (pq)^{-(1+\delta)(\nu_{m_{k}}-\nu_{t_{k}})}$$ if $\nu_{m_{k}} > (1+\frac{8}{\delta^{'}})\nu_{t_{k}}$, and
		$$ B(\gamma_{m_{k}},\gamma_{t_{k}}) \leq (pq)^{-(1+\delta)(1+\frac{8}{\delta^{'}})\nu_{t_{k}}}$$
		 if $\nu_{m_{k}} \leq (1+\frac{8}{\delta^{'}})\nu_{t_{k}}$.
	\end{lem}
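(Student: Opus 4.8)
The plan is to bound $B(\gamma_{m_k},\gamma_{t_k})$ on the event $G_n$ using the same three-factor split as in Lemma~\ref{lem1} --- the prior-penalty factor $(2q_1/(\tau_1\sqrt n))^{\nu_{m_k}-\nu_{t_k}}$, the determinant ratio, and the residual ratio $(S_{t_k}/S_{m_k})^{n/2+\alpha}$ --- but now exploiting that $\gamma_{m_k}$ omits at least one coordinate of $\gamma_{t_k}$, which forces a \emph{bias} term into $S_{m_k}$. Write $u_k:=m_k\cup t_k$ with projection $P_{u_k}$. First I would record, on $G_n$, the upper bound $S_{t_k}\le \tfrac1n\varepsilon_{.k}'(I-P_{t_k})\varepsilon_{.k}+\tfrac{\log(pq)}{n}o(1)$ exactly as in \eqref{eqn1} and \eqref{lemma2.num2}, and the lower bound $S_{m_k}\ge \tfrac1n\|(I-P_{m_k})X_{t_k}b_{0t_k}\|_2+\tfrac1n\varepsilon_{.k}'(I-P_{m_k})\varepsilon_{.k}+\tfrac2n(X_{t_k}b_{0t_k})'(I-P_{m_k})\varepsilon_{.k}$, obtained by substituting $y_{.k}=X_{t_k}b_{0t_k}+\varepsilon_{.k}$ into $S_{m_k}\ge \tfrac1n y_{.k}'(I-P_{m_k})y_{.k}$. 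Since $(I-P_{u_k})X_{t_k}=0$, I can replace $(I-P_{m_k})$ by $(P_{u_k}-P_{m_k})$ in the bias and cross terms; the noise filtered through the rank-$(\nu_{u_k}-\nu_{m_k})=|t_k\setminus m_k|\le\nu_{t_k}$ projection $P_{u_k}-P_{m_k}$ is controlled by a $\chi^2$ bound of the same flavour used for $G_{2,n}$ and $G_{4,n}$ (this needs one additional high-probability event of that form added to $G_n$, bounded by the identical union-bound argument). A Cauchy--Schwarz split of the cross term then gives $S_{m_k}\ge \tfrac1n\varepsilon_{.k}'(I-P_{u_k})\varepsilon_{.k}+\tfrac12\cdot\tfrac1n\|(I-P_{m_k})X_{t_k}b_{0t_k}\|_2-\tfrac{C\nu_{t_k}\log(pq)}{n}$, and Assumption~\ref{assumpA:eigen} on $G_{1,n}$ (applied to submatrices of size $\le M_n+k_n$) yields the key bound $\tfrac1n\|(I-P_{m_k})X_{t_k}b_{0t_k}\|_2\ge c\lambda_1\,|t_k\setminus m_k|\,s_n^2\ge c\lambda_1 s_n^2>0$.

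For the sub-case $\nu_{m_k}\le (1+\tfrac8{\delta'})\nu_{t_k}$ I would use the bias. Assumption~\ref{assumpA:sn2} forces $ns_n^2\gg k_n(\log(n\tau_1^2)+\log(pq))\ge\nu_{t_k}(\log(n\tau_1^2)+\log(pq))$, so for large $n$ the $-C\nu_{t_k}\log(pq)/n$ slack is absorbed into a quarter of the bias, giving $S_{m_k}\ge \delta'\sigma_{min}^2+\tfrac14\cdot\tfrac1n\|(I-P_{m_k})X_{t_k}b_{0t_k}\|_2$ on $G_{2,n}\cap G_{3,n}$. Comparing the leading term of $S_{t_k}$ to $\tfrac1n\varepsilon_{.k}'(I-P_{u_k})\varepsilon_{.k}$ via the same $\chi^2$ estimate (losing only $C\nu_{t_k}\log(pq)/n$) yields $S_{t_k}/S_{m_k}\le 1-\tfrac1{C'}\cdot\tfrac1n\|(I-P_{m_k})X_{t_k}b_{0t_k}\|_2$, hence $(S_{t_k}/S_{m_k})^{n/2+\alpha}\le\exp(-c''\,ns_n^2)$. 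By Assumption~\ref{assumpA:sn2} this decays faster than any fixed power of $(pq)^{-\nu_{t_k}}$, whereas (using $|\nu_{m_k}-\nu_{t_k}|\le(1+\tfrac8{\delta'})\nu_{t_k}$, $q_1=(pq)^{-(1+\delta)\kappa}$, $n\tau_1^2\ge k_n\ge1$, and the Gram-eigenvalue bounds $(1-\delta^*)\lambda_1\le\mathrm{eig}\le 2\lambda_2+1$ on $G_{1,n}$) the prior-penalty and determinant factors are at most $\exp(C\nu_{t_k}(\log(n\tau_1^2)+\log(pq)))$ in absolute value. Taking the decay exponent larger than the total polynomial exponent gives $B(\gamma_{m_k},\gamma_{t_k})\le (pq)^{-(1+\delta)(1+8/\delta')\nu_{t_k}}$.

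For the sub-case $\nu_{m_k}> (1+\tfrac8{\delta'})\nu_{t_k}$ I would not use the bias (it may be $o(1)$) but instead the fact that now $\nu_{m_k}<\tfrac{8+\delta'}{8}(\nu_{m_k}-\nu_{t_k})$, so the prior penalty is genuinely exponentially small in $\nu_{m_k}-\nu_{t_k}$. Dropping the bias and cross terms, $S_{m_k}\ge\tfrac1n\varepsilon_{.k}'(I-P_{u_k})\varepsilon_{.k}$, and the $\chi^2$ comparison gives $S_{t_k}/S_{m_k}\le 1+C\nu_{m_k}\log(pq)/n$ (with $C=16\sigma_{max}^2/(\delta'\sigma_{min}^2)$ after using $\tfrac1n\varepsilon_{.k}'(I-P_{u_k})\varepsilon_{.k}\ge\tfrac12\delta'\sigma_{min}^2$ on $G_{2,n}\cap G_{3,n}$), so $(S_{t_k}/S_{m_k})^{n/2+\alpha}\le (pq)^{(1+o(1))(C/2)\nu_{m_k}}$. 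As in Lemma~\ref{lem1}, Woodbury's identity collapses the $(n\tau_1^2)^{\pm}$, $\sqrt n$ and determinant factors (up to the bounded eigenvalue corrections) into $(2q_1)^{\nu_{m_k}-\nu_{t_k}}$ times a benign $e^{O(\nu_{m_k})}$ factor, giving $B(\gamma_{m_k},\gamma_{t_k})\le (pq)^{-(1+\delta)\kappa(\nu_{m_k}-\nu_{t_k})+C'\nu_{m_k}}\le (pq)^{[-(1+\delta)\kappa+C'(8+\delta')/8](\nu_{m_k}-\nu_{t_k})}$; the lower bound on $\kappa$ in Assumption~\ref{assumpA:q1defn} is calibrated precisely so that $-(1+\delta)\kappa+C'\tfrac{8+\delta'}{8}\le-(1+\delta)$, which is exactly the $\tfrac{2(8\sigma_{max}^2(1+\delta'/8)+\epsilon)}{\delta'\sigma_{min}^2}$ threshold stated there (the $\epsilon$ slack absorbing the determinant and $\sqrt n$ terms).

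The hard part will be the first sub-case: one must carry out the comparison of $\varepsilon_{.k}'(I-P_{t_k})\varepsilon_{.k}$ and $\varepsilon_{.k}'(I-P_{m_k})\varepsilon_{.k}$ through the union model $u_k$ while keeping the residual-noise fluctuation at order $\nu_{t_k}\log(pq)/n$ rather than a fixed multiple of $\sigma^2$, so that it is the signal-strength quantity $ns_n^2$ from Assumption~\ref{assumpA:sn2} --- not a constant --- that drives the decay. Making this uniform over the $\binom{p}{\le M_n}^2$ pairs $(\gamma_{m_k},\gamma_{t_k})$, which entails a $\chi^2$ tail bound for the data-dependent rank-$\le\nu_{t_k}$ projection $P_{u_k}-P_{m_k}$ exactly in the spirit of $G_{2,n}$ and $G_{4,n}$, is the step that requires the most care, but is otherwise routine.
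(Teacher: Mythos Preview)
Your two-case split and the key mechanisms in each case are exactly those of the paper: in Case~I ($\nu_{m_k}>(1+8/\delta')\nu_{t_k}$) the bias is discarded, $G_{4,n}$ controls $\varepsilon_{.k}'(P_{\tilde m_k}-P_{t_k})\varepsilon_{.k}$ via the union model $\tilde m_k=m_k\cup t_k$, and the inequality $\nu_{m_k}\le(1+\delta'/8)(\nu_{m_k}-\nu_{t_k})$ feeds the $\kappa$-calibration in Assumption~A2; in Case~II the bias term $\tfrac1n b_{0t_k}'X_{t_k}'(P_{t_k}-P_{m_k})X_{t_k}b_{0t_k}\ge(1-\delta^*)\lambda_1 s_n^2$ drives $(S_{m_k}/S_{t_k})^{n/2}\ge\exp(c\,ns_n^2)$, and Assumption~A4 makes this beat the $(n\tau_1^2)^{O(\nu_{t_k})}(pq)^{O(\nu_{t_k})}$ cost of the prior-penalty and determinant factors. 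So the architecture is the same.

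Two differences in execution are worth flagging. First, you do \emph{not} need the extra $\chi^2$ event you propose. In Case~II one has $\nu_{u_k}\le\nu_{m_k}+\nu_{t_k}\le(2+8/\delta')\nu_{t_k}$, so the existing $G_{2,n}$ already gives $\varepsilon_{.k}'(P_{u_k}-P_{m_k})\varepsilon_{.k}\le\varepsilon_{.k}'P_{u_k}\varepsilon_{.k}\le 8\sigma_{max}^2\nu_{u_k}\log(pq)=O(\nu_{t_k}\log(pq))$; the paper likewise handles the cross term entirely on $G_{2,n}$, via the projection onto $a_k:=t_k\setminus m_k$ and Cauchy--Schwarz. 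This matters because the lemma is stated ``on $G_n$'' as already defined, and the paper's proof does not enlarge it. Second, for the bias lower bound the paper does not simply invoke an eigenvalue inequality: it first computes the block identity
\[
X_{t_k}'(P_{t_k}-P_{m_k})X_{t_k}=\begin{pmatrix}0&0\\0&X_{a_k}'(I-P_{m_k})X_{a_k}\end{pmatrix},
\]
and then uses a representation (Lemma~S1.4 of \cite{ghoshstrong}) expressing $b_{0a_k}'X_{a_k}'(I-P_{m_k})X_{a_k}b_{0a_k}=u'X_{m_k\cup a_k}'X_{m_k\cup a_k}u$ for some $u$ with $\|u\|^2\ge\|b_{0a_k}\|^2\ge s_n^2$, so that the $G_{1,n}$ eigenvalue bound can be applied to a Gram matrix of at most $M_n+k_n$ columns. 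Your sketch asserts the inequality $\tfrac1n\|(I-P_{m_k})X_{t_k}b_{0t_k}\|_2^2\ge c\lambda_1 s_n^2$ directly; the paper's identity is precisely what makes that step valid (a naive application of $G_{1,n}$ to $(I-P_{m_k})X_{a_k}$ does not work because $I-P_{m_k}$ is not a Gram sub-matrix of $X$).
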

	
	\begin{proof}
	
	Let $\gamma_{{\tilde{m}}_k} = \gamma_{m_k}\cup\gamma_{t_k}$.
	Then
	\begin{align}\label{smtilde}
	 S_{m_k}&=\frac{1}{n}y_{.k}(I - \tilde{P}_{{m_k}})y_{.k}
	 \geq\frac{1}{n}y_{.k}(I - P_{{m_k}})y_{.k}\nonumber\\
	 &\geq \frac{1}{n}y_{.k}(I - P_{{{\tilde{m}}_k}})y_{.k}  \nonumber\\
	 &= \frac{1}{n}\varepsilon_{.k}(I - P_{\gamma_{{\tilde{m}}_k}})\varepsilon_{.k}.  
	\end{align}
	Using (\ref{eqn1}) and (\ref{smtilde}) it can be shown that
	$$ \frac{S_{t_k} + 2\beta/n}{S_{m_k} + 2\beta/n} \leq 1 + \frac{\frac{1}{n}\varepsilon_{.k}^{'}( P_{{\tilde{m}}_k}-P_{t_k} )\varepsilon_{.k} +\frac{1}{n}y_{.k}^{'}( P_{t_k} - \tilde{P}_{t_k})y_{.k}}{\frac{1}{n}\varepsilon_{.k}^{'}(I - P_{{\tilde{m}}_k})\varepsilon_{.k} + 2\beta/n}$$ and hence we get
	\begin{align} \label{lemma3.1}
		    B(\gamma_{m_k}, \gamma_{t_k}) &\leq \left(\frac{2q_1}{\tau_1\sqrt{n}}\right)^{\nu_{m_k}-\nu_{t_k}}\frac{\left|{\frac{X_{m_k}'X_{m_k}}{n} + \frac{I_{\nu_{m_k}}}{n\tau_1^2} }\right|^{-1/2}}{\left|{\frac{X_{t_k}'X_{t_k}}{n} + \frac{I_{\nu_{t_k}}}{n\tau_1^2} }\right|^{-1/2}}\nonumber\\&\qquad\times\left(1 +\frac{\varepsilon_{.k}^{'}(P_{{\tilde{m}}_k}-P_{t_{k}})\varepsilon_{.k}/n + y_{.k}^{'}(P_{t_{k}}-\tilde{P}_{t_{k}})y_{.k}/n }{\varepsilon_{.k}^{'}(I-P_{{\tilde{m}}_k})\varepsilon_{.k}/n + 2\beta/n}\right)^{n/2 + \alpha}.
		\end{align}
	\textbf{CASE I:}  $(1+\frac{8}{\delta^{'}})\nu_{t_k}<\nu_{m_k}\leq M_n$
	\vskip2pt
	  For all sufficiently large $n, \nu_{{\tilde{m}}_k} < n/2$ and $\gamma_{{\tilde{m}}_k} = \gamma_{m_k}\cup\gamma_{t_k} \supset\gamma_{t_k}.$ 
	Thus, on $G_{4,n}$ we have 
	\begin{align}
	    \varepsilon_{.k}^T(P_{{\tilde{m}}_k}-P_{t_{k}})\varepsilon_{.k} &\leq 8\sigma_{max}^2(\nu_{{\tilde{m}}_k}-\nu_{t_k})\log(pq)\nonumber\\
	    &\leq 8\sigma_{max}^2\nu_{m_k}\log(pq)\nonumber\\
	    &\leq 8\sigma_{max}^2(1+\delta^{'}/8)(\nu_{m_k}-\nu_{t_k})\log(pq) .
	\end{align}
	We have already shown in the proof of Lemma 1 that
	\begin{align*}
	    y_{.k}^{'}(P_{t_{k}}-\tilde{P}_{t_{k}})y_{.k} \leq \log(pq)o(1).
	\end{align*}
	As $\nu_{{\tilde{m}}_k} \leq n/2$ for all large $n$, using arguments similar to the proof of Lemma 1 it can be shown that on $G_{2,n}\cap G_{3,n}$ we have
    \begin{align}
	 	    	\frac{1}{n} \varepsilon_{.k}^T(I- P_{\tilde{m}_k})\varepsilon_{.k}
	 	    	&\geq\delta^{'}\sigma_{min}^2.
	 	\end{align}
	On $G_{1,n}$,
	\begin{align}
	    \left|{\frac{X_{t_k}'X_{t_k}}{n} + \frac{I_{\nu_{t_k}}}{n\tau_1^2} }\right|^{1/2} < (2\lambda_2)^{(\nu_{t_k}/2)} \nonumber
	\end{align} and
	\begin{align}
	    \left|{\frac{X_{m_k}'X_{m_k}}{n} + \frac{I_{\nu_{m_k}}}{n\tau_1^2} }\right|^{1/2} > ((1-\delta^{*})\lambda_1)^{(\nu_{m_k}/2)} \nonumber
	\end{align} for $0<\delta^{*}<1.$
	Then on $G_{1,n}$,
	\begin{align}
	    \frac{\left|{\frac{X_{m_k}'X_{m_k}}{n} + \frac{I_{\nu_{m_k}}}{n\tau_1^2} }\right|^{-1/2}}{\left|{\frac{X_{t_k}'X_{t_k}}{n} + \frac{I_{\nu_{t_k}}}{n\tau_1^2} }\right|^{-1/2}} 
	    & \leq  C^{(\nu_{m_k}-\nu_{t_k})} 
	\end{align}
	for some appropriate constant $C$. Hence from (\ref{lemma3.1}) we have
	\begin{align}\label{eq13}
	   B(\gamma_{m_k}, \gamma_{t_k}) 
	   &\leq (pq)^{-(1+\delta)(\nu_{m_k}-\nu_{t_k})}\left[2(pq)^{-\left(\frac{\kappa}{2}-\frac{\left(8\sigma_{max}^2\left(1+\frac{\delta^{'}}{8}\right)+\epsilon\right)}{\delta^{'}\sigma_{min}^2}\right)}\right]^{\nu_{m_k}-\nu_{t_k}}\nonumber\\&\qquad\times({n\tau_1^2})^{(\nu_{t_k}-\nu_{m_k})/2} C^{(\nu_{m_k}-\nu_{t_k})}\nonumber\\
	   &\leq (pq)^{-(1+\delta)(\nu_{m_k}-\nu_{t_k})} 
	\end{align}
	by Assumption \ref{assumpA:q1defn}.\\

 		\textbf{CASE II:}  $\nu_{m_k} \leq (1+\frac{8}{\delta^{'}})\nu_{t_k}$
 		\vskip2pt
	 Let $\gamma_{a_k} = \gamma_{m_k}^c \cap \gamma_{t_k} \text{ and }
   \gamma_{m_k \cap t_k} = \gamma_{m_k} \cap \gamma_{t_k}.$ Also, let $b_{0t_k}, b_{0a_k}$ and $ b_{0m_k \cap t_k}$ denote the vectors consisting of the elements of $b_{0k}$ ( the $k$-th column of $B_0$) which correspond to the active indices of $\gamma_{t_k}$, $\gamma_{a_k}$ and $\gamma_{m_k \cap t_k}$ respectively.	We first find a lower bound for $S_{{m_k}} - S_{{t_k}}$. Using Woodbury's identity it can be shown that
   
   \begin{align}
       S_{{m_k}} - S_{{t_k}} &\geq \frac{y_{.k}^{'}(P_{t_k} - P_{m_k})y_{.k}}{n} - o(s_n^2) \nonumber\\
       &= \frac{\varepsilon_{.k}^{'}(P_{t_k} - P_{m_k})\varepsilon_{.k}}{n} + \frac{b_{0t_{k}}^{'}X_{t_{k}}^{'}(P_{t_k} - P_{m_k})X_{t_{k}}b_{0t_{k}}}{n} \nonumber\\
       &\qquad + \frac{2b_{0t_{k}}^{'}X_{t_{k}}^{'}(P_{t_k} - P_{m_k})X_{t_{k}}\varepsilon_{.k}}{n} - o(s_n^2).
   \end{align}
   We show that the second term is the dominating term and is bounded below by $(1-\delta^{*})\lambda_1s_n^2$ where $0< \delta^{*}<1$ is as in Assumption \ref{assumpA:q1defn}. 
   Without loss of generality, we assume that $ X_{t_k}$ is composed as $[X_{m_k \cap t_k}| X_{a_k}]$ where $X_{m_k \cap t_k} = X_{\gamma_{m_k \cap t_k}}$
   \begin{align}
       X_{t_k}^{'}(P_{t_k} - P_{m_k}) X_{t_k} &=\begin{pmatrix}
       X_{m_k\cap t_k}^{'}\\X_{a_k}^{'}
       \end{pmatrix}
       \begin{pmatrix}
       P_{t_k} - P_{m_k}
       \end{pmatrix}
       \begin{pmatrix}
       X_{m_k \cap t_k} |& X_{a_k}
       \end{pmatrix}\nonumber\\
      &=\begin{pmatrix}
      (P_{t_k}X_{_{m_k \cap t_k}})^{'}-(P_{m_k}X_{_{m_k \cap t_k}})^{'}\\
      (P_{t_k}X_{a_k})^{'}-(P_{m_k}X_{a_k})^{'}
      \end{pmatrix}
      \begin{pmatrix}
       X_{m_k \cap t_k} |& X_{a_k}
       \end{pmatrix}\nonumber\\
      &=\begin{pmatrix}
      0 & 0\\
      0 & X_{a_k}^{'}(I-P_{m_k})X_{a_k}
      \end{pmatrix}.
   \end{align}
   Hence,
   \begin{align}
      b_{0t_{k}}^{'}X_{t_{k}}^{'}(P_{t_k} - P_{m_k})X_{t_{k}}b_{0t_{k}} &= \begin{pmatrix}
      b_{0m_k\cap t_k}^{'} & b_{0a_{k}}^{'}
      \end{pmatrix}\begin{pmatrix}
      0 & 0\\
      0 & X_{a_k}^{'}(I - P_{m_k})X_{a_k}
      \end{pmatrix}
      \begin{pmatrix}
      b_{0m_k\cap t_k}\\
      b_{0a_{k}}
      \end{pmatrix}\nonumber\\
      &= b_{0a_{k}}^{'}X_{a_k}^{'}(I - P_{{m_k}})X_{a_k} b_{0a_{k}}.
   \end{align}
   
  Now by Lemma S1.4 of \cite{ghoshstrong} there exists a $(\nu_{m_k} + \nu_{a_k}) \times 1$ vector $u$ such that
  $$ \frac{1}{n} b_{0t_{k}}^{'}X_{t_{k}}^{'}(P_{{t_k}} - P_{{m_k}})X_{t_{k}}b_{0t_{k}} = \frac{1}{n} b_{0a_{k}}^{'}X_{a_k}^{'}(I - P_{{m_k}})X_{a_k} b_{0a_{k}} = \frac{1}{n} u^{'}X_{m_k \cup a_k}^{'}X_{m_k \cup a_k}u$$
  where $u^{'} = (u_{m_k}^{'} | b_{0a_k}^{'}) $ and $\norm{u} ^2 \geq \norm{b_{0a_k}}^2 \geq \nu_{a_k}s_n^2 \geq s_n^2$.\\
  On $G_{1,n}$ we have
  \begin{align}\label{lemma2:S:upper}
      \frac{1}{n} b_{0t_{k}}^{'}X_{t_{k}}^{'}(P_{t_k} - P_{m_k})X_{t_{k}}b_{0t_{k}} &= \frac{1}{n} u^{'}X_{m_k \cup a_k}^{'}X_{m_k \cup a_k}u\nonumber\\&=
      \frac{u'}{\norm{u}} \left(\frac{1}{n}X_{m_k \cup a_k}^{'}X_{m_k \cup a_k}\right)\frac{u}{\norm{u}}\norm{u}^2\nonumber\\
      &\geq \norm{u}^2 \inf_{\norm{v}=1}v^{'}\left(\frac{1}{n}X_{m_k \cup a_k}^{'}X_{m_k \cup a_k}\right)v \nonumber\\
      &\geq (1-\delta^{*})\lambda_1 s_1^2
  \end{align}
  by Assumption \ref{assumpA:q1defn}.
  Thus,
  $$ b_{0t_{k}}^{'}X_{t_{k}}^{'}(P_{{t_k}} - P_{{m_k}})X_{t_{k}}b_{0t_{k}} \geq n(1-\delta^{*})\lambda_1 s_1^2. $$
  By Assumption \ref{assumpA:sn2}, on the set $G_{2,n}$,
  $$ \frac{1}{n}\varepsilon_{.k}^{'}(P_{{t_k}} - P_{{m_k}})\varepsilon_{.k} = o(s_n^2)$$ 
  Now, since $(P_{{t_k}} - P_{{ m_k}})X_{t_k} = P_{\gamma_{t_k}\cap \gamma_{m_k^c}}X_{t_k}$, we have
  \begin{align*}
      b_{0t_{k}}^{'}X_{t_{k}}^{'}(P_{{t_k}} - P_{{m_k}})\varepsilon_{.k} &\leq \sqrt{b_{0t_{k}}^{'}X_{t_{k}}^{'}P_{\gamma_{t_k}\cap \gamma_{m_k^c}}X_{t_{k}}b_{0t_{k}}} \sqrt{\varepsilon_{.k}^{'}P_{\gamma_{t_k}\cap \gamma_{m_k^c}}\varepsilon_{.k}}\\
      &= o(b_{0t_{k}}^{'}X_{t_{k}}^{'}(P_{{t_k}} - P_{{m_k}})X_{t_{k}}b_{0t_{k}} )
  \end{align*}
  
  Also on $G_{3,n}$ and using (\ref{lemma2.num2}),
  \begin{align}\label{lemma2:S:denom}
      S_{{t_k}} &= \frac{y_{.k}^{'}(I- \tilde{P}_{t_k})y_{.k}}{n} \nonumber\\
      &= \frac{\varepsilon_{.k}^{'}\varepsilon_{.k}}{n} + \frac{y_{.k}^{'}(P_{t_k}- \tilde{P}_{t_k})y_{.k}}{n} \nonumber\\
      & \leq (1+\delta^{'})\sigma_{max}^2 + o(1) \nonumber
  \end{align}
  So, 
  \begin{align}
      S_{{t_k}} + 2\beta/n &\leq c_1 \qquad \text{for some appropriate constant } c_1.
       \end{align} and hence,
       
    \begin{align}\label{lemma2:Sterm}
        \left(\frac{S_{{m_k}} +2\beta/n}{S_{{t_k}} +2\beta/n}\right)^{-(\frac{n}{2}+\alpha)} 
        & = \left(1+\frac{S_{{m_k}}-S_{{t_k}}}{S_{{t_k}} +2\beta/n}\right)^{-(\frac{n}{2}+\alpha)} \nonumber\\
        & \leq (1+c_2s_n^2)^{-(\frac{n}{2}+\alpha)} 
    \end{align} 
    for some appropriate constant $c_2$.
    Now
    \begin{align}\label{lemma2:tau2term}
     (n\tau_1^2)^{\frac{\nu_{t_k}-\nu_{m_k}}{2}} \frac{\left|{\frac{X_{m_k}'X_{m_k}}{n} + \frac{I_{\nu_{m_k}}}{n\tau_1^2} }\right|^{-1/2}}{\left|{\frac{X_{t_k}'X_{t_k}}{n} + \frac{I_{\nu_{t_k}}}{n\tau_1^2} }\right|^{-1/2}} 
     &\leq \frac{(n\tau_1^2)^{(\nu_{t_k}-\nu_{m_k})/2}(2\lambda_2)^{\nu_{t_k}/2} }{\left|\frac{I_{\nu_{m_k}}}{n\tau_1^2} \right|^{1/2}} \nonumber\\
     & \leq (2n\tau_1^2\lambda_2)^{\nu_{t_k}/2}.
    \end{align}
    
  From  (\ref{B:defn}), (\ref{lemma2:S:upper}), (\ref{lemma2:Sterm}), (\ref{lemma2:tau2term}) and using Assumption \ref{assumpA:sn2} we get
  
  \begin{align*}
      B(\gamma_{m_k}, \gamma_{t_k}) &\leq (pq)^{-(1+\delta)(1+\frac{8}{\delta^{'}})(\nu_{m_k}-\nu_{t_k})}.
  \end{align*}
  
  	\end{proof}
\noindent \textbf{\textit{Proof of Theorem 1(a)}.} We first prove that $$\pi (\gamma_t |Y) \overset{\mathbb{P}_0}{\longrightarrow} 1 \text{ as } n \to \infty$$.
Let $N_1, N_2$ and $G_n$ be as in Lemmas 1 and 2 and $N=\max(N_1, N_2)$. Then for all
$n \geq N$ and $k=1, \ldots, q$, on the set $G_n$, 
\begin{eqnarray}
 &  & \frac{1-\pi_k(\gamma_{t_k} | Y)}{\pi_k(\gamma_{t_k} | Y)} \nonumber \\
 & = & \sum_{\gamma_{m_k} \neq \gamma_{t_k}} \frac{\pi_k(\gamma_{m_k} | Y)}{\pi_k(\gamma_{t_k} | Y)} \nonumber \\
 & \leq &  \sum_{\gamma_{m_k}: \gamma_{m_k} \supset \gamma_{t_k}, \nu_{m_k} \leq M_n} \frac{\pi_k(\gamma_{m_k} | Y)}{\pi_k(\gamma_{t_k} | Y)}  
          + \sum_{\gamma_{m_k}: \gamma_{m_k}^c \cap \gamma_{t_k} \neq \phi , (1+8/\delta^{'} )\nu_{t_k} < \nu_{m_k} \leq M_n} 
            \frac{\pi_k(\gamma_{m_k} | Y)}{\pi_k(\gamma_{t_k} | Y)} \nonumber \\
 & &   + \sum_{\gamma_{m_k}: \gamma_{m_k}^c \cap \gamma_{t_k} \neq \phi , \nu_{m_k} \leq (1+8/\delta^{'} )\nu_{t_k}} 
        \frac{\pi_k(\gamma_{m_k} | Y)}{\pi_k(\gamma_{t_k} | Y)} \label{proof1} \\
 & \leq &  8 \sum_{i=\nu_{t_k}+1}^{M_n} {p-\nu_{t_k} \choose i- \nu_{t_k}} 	(pq)^{-(1+\delta)(i- \nu_{t_k})} 
         + 8 \sum_{(1+8/\delta^{'})\nu_{t_k} < i \leq M_n} {p \choose i} (pq)^{-(1+\delta)(i - \nu_{t_k})} \nonumber \\
 & &     + 8 \sum_{0 \leq i \leq (1+8/\delta^{'})\nu_{t_k}} {p \choose i} (pq)^{-(1+\delta)(1+8/\delta^{'})\nu_{t_k}} 
 \label{proof2}
\end{eqnarray}
by Lemmas 1 and 2. 
In order to find a bound for (\ref{proof2}), we use the inequalities ${p \choose i} \leq p^i$, $\sum_{i=0}^r p^i \leq 2 p^r$ and 
$q^{-(1+\delta)r} \leq q^{-(1+\delta )}$ for $r \geq 1$. We also note that $\delta > \delta^{'} /8$ so that $(1+\delta)8 / (8+\delta^{'}) > 1$ 
and for $i > (1+8/\delta^{'})\nu_{t_k}$, we have $i-\nu_{t_k} \geq 1 $ and  
$i-\nu_{t_k} >8i/(8+\delta^{'}) $.
We then have
\begin{eqnarray*}
 &  & \frac{1-\pi_k(\gamma_{t_k} | Y)}{\pi_k(\gamma_{t_k} | Y)} \\
& \leq &  8q^{-(1+\delta)} \left[ \sum_{i=1}^{M_n - \nu_{t_k}} a(i) + \sum_{i= (1+8/\delta^{'})\nu{t_k} }^{M_n} a(i)  \right]  
          + 16q^{-(1+\delta)} p^{- \delta (1+ 8/\delta{'}) \nu_{t_k} } 	\\
& & \qquad \qquad \qquad \qquad \left(\mbox{where } \, a(i) = p^i p^{- (1 + \delta )8i/(8+\delta^{'})}\right) \\
& \leq &	16q^{-(1+\delta)} \left[ 1 +  \sum_{i=1}^{\infty} a(i) \right] \\
& = & 16q^{-(1+\delta)} \left[ 1 + \frac{1}{p^{(1+\delta)8 / (8+\delta^{'}) -1} -1} \right]
\end{eqnarray*}
and therefore,
$$\pi_k (\gamma_{t_k} | Y) \geq \left[ 1 + 16q^{-(1+\delta)} \left( 1 +  \frac{1}{p^{(1+\delta)8 / (8+\delta^{'}) -1} -1} \right) \right]^{-1}.$$
Now
\begin{align*}
\pi (\gamma_t |Y) & = \prod_{k=1}^q \pi_k (\gamma_{t_k} | Y) \\
 & \geq \left[ 1 + 16q^{-(1+\delta)} \left( 1 +  \frac{1}{p^{(1+\delta)8 / (8+\delta^{'}) -1} -1} \right) \right]^{-q} \\
 & \geq \exp \left[ -q 16q^{-(1+\delta)} \left( 1 +  \frac{1}{p^{(1+\delta)8 / (8+\delta^{'}) -1} -1} \right)  \right] \\
 & = \exp \left[ - 16q^{-\delta} \left( 1 +  \frac{1}{p^{(1+\delta)8 / (8+\delta^{'}) -1} -1} \right)  \right] \\
  &\to 1 \qquad \text{ as } p,q \to \infty.
\end{align*}

As $\mathbb{P}_0 (G_n) \to 1$, we get $\pi (\gamma_t |Y) \overset{\mathbb{P}_0}{\longrightarrow} 1$ as $n \to \infty$ .

Now we recall that $$\hat{\pi}_{jk} = P(b_{jk} \neq 0| Y) = \pi(\gamma_{jk} = 1| Y)$$ and $(\hat{\gamma}_{stepwise})_{jk}$ is defined as 1 if $\hat{\pi}_{jk} \geq 1/2$ and $0$ if ${\hat{\pi}}_{jk} <1/2.$ Let $E^{0} = \{(j,k) : (\gamma_t)_{jk} = 1\}$. \\
For $(j,k) \in E^0, \gamma = \gamma_t \implies \gamma_{jk} = 1$.
Thus for $(j,k) \in E^0,  \pi(\gamma_t|Y) \leq \pi(\gamma_{jk} = 1 | Y) = \hat{\pi}_{jk}$.\\
For $(j,k) \not \in E^0, \gamma = \gamma_t \implies \gamma_{jk} = 0$.
Thus for $(j,k) \not\in E^0,  \pi(\gamma_t|Y) \leq \pi(\gamma_{jk} = 0 | Y) = 1-\hat{\pi}_{jk}$. Then
\begin{align}\label{eq:therorem1a}
    \mathbb{P}_0 (\hat{\gamma}_{stepwise}=\gamma_t) &= \mathbb{P}_0((\hat{\gamma}_{stepwise})_{jk} = (\gamma_t)_{jk} \; \forall \; (j,k) ) \nonumber \\
    &= \mathbb{P}_0\left((\hat{\gamma}_{stepwise})_{jk} = 1 \;\forall \;(j,k)\in E^0 \text{ and } (\hat{\gamma}_{stepwise})_{jk} = 0 \; \forall \;(j,k)\not\in E^0\right) \nonumber\\
    &= \mathbb{P}_0\left(\hat{\pi}_{jk}\geq \frac{1}{2} \; \forall \; (j,k)\in E^0 \text{ and } \hat{\pi}_{jk} < \frac{1}{2} \; \forall \; (j,k)\not\in E^0\right) \nonumber\\
    &\geq \mathbb{P}_0\left(\pi(\gamma_t|Y) > \frac{1}{2}\right) \to 1 \qquad \text{as } n \to \infty.
\end{align}

\subsection{Proof of Theorem 1(b)}
\label{suppsec:theorem1b}

Next we prove the theorem on estimation consistency for $B$ where we show that there exists a constant $K >0$ such that $$\mathbb{E}_0 \left(\Pi_n\left\{\norm{B - B_0}_F > K\sqrt{\frac{\delta_n \log (pq)}{n}} \;|\; Y\right\}\right) \to 0 \text{ as } n\to \infty$$ where $\delta_n = \sum_{k = 1}^q \nu_{t_k}$ and $\Pi_n$ denotes the posterior distribution. 

For each $k\; (k= 1,\ldots,q)$, let $ \tilde{B}_{.k}$  denote the vector of dimension $\nu_{t_k}$ consisting of the non-zero entries of $B_{.k}$ given the true sparsity pattern $\gamma_t$ and $ (\tilde{B_0})_{.k}$ or simply $ \tilde{B}_{0,.k}$ denote the vector consisting of the non-zero entries of $(B_0)_{.k}$, the $k^{th}$ column of $B_0$. Let $B^{*}$ be a $p \times q$ matrix whose $k$-th column, $B_{.k}^{*}$ is given by the posterior mean 
$$ E(B_{.k}|\gamma_{t_k},Y).$$ Let $\tilde{B}^{*}_{.k}$ be the vector of dimension $\nu_{t_k}$ consisting of the non-zero entries of $B_{.k}^{*}$. Then it can be shown that $$\tilde{B}^{*}_{.k}  =  (X^{'}_{{t_k}}X_{{t_k}} + \frac{1}{\tau_1^2} I_{\nu_{t_k}})^{-1}X^{'}_{{t_k}}y_{.k}.$$ 

First we note that for any $\epsilon > 0$,

\begin{align}
   &\mathbb{E}_0 \left(\Pi_n\left\{\norm{B - B_0}_F > K\epsilon \;|\; Y\right\}\right)\nonumber\\
   & \leq \mathbb{E}_0 \left(\Pi_n\left\{\norm{B - B_0}_F > K\epsilon \;|\; Y, \gamma_t\right\}\right)  + \mathbb{E}_0{\Pi_n(\gamma \neq \gamma_t \;|\; Y)}
\end{align}

Thus it is enough to show that $$\mathbb{E}_0 \left(\Pi_n\left\{\norm{B - B_0}_F > K \sqrt{\frac{ \delta_n \log (pq)}{n}} \;|\; Y, \gamma_t\right\}\right) \to 0 \text{ as } n \to \infty$$ since it is proved earlier that $\Pi_n(\gamma \neq \gamma_t| Y) \overset{\mathbb{P}_0}{\longrightarrow} 0$. Also it can be shown that

\begin{align} \label{E0first}
    &\mathbb{E}_0 \left(\Pi_n\left\{\norm{B - B_0}_F > K \sqrt{\frac{ \delta_n \log (pq)}{n}} \;|\; Y, \gamma_t\right\}\right)\nonumber\\ 
    & \leq q \max_{1\leq k \leq q}\mathbb{E}_0 \left(\Pi_n\left\{\norm{\tilde{B}_{.k} - \tilde{B}_{0,.k}}_2 \geq K \epsilon_{n,k} \;|\; Y, \gamma_t\right\}\right)
\end{align} where $ \epsilon_{n,k} = \sqrt{\frac{\nu_{t_k}\log (pq)}{n}}$ and the maximum is over those $k$ for which $\nu_{t_k} \geq 1$. Further,

\begin{align}\label{Ec3}
 &\mathbb{E}_0 \left(\Pi_n\left\{\norm{\tilde{B}_{.k} - \tilde{B}_{0,.k}}_2 \geq K\epsilon_{n,k} \;|\; Y, \gamma_t\right\}\right)\nonumber\\
 & \leq \mathbb{E}_0 \left(\Pi_n\left\{\norm{\tilde{B}_{.k} - \tilde{B}^{*}_{.k}}_2 \geq \frac{K}{2} \epsilon_{n,k} \;|\; Y, \gamma_t)\right\}\right) + \mathbb{P}_0 \left(\norm{\tilde{B}^{*}_{.k} - \tilde{B}_{0,.k}}_2 \geq \frac{K}{2} \epsilon_{n,k} \right).
\end{align}

The posterior distribution of $\tilde{B}_{.k}$ and $\sigma^2_k$ are given by 

    $$\tilde{B}_{.k} | \gamma_t, \sigma^2_k, Y \overset{ind}{\sim} \mathcal{N}_{\nu_{t_k}}\left(\tilde{B}^{*}_{.k}, \sigma^2_{k} \left(X_{t_k}' X_{t_k} + \frac{1}{\tau^2_1}I_{\nu_{t_k}}\right)^{-1} \right)$$

$$\sigma^2_k | \gamma_t, Y \overset{ind}{\sim} \text{ Inv-Gamma } \left(\frac{n}{2} + \alpha, \frac{y_{.k}'(I - \tilde{P}_{t_k})y_{.k} + 2\beta}{2}\right)$$

Now \begin{align}\label{ecnorm}
    \norm{\tilde{B}_{.k} - \tilde{B}^{*}_{.k}}_2 &= \norm{\sigma_k(X_{t_k}'X_{t_k} + \frac{1}{\tau_1^2}I_{\nu_{t_k}})^{-1/2}(X_{t_k}'X_{t_k} + \frac{1}{\tau_1^2}I_{\nu_{t_k}})^{1/2} \frac{\tilde{B}_{.k} - \tilde{B}^{*}_{.k}}{\sigma_k}}_2\nonumber\\
    & = \sigma_k\norm{ \left(X_{t_k}'X_{t_k} + \frac{1}{\tau_1^2}I_{\nu_{t_k}}\right)^{-1/2} z}_2 \nonumber\\
    & \leq \sigma_k \lambda_{max} \left(X_{t_k}'X_{t_k} + \frac{1}{\tau_1^2}I_{\nu_{t_k}}\right)^{-1/2} \norm{z}_2 
\end{align}
where $z$ is a $\nu_{t_k} \times 1$ standard normal vector.

Now for any $M^{*} > 0$, using (\ref{ecnorm}) the first quantity in the RHS of (\ref{Ec3}) can be written as 

\begin{align}\label{Ec4}
    &\mathbb{E}_0 \left(\Pi_n\left\{\norm{\tilde{B}_{.k} - \tilde{B}^{*}_{.k}}_2 \geq \frac{K}{2}\epsilon_{n,k} \;|\; Y, \gamma_t\right\}\right)\nonumber\\
    & \leq \mathbb{P}_0\left(\lambda_{min}\left(\frac{X_{t_k}'X_{t_k}}{n}\right) < \lambda_1/2\right) + \mathbb{P}_0\left(\norm{z}_2 \geq \sqrt{n}\epsilon_{n,k}K \frac{\sqrt{\lambda_1/2}}{2M^{*}}\right) + \mathbb{E}_0 \Pi_n(\sigma_k > M^{*} | Y, \gamma_t)
\end{align}

We set 
\begin{align*}
    G^{*}_{1,n} := {\bigcap}_{k = 1}^{q}\left\{\norm{\frac{X_{t_k}^{'}X_{t_k}}{n}-R_{t_k}}_{2}\leq 2c_1\sqrt{\frac{\nu_{t_k}\log (pq)}{cn}}\right\}.
\end{align*}
Similar to the set $G_{1,n}$ defined above, it can be shown that $\mathbb{P}_0(G^{*}_{1,n}) \geq 1- \frac{2}{q^2(p^2-1)} $ for some suitably chosen constants $c_1$ and $c$. Then for a fixed $k$ on the set $G^{*}_{1,n}$,

\begin{align}
    &\quad\norm{\frac{X_{t_k}^{'}X_{t_k}}{n}-R_{t_k}}_{2} \leq 2c_1\sqrt{\frac{\nu_{t_k}\log (pq)}{cn}}\nonumber\\
    &  \implies \lambda_{min}\left(\frac{X_{t_k}'X_{t_k}}{n}\right) > \lambda_{min}(R_{t_k}) - 2c_1\sqrt{\frac{\nu_{t_k}\log (pq)}{cn}} > \lambda_1/2\nonumber
\end{align} and hence 

\begin{align}
    \mathbb{P}_0\left(\lambda_{min}\left(\frac{X_{t_k}'X_{t_k}}{n}\right) < \lambda_1/2\right) \leq 1-\mathbb{P}_0(G^{*}_{1,n}) \leq \frac{2}{q^2(p^2-1)}.\nonumber
\end{align}

To bound the third term in the RHS of (\ref{Ec4}) we recall the distribution of $\sigma^2_k \;|\; Y, \gamma_t$ and use a slight modification of Remark S1.1 of \cite{ghoshstrong} with $\log p$ replaced by $\log (pq)$ wherein we show both the shape and scale of the Inverse gamma distribution are of appropriate order.

We set $G^{(k)}_{2,n}:= \{\varepsilon_{.k}^{'}P_{{t_k}}\varepsilon_{.k} \leq 8\sigma_{k0}^2\nu_{t_k} \log{(pq)}\}$ and note that $\mathbb{P}_0(G^{(k)}_{2,n}) \geq 1- 2(pq)^{-3/2} $. Now on $G^{(k)}_{2,n} \cap G_{3,n}$, by arguments used in proving equation (\ref{lemma2.num2})

\begin{align}
    \frac{y_{.k}'(I - \tilde{P}_{t_k})y_{.k} + \beta}{2n} &= \frac{\epsilon_{.k}'(I - \tilde{P}_{t_k})\epsilon_{.k}}{2n} +  \frac{y_{.k}'(P_{t_k} - \tilde{P}_{t_k})y_{.k}}{2n} + \frac{\beta}{2n}\nonumber\\
    & \leq \frac{\epsilon_{.k}'\epsilon_{.k}}{2n} + o(1) \nonumber\\
    & \leq \frac{3(1+\delta^{'}\sigma_{max}^2)}{2} \nonumber
\end{align} and also  $\frac{n}{2} + \alpha \sim n$.
Then by choosing $M^{*}$ properly we can make $\mathbb{E}_0 \Pi_n(\sigma_k > M^{*} | Y, \gamma_t) < (pq)^{-2}$ for all large $n$.

Using Corollary 5.35 in \cite{vershynin2018highdim} one can show that
\begin{align}
    \mathbb{P}_0\left(\norm{z}_2 \geq \sqrt{\nu_{t_k}} + t +1\right)\leq 2e^{-t^2/2} \text{ for all } t>0.\nonumber
\end{align} Then setting $t = 2\sqrt{\log (pq)}$ and choosing $K$ such that $ \frac{K\sqrt{\lambda_1}}{4\sqrt{2}M^{*}} > 2$ one can show that 

\begin{align}
    \mathbb{P}_0\left(\norm{z}_2 \geq \sqrt{n}\epsilon_{n,k}K \frac{\sqrt{\lambda_1/2}}{2M^{*}}\right)\leq 2(pq)^{-2}
\end{align}

Thus  we get 

\begin{align}\label{ecfirstpart}
     &\mathbb{E}_0 \left(\Pi_n\left\{\norm{\tilde{B}_{.k} - \tilde{B}^{*}_{.k}}_2 \geq \frac{K}{2} \epsilon_{n,k} \;|\; Y, \gamma_t)\right\}\right)\nonumber\\ 
    & \leq  \frac{5}{q^2(p^2-1)}
\end{align}

Finally we obtain an upper bound for the second term, $\mathbb{P}_0 \left(\norm{\tilde{B}^{*}_{.k} - \tilde{B}_{0,.k}}_2 \geq \frac{K}{2} \epsilon \;|\; Y, \gamma_t\right)$ in (\ref{Ec3}). To that end we first note for each $k$

\begin{align}\label{proof1b.1}
    \norm{{\tilde{B}}^{*}_{.k} - {\tilde{B}}_{0,.k}}_2 & =\norm{(X^{'}_{{t_k}}X_{{t_k}} + \frac{1}{\tau_1^2} I_{\nu_{t_k}})^{-1}X^{'}_{{t_k}}(X_{{t_k}}{\tilde{B}}_{0,.k} + \varepsilon_{.k}) - {\tilde{B}}_{0,.k}}_2 \nonumber\\
   & \leq  {\norm{\Big\{\left(X^{'}_{{t_k}}X_{{t_k}} + \frac{1}{\tau_1^2} I_{\nu_{t_k}}\right)^{-1}X^{'}_{{t_k}}X_{{t_k}} - I_{\nu_{t_k}}\Big\}{\tilde{B}_{0,.k}}}}_2 \nonumber
   \\ &\qquad + {\norm{\left(X^{'}_{{t_k}}X_{{t_k}} + \frac{1}{\tau_1^2} I_{\nu_{t_k}}\right)^{-1}X^{'}_{{t_k}} \varepsilon_{.k}}}_2
   \end{align}
 
 Now, on $G^{*}_{1,n}$ using Assumption \ref{assumpA:tau1} we have
 
 \begin{align} \label{proof1b.2}
     &\quad {\norm{\Big\{\left(X^{'}_{{t_k}}X_{{t_k}} + \frac{1}{\tau_1^2} I_{\nu_{t_k}}\right)^{-1}X^{'}_{{t_k}}X_{{t_k}} - I_{\nu_{t_k}}\Big\}{\tilde{B}_{0,.k}}}}_2 \nonumber\\
     & =  {\norm{\left( \tau_1^2 X^{'}_{{t_k}}X_{{t_k}} + I_{\nu_{t_k}} \right)^{-1}{\tilde{B}}_{0,.k}}}_2\nonumber\\
      &=  \norm{\left( X^{'}_{{t_k}}X_{{t_k}} + \frac{1}{\tau_1^2}I_{\nu_{t_k}} \right)^{-1}\frac{1}{\tau_1^2}{\tilde{B}}_{0,.k}}_2\nonumber\\
       & \leq \norm{\left( X^{'}_{{t_k}}X_{{t_k}} + \frac{1}{\tau_1^2}I_{\nu_{t_k}} \right)^{-1}}_2 \norm{\frac{1}{\tau_1^2}{\tilde{B}}_{0,.k}}_2 \nonumber \\
       & \leq  \frac{2}{\lambda_1n\tau_1^2} \max_{1\leq k \leq q}\norm{{\tilde{B}}_{0,.k}}_2 \nonumber \\
      & = \frac{2}{\lambda_1\sqrt{n\tau_1^2}} \frac{\max_{1\leq k \leq q}\norm{{\tilde{B}}_{0,.k}}_2}{\sqrt{\tau_1^2\log(pq)}}
      \sqrt{\frac{\log(pq)}{n} } \nonumber\\
      & = \sqrt{\frac{ \log(pq)}{n}} o(1).
 \end{align}  and on $G^{*}_{1,n} \cap G^{(k)}_{2,n}$ we have 
 \begin{align} \label{proof1b.3}
     &\quad {\norm{\left(X^{'}_{{t_k}}X_{{t_k}} + \frac{1}{\tau_1^2} I\right)^{-1}X^{'}_{{t_k}} \varepsilon_{.k}}}_2 \nonumber\\
     & = \norm{\left( X^{'}_{{t_k}}X_{{t_k}} + \frac{1}{\tau_1^2} I_{{t_k}}\right)^{-1}}_2 \norm {X^{'}_{{t_k}}\varepsilon_{.k}}_2 \nonumber\\
     & \leq \frac{2}{\lambda_1n}  \norm {X^{'}_{{t_k}}\varepsilon_{.k}}_2 \nonumber\\
     & \leq \frac{2}{\lambda_1n} \sqrt{n}  \norm{\left(\frac{1}{n} X^{'}_{{t_k}}X_{{t_k}} \right)^{1/2}}_2 \sqrt{\varepsilon_{.k}^{'}P_{{t_k}}\varepsilon_{.k}} \nonumber\\
     & \leq \left( \frac{3\sqrt{\lambda_2}}{\lambda_1 \sqrt{n}}\right)\sqrt{8\sigma_{k0}^2\nu_{t_k} \log{(pq)}}\nonumber\\
     & \leq  \frac{3\sigma_{max}\sqrt{\lambda_2}}{\lambda_1} \sqrt{\frac{8\nu_{t_k}\log(pq)}{n}}.
    \end{align}

From (\ref{proof1b.1}) - (\ref{proof1b.3}) we have on the set $G^{*}_{1,n}\cap G^{(k)}_{2,n} $,
\begin{align} \label{B*bound}
    \norm{{\tilde{B}}^{*}_{.k} - {\tilde{B}}_{0,.k}}_2 &\leq  \sqrt{\frac{\log(pq)}{n}}\; o(1) +\frac{3\sigma_{max}\sqrt{\lambda_2}}{\lambda_1}\sqrt{\frac{8\nu_{t_k}\log(pq)}{n}} \nonumber\\
    & \leq \frac{12\sigma_{max}\sqrt{\lambda_2}}{\lambda_1}\sqrt{\frac{\nu_{t_k}\log(pq)}{n}} .
\end{align}

Thus, 
\begin{align}\label{ecsecondpart}
    \mathbb{P}_0\left(\norm{{\tilde{B}}^{*}_{.k} - {\tilde{B}}_{0,.k}}_2 \geq \frac{12\sigma_{max}\sqrt{\lambda_2}}{\lambda_1}\epsilon_{n,k}\right) & \leq \frac{2}{q^2(p^2-1)} + \frac{2}{p^{3/2}q^{3/2}}
\end{align} and hence from (\ref{E0first}), (\ref{ecfirstpart}) and (\ref{ecsecondpart}), choosing $K > \frac{24\sigma_{max}\sqrt{\lambda_2}}{\lambda_1} $ we get

\begin{align} 
    &\mathbb{E}_0 \left(\Pi_n\left\{\norm{B - B_0}_F > K\sqrt{\frac{ \delta_n \log (pq)}{n}} \;|\; Y, \gamma_t\right\}\right)\nonumber\\ 
    & \leq q \left(\frac{5}{q^2(p^2 -1)} + \frac{2}{q^2(p^2-1)} + \frac{2}{p^{3/2}q^{3/2}} \right) \to 0 \text{ as } n \to \infty
\end{align}

\section{Details of proof of Theorem 1(c)}
\label{suppsec:theorem1b}
\subsection{Assumptions required for Theorem 1(c)}

\begin{assumpB}
$k_n\sqrt{\frac{d_t\log(pq)}{n}}\to 0$ \; as $n\to\infty$ \\where $d_t$ denotes the number of non-zero entries in the upper triangle of $\Omega_0$.
\end{assumpB}

\begin{assumpB}
There exists $\tilde{\varepsilon}_0>0$ such that 
$$ \tilde{\varepsilon}_0 \leq eig_{min}(\Omega_0) \leq eig_{max}(\Omega_0) \leq \frac{1}{\tilde{\varepsilon}_0}. $$
\end{assumpB}

\begin{assumpB}
$\frac{\log n +d_t^2k_n\log(pq)}{n\rho_n^2}\to 0$,  \; as $n \to \infty$ \\where $\rho_n$ denotes the smallest absolute value among all the off-diagonal entries of $\Omega_0$.
\end{assumpB}

\begin{assumpB}
We choose $q_2 = (pq)^{-a_2d_t^2k_n^2}$ where $a_2 = 16\frac{\max(1,c_0)}{\min(1,\tilde{\varepsilon}_0)}$ \\where $c_0 > 0$ is an appropriately chosen constant.
\end{assumpB}

As in \cite{KOR:2015} and \cite{peng2009partial} we assume the existence of accurate estimates $\hat{\omega}_{jj}$ of the diagonal elements $\omega_{jj}, \; j = 1,...,q$ and some constant $C>0$ such that
$$ \max\limits_{1\leq j\leq q}|\hat{\omega}_{jj}-\omega_{jj}| \leq Ck_n\sqrt{\frac{\log(pq)}{n}}.$$

\subsection{Proof of Theorem 1(c)}
 A Bayesian approach has been developed in \cite{jalali2020bconcord} for sparse estimation of the error precision matrix $\Omega$ in a high-dimensional setting using spike and slab priors and the regression based generalized likelihood function of \cite{KOR:2015}. Their results are applicable in our case if the regression coefficient matrix $B$ is assumed to known. The expression for the posterior distribution of the sparsity pattern of $\Omega$, as obtained in \cite{jalali2020bconcord} depends on the estimate $S = \frac{1}{n}(Y-XB)^T(Y-XB)$ of the variance-covariance matrix $\Omega^{-1}$. However, in our case $B$ is unknown and so we estimate it by $\hat{B}$ as obtained in Step 1 of the Stepwise method and replace $S$ by $\hat{S} = \frac{1}{n}(Y-X\hat{B})^T(Y-X\hat{B})$. Below we provide a bound for $\hat{B}-B$.
 
 Let $B_0$ denote the true value of $B$ and $ \tilde{B}_{0,.k}$, $B_{0,.k}$ and $B^{*}$ be as defined above in Section \ref{suppsec:theorem1a}. 
Our final estimate $\hat{B}$ of $B$ is obtained from $B^{*}$, replacing $\gamma_t$ by its estimate $\hat{\gamma}_{stepwise}$.

Now,
\begin{align} \label{proof2.1}
    {\norm{B^{*}- B_0}}_1 &= \max_{1\leq k \leq q} {\norm{B^{*}_{.k} - B_{0,.k}}}_1 \nonumber\\
   &= \max_{1\leq k \leq q} {\norm{{\tilde{B}}^{*}_{.k} - {\tilde{B}}_{0,.k}}}_1 \nonumber\\
   &= \max_{1\leq k \leq q} {\norm{(X^{'}_{{t_k}}X_{{t_k}} + \frac{1}{\tau_1^2} I_{\nu_{t_k}})^{-1}X^{'}_{{t_k}}(X_{{t_k}}{\tilde{B}}_{0,.k} + \varepsilon_{.k}) - {\tilde{B}}_{0,.k}}}_1 \nonumber\\
   &\leq \max_{1\leq k \leq q} {\norm{\Big\{\left(X^{'}_{{t_k}}X_{{t_k}} + \frac{1}{\tau_1^2} I_{\nu_{t_k}}\right)^{-1}X^{'}_{{t_k}}X_{{t_k}} - I_{\nu_{t_k}}\Big\}{\tilde{B}_{0,.k}}}}_1 \nonumber
   \\ &\qquad + \max_{1\leq k \leq q} {\norm{\left(X^{'}_{{t_k}}X_{{t_k}} + \frac{1}{\tau_1^2} I_{\nu_{t_k}}\right)^{-1}X^{'}_{{t_k}} \varepsilon_{.k}}}_1
   \end{align}
 
and on $G^{*}_{1,n}$ we have
 
 \begin{align} \label{proof2.2}
     &\quad\max_{1\leq k \leq q} {\norm{\Big\{\left(X^{'}_{{t_k}}X_{{t_k}} + \frac{1}{\tau_1^2} I_{\nu_{t_k}}\right)^{-1}X^{'}_{{t_k}}X_{{t_k}} - I_{\nu_{t_k}}\Big\}{\tilde{B}_{0,.k}}}}_1 \nonumber\\
     & = \max_{1\leq k \leq q} \norm{\left( \tau_1^2 X^{'}_{{t_k}}X_{{t_k}} + I_{\nu_{t_k}} \right)^{-1}{\tilde{B}}_{0,.k}}\nonumber\\
      &= \max_{1\leq k \leq q} \norm{\left( X^{'}_{{t_k}}X_{{t_k}} + \frac{1}{\tau_1^2}I_{\nu_{t_k}} \right)^{-1}\frac{1}{\tau_1^2}{\tilde{B}}_{0,.k}}_1\nonumber\\
       &\leq \max_{1\leq k \leq q} \sqrt{k_n} \norm{\left( X^{'}_{{t_k}}X_{{t_k}} + \frac{1}{\tau_1^2}I_{\nu_{t_k}} \right)^{-1}\frac{1}{\tau_1^2}{\tilde{B}}_{0,.k}}_2\nonumber\\
       & \leq \sqrt{k_n} \max_{1\leq k \leq q}\norm{\left( X^{'}_{{t_k}}X_{{t_k}} + \frac{1}{\tau_1^2}I_{\nu_{t_k}} \right)^{-1}}_2 \norm{\frac{1}{\tau_1^2}{\tilde{B}}_{0,.k}}_2 \nonumber \\
       & \leq \sqrt{k_n} \frac{2}{\lambda_1n\tau_1^2} \max_{1\leq k \leq q}\norm{{\tilde{B}}_{0,.k}}_2 \nonumber \\
      & = \sqrt{k_n} \frac{2}{\lambda_1\sqrt{n\tau_1^2}} \frac{\max_{1\leq k \leq q}\norm{{\tilde{B}}_{0,.k}}_2}{\sqrt{\tau_1^2\log(pq)}}
      \sqrt{\frac{\log(pq)}{n} } \nonumber\\
      & = \sqrt{\frac{k_n \log(pq)}{n}} o(1) 
 \end{align}  
by Assumption \ref{assumpA:tau1} and on $G_{1,n}\cap G_{2,n}$
 \begin{align} \label{proof2.3}
     &\quad\max_{1\leq k \leq q} {\norm{\left(X^{'}_{{t_k}}X_{{t_k}} + \frac{1}{\tau_1^2} I\right)^{-1}X^{'}_{{t_k}} \varepsilon_{.k}}}_1 \nonumber\\
     & = \sqrt{k_n} \max_{1\leq k \leq q}\norm{\left( X^{'}_{{t_k}}X_{{t_k}} + \frac{1}{\tau_1^2} I_{{t_k}}\right)^{-1}}_2 \norm {X^{'}_{{t_k}}\varepsilon_{.k}}_2 \nonumber\\
     & \leq \sqrt{k_n}\frac{2}{\lambda_1n} \max_{1\leq k \leq q} \norm {X^{'}_{{t_k}}\varepsilon_{.k}}_2 \nonumber\\
     & \leq \sqrt{k_n}\frac{2}{\lambda_1n} \sqrt{n} \max_{1\leq k \leq q} \norm{\left(\frac{1}{n} X^{'}_{{t_k}}X_{{t_k}} \right)^{1/2}}_2 \sqrt{\varepsilon_{.k}^{'}P_{{t_k}}\varepsilon_{.k}} \nonumber\\
     & \leq \sqrt{k_n}\left( \frac{3\sqrt{\lambda_2}}{\lambda_1 \sqrt{n}}\right)\sqrt{8\sigma_{k0}^2\nu_{t_k} \log{(pq)}}\nonumber\\
     & \leq  \frac{3\sigma_{max}\sqrt{\lambda_2}}{\lambda_1} \sqrt{\frac{8k_n^2\log(pq)}{n}}.
    \end{align}
From (\ref{proof2.1}) - (\ref{proof2.3}) we have on the set $G_n$,
\begin{align} \label{B*bound}
    {\norm{B^{*}- B_0}}_1 &\leq  \sqrt{\frac{k_n\log(pq)}{n}}\; o(1) +\frac{3\sigma_{max}\sqrt{\lambda_2}}{\lambda_1}\sqrt{\frac{8k_n^2\log(pq)}{n}} \nonumber\\
    & = k_n\sqrt{\frac{\log(pq)}{n}} \; O(1).
\end{align}

Since $\hat{B}$ is obtained from $B^{*}$, replacing $\gamma_t$ by its estimate $\hat{\gamma}_{stepwise}$, it follows from Theorem 1(a) and (\ref{B*bound}) that with $\mathbb{P}_0$-probability tending to one, 

\begin{align}\label{eq:Bbound}
    {\norm{\hat{B}- B_0}}_1 \leq k_n\sqrt{\frac{\log(pq)}{n}} \; O(1)
\end{align}

 Next we provide a bound for $\hat{S}-S$ using the bound provided in (\ref{eq:Bbound}).
We note that $\hat{S} - S$ can be decomposed as follows:
\begin{align}\label{diff:S}
    \hat{S} - S = \frac{2}{n} \varepsilon ^{'} X(B_0 - \hat{B}) + (B_0 - \hat{B})^{'}\left(\frac{X^{'}X}{n}\right)(B_0 - \hat{B}).
\end{align}
We have
\begin{align}
    \norm{\frac{1}{n} X^{'}\varepsilon}_{max} &= \frac{1}{n} \max_{i,j} \left| X_{.i}^{'}\varepsilon_{.j}\right| \nonumber
\end{align}
and for each $(i,j)$ on $G_{1,n}\cap G_{3,n}$,
\begin{align}
    \frac{1}{n} \left| X_{.i}^{'}\varepsilon_{.j}\right| &\leq \frac{1}{n} \norm{X_{.i}}_2\norm{\varepsilon_{.j}}_2\nonumber\\
    &\leq \sqrt{\left(R_{ii} + 32\lambda_2 \sqrt{k_0}\right)} \sqrt{(1+\delta^{'})\sigma_{max}^2} \nonumber\\
    &\leq \left(\sqrt{\lambda_{2}} + \sqrt{32\lambda_2}\;k_0^{1/4}\right)\sqrt{(1+\delta^{'})\sigma_{max}^2} \;. \nonumber
\end{align}
Hence, $$ \norm{\frac{1}{n} X^{'}\varepsilon}_{max} \leq (\sqrt{\lambda_{2}} +\sqrt{32\lambda_2}\;k_0^{1/4})\sqrt{(1+\delta^{'})\sigma_{max}^2}.$$
We also have
\begin{align}
    \norm{\frac{X^{'}X}{n} }_{max} &= \frac{1}{n} \max_{i,j} \left| X_{.i}^{'} X_{.j}\right| \nonumber
\end{align}
and for each $(i,j)$,
\begin{align*}
     \frac{1}{n} \left| X_{.i}^{'} X_{.j}\right| &\leq \frac{1}{n} \norm{X_{.i}}_2\norm{X_{.j}}_2.
\end{align*}
For each $i$ on $G_{1,n}$,
\begin{align}
    \frac{1}{n} \norm{X_{.i}}^2 & \leq \left(R_{ii} + 32\lambda_2\sqrt{k_0}\right)\nonumber\\
    &\leq \left(\lambda_2 + 32\lambda_2\sqrt{k_0}\right).\nonumber
\end{align}
So, for each $(i,j)$,
\begin{align}
    \frac{1}{n} \left| X_{.i}^{'} X_{.j}\right| &\leq \lambda_2 + 32\lambda_2\sqrt{k_0}\nonumber
\end{align} and hence,
$$ \norm{\frac{X^{'}X}{n} }_{max} \leq \lambda_2 + 32\lambda_2\sqrt{k_0}.$$
Now from equation (\ref{diff:S}), on $G_{1,n}\cap G_{2,n}$,
\begin{align}
   \norm{ \hat{S} - S}_{max} &\leq 2 \norm{\hat{B} - B_0}_1 \norm{\frac{1}{n} X^{'}\varepsilon}_{max} + 2 \norm{\hat{B} - B_0}_1^2 \norm{\frac{1}{n} X^{'}X}_{max}\nonumber\\
   &\leq 2 \norm{\hat{B} - B_0}_1 \left(\sqrt{\lambda_{2}} +\sqrt{32\lambda_2}\;k_0^{1/4}\right)\sqrt{(1+\delta^{'})\sigma_{max}^2} + 2 \norm{\hat{B} - B_0}_1^2 \left(\lambda_2 + 32\lambda_2\sqrt{k_0}\right)\nonumber\\
   &\leq c  \sqrt{\frac{k_n^2\log{(pq)}}{n}} 
\end{align}
for some appropriate constant c.

 Using the fact that $\hat{B}$ and $\hat{S}$ are good approximations of $B$ and $S$ respectively and using straightforward modifications of the arguments of \cite{jalali2020bconcord} and some additional arguments we show that the posterior distribution of $\eta$ is consistent in the sense that $$\pi(\eta_t |\hat{B},\hat{\omega}_{11},\ldots,\hat{\omega}_{qq}, Y) \to 1 \quad \text{ as } n \to \infty.$$ Finally the part (c) of Theorem 1 follows from this result using arguments similar to that leading to (\ref{eq:therorem1a}) above in the proof of part (a). 

\section{Detailed algorithm of JRNS}
\label{suppsec:detailedAlgo1}

    \scalebox{0.65}{
\begin{minipage}{1.5\textwidth}
\begin{algorithm}[H]
	\caption{Joint Regression Network Selector} 
	\begin{algorithmic}[1]
	\Procedure{JRNS($B,\Omega,X,Y$)}{}
	   \State $M_1 \gets X^TY\Omega^2$
	   \State $M_2 = B^TX^TX$
		\For {$r=1,2,\ldots,p$}   \Comment{updating matrix, $B$}
			\For {$s=1,2,\ldots,q$}
			    \If{$b_{rs}=0$}
		        \State $\eta = \frac{1}{\tau_1^2} \gets Gamma(10^{-4},10^{-8})$
		        \Else
		        \State $\eta = \frac{1}{\tau_1^2} \gets Gamma(10^{-4}+0.5,10^{-8}+0.5B_{rs}^2)$
		        \EndIf
				\State $C_1\gets \frac{1}{\tau_1^2 }+ (\Omega^2)_{ss}(X^TX)_{rr}$  
				 \State $C_2 \gets M_{1,rs} -(M_{2,.r})^{T}(\Omega^2)_{.s} +b_{rs}(X^TX)_{rr}(\Omega^2)_{ss}$
				\State $P(0) \gets 1, P(1)\gets \frac{q_1}{(1-q_1)\tau_1 \sqrt{C_1}} \exp (C_2^2/2C_1)$
				\If {$P(1) \gets \infty$}
				\State $b_{rs} \gets N(\frac{C_2}{C_1},\frac{1}{C_1})$
				\Else 
				\State $P \gets {P/\text{sum}(P)}$
				\State $b_{rs} \sim P(0) \delta_0 + P(1) N(\frac{C_2}{C_1}, \frac{1}{C_1})$ \Comment{sampling from the mixture distribution}
				\EndIf
			\State $\text{update } M_{2,s.}$	
			\EndFor
			\EndFor
		\State $E = (Y-XB)$
		\State $S = E^TE$	
		\For {$s=1,2,\ldots,q-1$} \Comment{updating off-diagonals of $\Omega$}
		   \For {$t=s+1,2,\ldots,q$}
		   
	        	\If{$\omega_{st}=0$}
		        \State $\psi = \frac{1}{\tau_2^2} \gets Gamma(10^{-4},10^{-8})$
		        \Else
		        \State $\psi = \frac{1}{\tau_2^2} \gets Gamma(10^{-4}+0.5,10^{-8}+0.5\Omega_{st}^2)$
		        \EndIf
		        \State $D_1 \gets S_{ss} + S_{tt} + \psi$
		        \State $D_2 \gets \Omega_{.s}^TS_{.t}+\Omega_{.t}^TS_{.s} -D_1\omega_{st}$
		        \State $P(0) \gets 1, P(1) \gets \sqrt{\frac{\psi}{D_1}}\frac{q_2}{1-q_2}\exp{\left[\frac{n^2b^2}{2D_1}\right]}$
				\If {$P(1) \gets \infty$}
				\State $\omega_{st} \gets N\left(-\frac{D_2}{D_1},\frac{1}{D_1}\right)$
				\Else 
				\State $P \gets {P/\text{sum}(P)}$
				 \State $\omega_{st} \sim P(0) \delta_0 + P(1) N\left(-\frac{D_2}{D_1},\frac{1}{D_1}\right)$ \Comment{sampling from the mixture distribution}
				\EndIf
		    \EndFor 
		\State $\lambda \gets \text{Gamma}(r+1,\omega_{ss}+s) $ \Comment{Metropolis-within-Gibbs for updating diagonals of $\Omega$}
		\State $\text{mode} = \frac{\sqrt{ (\Omega_{.s}^TS_{.s}-\omega_{ss}S_{ss}+ \lambda/n)^2 + 4S_{ss}n}- (\Omega_{.s}^TS_{.s}-\omega_{ss}S_{ss}+ \lambda/n)}{2S_{ss}}$
		\State $v \gets N(\text{mode},0.001)$ \Comment{choosing proposed value}
		\State $\rho = min\{1, \exp[n\log(v/\omega_{ss}) -\frac{1}{2}S_{ss}(v^2 - \omega_{ss}^2) - bb(v-\omega_{ss})]\}$ \Comment{calculating acceptance probability}
		\State $\omega_{ss} \gets \text{sample}(\{v,\omega_{ss}\},1,\{\rho,(1-\rho)\})$ \Comment{choosing proposed value $v$ with probability $\rho$}
		\EndFor
		\State Repeat Steps 42 - 46 for $s = q$
		\State update $\Omega^2$
		\State \textbf{return} $B$
		\State \textbf{return} $\Omega$
	\EndProcedure	
	\end{algorithmic}
	\label{supp:detailedAlgo1}
\end{algorithm}
\end{minipage}
}

\section{Additional simulation results}
\label{suppsec:extrasim}
\subsection{Sparsity selection performance}
In the main paper we have presented the average MCC values for sparsity estimation in both $B$ and $\Omega$ based on 200 replicated datasets for all the methods, namely JRNS(Joint), Stepwise approach, DPE, DCPE and BANS. We considered a variety of combinations of $(n,p,q)$ listed in Table 1 of the main paper. Here we present tables with average values of sensitivity and specificity for sparsity estimation in both $B$ and $\Omega$. We also present average values of relative norm for sparsity estimation in $B$ which is defined as $\frac{\norm{\hat{B}-B_0}_F}{\norm{B_0}_F}$ where $\norm{.}_F$ denotes the Frobenius norm of a matrix.  

For sparsity selection in $\Omega$, we see that the JRNS and Stepwise approaches perform much better than DCPE and DPE with respect to average sensitivity values. For specificity all the methods have values very close to 1. Though BANS performs competitively with respect to specificity, in terms of sensitivity, it is clearly outperformed by the JRNS and Stepwise methods. For sparsity selection in $B$, the JRNS algorithm shows the best performance in most of the settings and the values for the different measures for DPE and DCPE are very close to the corresponding values for JRNS and Stepwise algorithms. For BANS the sensitivity values improve here but it still has sub-optimal performance with respect to relative norm of the estimate of $B$.

\begin{table}[H]
    \centering
    \caption{Sensitivity values for sparsity selection in $\Omega$ averaged over 200 replicates of $\hat{\Omega}$.`TO' is short for `Timeout' which implies that the method could not complete the required number of iterations in 4 days. `PDE' refers to an error caused by intermediate matrices not being positive definite (PD). }
    \resizebox{1.0\textwidth}{!}{%
    \begin{tabular}{*{9}{c}}
        \toprule
          Sparsity & \multicolumn{2}{c}{Cases}& Joint &  Stepwise & DPE & DCPE &  BANS & HSGHS \\
        \cmidrule(lr){2-3} 
       & $n$ & $(p,q)$ & & &  &  & &  \\ 
         \midrule
         \multirow{4}{*}{$(p/5,q/5)$}& 100 & $(30,60)$ & 0.6817 & 0.6967 & 0.2142 & 0.2233  & 0.4817 & PDE \\
         &100 & $(60,30)$ & 0.7267 & 0.7383 & 0.1550 & 0.0283 & 0.2650 &  0.7150 \\
         &150 & $(200,200)$&  0.9290 & 0.9340 & 0.7876 & 0.7780 & TO & TO\\
           &150 & $(300,300)$ & 0.9450 & 0.8753 & TO & 0.7447  & TO & TO\\
           \midrule
           \multirow{ 2}{*}{$(p/30,q/5)$} & 100 & $(200,200)$& 0.8480 & 0.8620 & 0.2924 & 0.3303 & TO & TO\\
           &200 & $(200,200)$ & 0.9780 & 0.9805 & 0.9265 & 0.9323  & TO & TO\\
        \bottomrule
    \end{tabular}
     }%
\end{table}

\begin{table}[H]
    \centering
    \caption{Specificity values for sparsity selection in $\Omega$ averaged over 200 replicates of $\hat{\Omega}$. `TO' is short for `Timeout' which implies that the method could not complete the required number of iterations in 4 days. `PDE' refers to an error caused by intermediate matrices not being positive definite (PD). }
    \resizebox{1.0\textwidth}{!}{%
    \begin{tabular}{*{9}{c}}
        \toprule
          Sparsity & \multicolumn{2}{c}{Cases}& Joint &  Stepwise & DPE & DCPE &  BANS & HS-GHS \\
        \cmidrule(lr){2-3} 
       & $n$ & $(p,q)$ & & &  &  & & \\ 
         \midrule
         \multirow{4}{*}{$(p/5,q/5)$}& 100 & $(30,60)$ & 0.9997506 & 0.9996344 & 1.0000000 & 1.0000000 & 0.9961338 & PDE \\
         &100 & $(60,30)$ & 0.9997338 & 0.9995949 & 0.9999884 & 1.0000000 & 0.9963194 & 0.9997569 \\
         &150 & $(200,200)$& 0.9999019 & 0.9998506 & 0.9999990 & 0.9999995& TO & TO\\
           &150 & $(300,300)$ & 0.9999779 & 0.9999199  & TO & 0.9998398  & TO & TO \\
           \midrule
           \multirow{ 2}{*}{$(p/30,q/5)$} & 100 & $(200,200)$& 0.9998888 & 0.9998159  & 0.9999970 & 0.9999975  & TO & TO\\
           &200 & $(200,200)$ & 0.9999059 & 0.9998667  & 0.9999972 & 0.9999977  & TO & TO \\
        \bottomrule
    \end{tabular}
     }%
\end{table}

\begin{table}[H]
    \centering
    \caption{Sensitivity values for sparsity selection in $B$ averaged over 200 replicates of $\hat{B}$. `TO' is short for `Timeout' which implies that the method could not complete the required number of iterations in 4 days. `PDE' refers to an error caused by intermediate matrices not being positive definite (PD).  }
    \resizebox{1.0\textwidth}{!}{%
    \begin{tabular}{*{9}{c}}
        \toprule
          Sparsity & \multicolumn{2}{c}{Cases}& Joint &  Stepwise & DPE & DCPE &  BANS & HS-GHS\\
        \cmidrule(lr){2-3} 
       & $n$ & $(p,q)$ & & &  &  & & \\ 
         \midrule
         \multirow{4}{*}{$(p/5,q/5)$}& 100 & $(30,60)$ & 1.0000000 & 1.0000000 & 1.0000000 & 1.0000000 & 0.9991667 & PDE \\
         &100 & $(60,30)$ & 1.0000000 & 1.0000000 & 1.0000000 & 1.0000000 & 0.9975000 & 1.000000\\
         &150 & $(200,200)$& 1.000000 & 1.000000 & 1.0000000 & 0.9999992  & TO & TO\\
           &150 & $(300,300)$ & 1.0000000 & 1.0000000 & TO & 1.0000000  & TO & TO \\
           \midrule
           \multirow{ 2}{*}{$(p/30,q/5)$} & 100 & $(200,200)$& 1.0000000 & 1.0000000 & 1.0000000 & 1.0000000  & TO & TO\\
           &200 & $(200,200)$ & 1.0000000 & 1.0000000 & 1.0000000 & 1.0000000  & TO & TO\\
        \bottomrule
    \end{tabular}
     }
\end{table}

\begin{table}[H]
    \centering
    \caption{Specificity values for sparsity selection in $B$ averaged over 200 replicates of $\hat{B}$. `TO' is short for `Timeout' which implies that the method could not complete the required number of iterations in 4 days. `PDE' refers to an error caused by intermediate matrices not being positive definite (PD).  }
    \resizebox{1.0\textwidth}{!}{%
    \begin{tabular}{*{9}{c}}
        \toprule
          Sparsity & \multicolumn{2}{c}{Cases}& Joint &  Stepwise & DPE & DCPE &  BANS & HS-GHS \\
        \cmidrule(lr){2-3} 
       & $n$ & $(p,q)$ & & &  &  & & \\ 
         \midrule
         \multirow{4}{*}{$(p/5,q/5)$}& 100 & $(30,60)$ & 0.9999972 & 1.0000000  & 1.0000000 & 1.0000000 & 0.9941304 & PDE\\
         &100 & $(60,30)$ & 1.0000000 & 1.0000000 & 1.0000000 & 1.0000000 & 0.9987360 & 0.9997791 \\
         &150 & $(200,200)$& 0.9999977 & 0.9999932 & 1.0000000 & 0.9999992  & TO & TO\\
           &150 & $(300,300)$ & 0.9999977  & 0.9996601 & TO & 0.9999030   & TO & TO \\
           \midrule
           \multirow{ 2}{*}{$(p/30,q/5)$} & 100 & $(200,200)$& 0.9999970 & 0.9999735 & 0.9999557 & 0.9999825 & TO & TO\\
           &200 & $(200,200)$ & 0.9999987 & 0.9999627 & 0.9999990 & 0.9999575    & TO & TO\\
        \bottomrule
    \end{tabular}
     }
\end{table}

\subsection{Hyperparameter selection}
\label{suppsec:hpselection}
The important issue of selection of hyperparameters $q_1,q_2,\tau_1^2,\tau_2^2$ was briefly discussed in Section 2 of the main paper. As mentioned there, the theoretical results of our paper and also those of \cite{cao2019posterior} and \cite{Narisetty:He:2014} motivated us to take $q_1 = 1/p$ and $q_2 = 1/q$. 
In order to see how sensitive our results are with respect to changes in the values of the hyperparameters around our choices, we performed simulation experiments for different choices of $q_1$ and $q_2$ for the setting where $(p,q) = (200,200)$ and the number of non-zero entries in $B$ and among the off-diagonal entries of $\Omega$ are $p/5$ and $q/5$ respectively. The values of MCC for sparsity selection of $B$ and $\Omega$ and relative estimation error of $B$ for different values of $q_1$ and $q_2$  together with our choices of $q_1 = 1/p$ and $q_2 = 1/q$ as well as different values of the sample size $n$ are presented in Tables \ref{tabOmega:MCCq1q2} - \ref{tabB:rltnormq1q2}. The results in these tables reaffirm the intuition that especially as the sample size grows, there are no significant changes in the performance of the estimators as we vary the values of the hyperparameters $q_1$ and $q_2$.

\begin{table}[H]
    \centering
    \caption{MCC values for the sparsity selection in $\Omega$ averaged over 200 replicates for JRNS and Stepwise methods for different $q_1$ and $ q_2 $. }
      \resizebox{0.8\textwidth}{!}{%
     \begin{tabular}{cccccccc}
    \toprule
    Sample size & & \multicolumn{2}{c}{$q_1 = 0.2$} & \multicolumn{2}{c}{$q_1 = 0.1$ }& \multicolumn{2}{c}{$q_1 = 1/200$ } \\
    \cmidrule(lr){3-4} \cmidrule(lr){5-6} \cmidrule(lr){7-8}
    $n$& & JRNS & Stepwise & JRNS & Stepwise & JRNS & Stepwise \\
    \midrule
     \multirow{3}{*}{50} & $q_2 = 0.2$ & 0.456611 & 0.195824 & 0.443757 & 0.191426 & 0.413763 & 0.185339 \\ 
    & $q_2 = 0.1$ & 0.522697 & 0.407968 & 0.510742 & 0.396833 & 0.462008 & 0.356453 \\ 
    & $q_2 = 1/200$ & 0.630093 & 0.626606 & 0.626373 & 0.628004 & 0.578496 & 0.605354 \\
    \midrule
     \multirow{3}{*}{100} & $q_2 = 0.2$ &  0.824727 & 0.788363 & 0.816666 & 0.801548 & 0.758186 & 0.787223 \\ 
    & $q_2 = 0.1$ & 0.861820 & 0.829216 & 0.854171 & 0.847995 & 0.809472 & 0.845547 \\ 
    & $q_2 = 1/200$ & 0.852824 & 0.844891 & 0.864933 & 0.859695 & 0.868123 & 0.871109  \\
    \midrule
     \multirow{3}{*}{150} & $q_2 = 0.2$ & 0.944332 & 0.891047 & 0.944369 & 0.913404 & 0.926108 & 0.937398 \\ 
    & $q_2 = 0.1$ & 0.936696 & 0.890740 & 0.943541 & 0.912652 & 0.942260 & 0.945645 \\ 
    & $q_2 = 1/200$ &  0.897044 & 0.891543 & 0.922895 & 0.918023 & 0.948174 & 0.948327 \\
    \midrule
     \multirow{3}{*}{200} & $q_2 = 0.2$ &  0.958262 & 0.906148 & 0.967857 & 0.929079 & 0.969988 & 0.968598 \\ 
    & $q_2 = 0.1$ &  0.944422 & 0.908759 & 0.958733 & 0.932000 & 0.971172 & 0.969177   \\ 
    & $q_2 = 1/200$ & 0.911446 & 0.908679 & 0.937298 & 0.930596 & 0.969767 & 0.969714 \\
    \bottomrule
    \end{tabular}\label{tabOmega:MCCq1q2}
    }%
    \end{table}


\begin{table}[H]
    \centering
    \caption{MCC values for the sparsity selection in $B$ averaged over 200 replicates for JRNS and Stepwise methods for different $q_1$ and $ q_2 $. }
      \resizebox{0.8\textwidth}{!}{%
     \begin{tabular}{cccccccc}
    \toprule
    Sample size & & \multicolumn{2}{c}{$q_1 = 0.2$} & \multicolumn{2}{c}{$q_1 = 0.1$ }& \multicolumn{2}{c}{$q_1 = 1/200$ } \\
    \cmidrule(lr){3-4} \cmidrule(lr){5-6} \cmidrule(lr){7-8}
    $n$& & JRNS & Stepwise & JRNS & Stepwise & JRNS & Stepwise \\
    \midrule
     \multirow{3}{*}{50} & $q_2 = 0.2$ & 0.549165 & 0.088672 & 0.542708 & 0.088634 & 0.543439 & 0.088686  \\ 
    & $q_2 = 0.1$ & 0.592646 & 0.161716 & 0.590881 & 0.161653 & 0.590786 & 0.162460  \\ 
    & $q_2 = 1/200$ & 0.713135 & 0.819988 & 0.714307 & 0.820428 & 0.715335 & 0.818886 \\
    \midrule
     \multirow{3}{*}{100} & $q_2 = 0.2$ & 0.488290 & 0.147177 & 0.487633 & 0.147078 & 0.489134 & 0.147308  \\ 
    & $q_2 = 0.1$ & 0.635307 & 0.282160 & 0.632513 & 0.281971 & 0.632951 & 0.282221 \\ 
    & $q_2 = 1/200$ & 0.924696 & 0.870234 & 0.924777 & 0.868219 & 0.922867 & 0.868982 \\
    \midrule
     \multirow{3}{*}{150} & $q_2 = 0.2$ & 0.412558 & 0.205877 & 0.412212 & 0.205819 & 0.412529 & 0.205947 \\ 
    & $q_2 = 0.1$ & 0.565750 & 0.330461 & 0.566835 & 0.330390 & 0.565658 & 0.330595  \\ 
    & $q_2 = 1/200$ & 0.958730 & 0.892018 & 0.957689 & 0.893823 & 0.958287 & 0.894166\\
    \midrule
     \multirow{3}{*}{200} & $q_2 = 0.2$ & 0.383624 & 0.248326 & 0.384346 & 0.248388 & 0.383731 & 0.248295  \\ 
    & $q_2 = 0.1$ & 0.520121 & 0.374873 & 0.521217 & 0.374745 & 0.521213 & 0.374880 \\ 
    & $q_2 = 1/200$ & 0.934772 & 0.888740 & 0.934523 & 0.889443 & 0.936059 & 0.888556\\
    \bottomrule
    \end{tabular}\label{tabB:MCCq1q2}
    }%
    \end{table}

\begin{table}[H]
    \centering
    \caption{Relative estimation error of $B$ averaged over 200 replicates for different $q_1$ and $ q_2 $. }
      \resizebox{0.8\textwidth}{!}{%
     \begin{tabular}{cccccccc}
    \toprule
    Sample size & & \multicolumn{2}{c}{$q_1 = 0.2$} & \multicolumn{2}{c}{$q_1 = 0.1$ }& \multicolumn{2}{c}{$q_1 = 1/200$ } \\
    \cmidrule(lr){3-4} \cmidrule(lr){5-6} \cmidrule(lr){7-8}
    $n$& & Joint & Stepwise & Joint & Stepwise & Joint & Stepwise \\
    \midrule
     \multirow{3}{*}{50} & $q_2 = 0.2$ & 0.079190 & 0.204494 & 0.082342 & 0.204656 & 0.084490 & 0.204276\\ 
    & $q_2 = 0.1$ & 0.041919 & 0.145363 & 0.041923 & 0.146736 & 0.042020 & 0.145176   \\ 
    & $q_2 = 1/200$ & 0.031765 & 0.021795 & 0.031596 & 0.021790 & 0.031658 & 0.021832 \\
    \midrule
     \multirow{3}{*}{100} & $q_2 = 0.2$ &  0.029408 & 0.104857 & 0.029514 & 0.105318 & 0.029476 & 0.104686 \\ 
    & $q_2 = 0.1$ & 0.021651 & 0.055278 & 0.021727 & 0.055390 & 0.021742 & 0.055232  \\ 
    & $q_2 = 1/200$ &  0.011085 & 0.012678 & 0.011081 & 0.012737 & 0.011161 & 0.012734  \\
    \midrule
     \multirow{3}{*}{150} & $q_2 = 0.2$ &  0.028091 & 0.060263 & 0.028124 & 0.060331 & 0.028103 & 0.060318 \\ 
    & $q_2 = 0.1$ & 0.019697 & 0.039141 & 0.019654 & 0.039091 & 0.019711 & 0.039098 \\ 
    & $q_2 = 1/200$ & 0.007320 & 0.009555 & 0.007345 & 0.009527 & 0.007328 & 0.009517 \\
    \midrule
     \multirow{3}{*}{200} & $q_2 = 0.2$ &   0.026405 & 0.044825 & 0.026352 & 0.044801 & 0.026420 & 0.044862  \\ 
    & $q_2 = 0.1$ & 0.018945 & 0.030533 & 0.018890 & 0.030560 & 0.018888 & 0.030558  \\ 
    & $q_2 = 1/200$ & 0.006642 & 0.008472 & 0.006657 & 0.008449 & 0.006610 & 0.008475  \\
    \bottomrule
    \end{tabular}\label{tabB:rltnormq1q2}
    }%
    \end{table}
    
    \noindent
    For $q_1$ and $q_2$ we have also suggested taking Beta priors. The Beta prior in particular is attractive due to conditional conjugacy and the resulting computational simplicity of the conditional updates for $q_1$ and $q_2$. Below in Tables \ref{tabBhyp} and \ref{tabomegahyp} we present the sparsity selection performance in $B$ and $\Omega$ based on the MCC metric using Beta(1,1), i.e., uniform hyper-priors on $q_1$ and $q_2$ for all simulation settings for both the JRNS and Stepwise algorithms.
    
    \begin{table}[htbp]
    \centering
    \caption{Comparison of MCC values for sparsity selection in $B$ (averaged over 200 replicates) using fixed values for $q_1, q_2$ vs. 
    using a uniform hyper-prior for $q_1, q_2$.}
    \label{tab:priorB}
    \resizebox{1.0\textwidth}{!}{%
    \begin{tabular}{*{9}{c}}
        \toprule
          Sparsity & \multicolumn{2}{c}{Cases}& Joint   &  Joint  &  Stepwise    & Stepwise  \\
        \cmidrule(lr){2-3} 
       & $n$ & $(p,q)$ & fixed $q1$, $q2$ & Beta(1,1) hyperprior & fixed $q1$, $q2$ & Beta(1,1) hyperprior & \\ 
         \midrule
         \multirow{4}{*}{$(p/5,q/5)$}& 100 & $(30,60)$ &1.000  & 1.000 & 1.000  & 1.000 \\
         &100 & $(60,30)$ &1.000 & 1.000 & 1.000 & 1.000 \\
         &150 & $(200,200)$ & 1.000 & 1.000 & 0.997 & 1.000 \\
           &150 & $(300,300)$ & 0.998 & 1.000 & 0.770 & 0.982 \\   \midrule
           \multirow{ 2}{*}{$(p/30,q/5)$} & 100 & $(200,200)$ & 0.991 & 1.000 & 0.961 & 0.996 \\
           &200 & $(200,200)$ & 1.000 & 1.000 & 0.956 & 0.997   \\
        \bottomrule
    \end{tabular}\label{tabBhyp}
     }%
     
\end{table}

\begin{table}[htbp]
    \centering
    \caption{Comparison of MCC values for sparsity selection in $\Omega$ (averaged over 200 replicates) using fixed values for $q_1, q_2$ vs. 
    using a uniform hyper-prior for $q_1, q_2$.}
    \label{tab:prioromega}
    \resizebox{1.0\textwidth}{!}{%
    \begin{tabular}{*{9}{c}}
        \toprule
          Sparsity & \multicolumn{2}{c}{Cases}& Joint   &  Joint  &  Stepwise    & Stepwise  \\
        \cmidrule(lr){2-3} 
       & $n$ & $(p,q)$ & fixed $q1$, $q2$ & Beta(1,1) hyperprior & fixed $q1$, $q2$ & Beta(1,1) hyperprior & \\ 
         \midrule
         \multirow{4}{*}{$(p/5,q/5)$}& 100 & $(30,60)$ & 0.783 & 0.749 & 0.778 & 0.763 \\
         &100 & $(60,30)$ & 0.821 & 0.748 & 0.820 & 0.770 \\
         &150 & $(200,200)$ & 0.918 & 0.939 & 0.899 & 0.939 \\
           &150 & $(300,300)$ & 0.912 & 0.945 & 0.831 & 0.930\\  
           \midrule
           \multirow{ 2}{*}{$(p/30,q/5)$} & 100 & $(200,200)$& 0.867 & 0.856 & 0.846 & 0.859\\
           &200 & $(200,200)$ &0.969  &  0.972 & 0.968  & 0.971  \\
        \bottomrule
    \end{tabular}\label{tabomegahyp}
     }%
     
\end{table}

    \section{Additional network plots and corresponding inclusion probability tables}
    \label{suppsec:TCGAplots}
    
    Here we present the network plots for all the other cancer types apart from LUAD, the plots for which are provided in the main paper. Tables \ref{tab:COAD:geneIndices}, \ref{tab:OV:geneIndices}, \ref{tab:READ:geneIndices}, \ref{tab:SKCM:geneIndices} and \ref{tab:UCEC:geneIndices} provide the indices for all the genes and the proteins included in the dataset for these cancer types. The left panel in Figures \ref{fig:COADnetwork05} - \ref{fig:UCECnetwork05} indicates the associations between mRNA and proteins while the right panel in each of these figures indicate the associations among different proteins considered. The different colors of each node represent the pathway membership of the corresponding gene or protein, which is also provided in the form of a legend in each of these figures. The corresponding tables listing the inclusion probability of each included edge for both types of network plots and for each cancer type are also provided in this section. The network plots for the cancer type LUSC could not be provided here since the pathway membership information is missing for some of the genes and proteins in that dataset. Also, Table \ref{LUAD:Omega:incprob} here lists the inclusion probabilities of all the edges included in the network plot indicating associations among proteins for LUAD cancer given in the right panel of Figure 5 in the main paper.

     \begin{table}
     \centering
    \caption{Inclusion probability of each edge for the LUAD network graph indicating associations among proteins provided in the right panel of Figure 7 in the main paper.}
    \resizebox{0.8\textwidth}{!}{%
    \begin{minipage}{.4\linewidth}
      
      \centering
        \begin{tabular}{cccc}
            \hline
 & Protein  & Protein  & Inclusion   \\ 
 & & & Probability \\
  \hline
  1 &   1 &   3 & 1.00 \\ 
  2 &   2 &   3 & 1.00 \\ 
  3 &   2 &   7 & 1.00 \\ 
  4 &   4 &   7 & 1.00 \\ 
  5 &   2 &   8 & 0.74 \\ 
  6 &   6 &   8 & 1.00 \\ 
  7 &  11 &  12 & 1.00 \\ 
  8 &   3 &  13 & 1.00 \\ 
  9 &   8 &  13 & 1.00 \\ 
  10 &   1 &  14 & 1.00 \\ 
  11 &   5 &  14 & 1.00 \\ 
  12 &  13 &  14 & 1.00 \\ 
  13 &   5 &  15 & 1.00 \\ 
  14 &   7 &  15 & 0.61 \\ 
  15 &  14 &  15 & 1.00 \\ 
  16 &   4 &  16 & 1.00 \\ 
  17 &  11 &  16 & 1.00 \\ 
  18 &  11 &  17 & 1.00 \\ 
  19 &   6 &  18 & 1.00 \\ 
  20 &  14 &  20 & 1.00 \\ 
  21 &  17 &  21 & 1.00 \\ 
  22 &  18 &  22 & 1.00 \\ 
  23 &  19 &  22 & 1.00 \\ 
  24 &   3 &  23 & 1.00 \\ 
  25 &  20 &  23 & 1.00 \\ 
  26 &  21 &  23 & 1.00 \\ 
  27 &   2 &  24 & 1.00 \\ 
  28 &   8 &  24 & 1.00 \\ 
  29 &  14 &  24 & 1.00 \\ 
  30 &  17 &  24 & 1.00 \\ 
  31 &   9 &  25 & 1.00 \\ 
  32 &  19 &  25 & 1.00 \\ 
  33 &  13 &  26 & 1.00 \\ 
  34 &  23 &  26 & 1.00 \\ 
  35 &   2 &  27 & 1.00 \\ 
  36 &  23 &  29 & 1.00 \\ 
  37 &  26 &  29 & 1.00 \\
 
   \hline
        \end{tabular}
    \end{minipage}%
    \quad \quad
    \begin{minipage}{.4\linewidth}
      \centering
         \begin{tabular}{cccc}
            \hline
 & Protein  & Protein  & Inclusion    \\ 
 & & & Probability\\
  \hline
 38 &  19 &  31 & 1.00 \\ 
  39 &  28 &  32 & 1.00 \\ 
  40 &  31 &  32 & 1.00 \\ 
  41 &   2 &  33 & 1.00 \\ 
  42 &  25 &  33 & 1.00 \\ 
  43 &  26 &  33 & 1.00 \\ 
  44 &  32 &  33 & 1.00 \\ 
  45 &  28 &  34 & 1.00 \\ 
  46 &  29 &  35 & 1.00 \\ 
  47 &  13 &  36 & 1.00 \\ 
  48 &  35 &  36 & 1.00 \\ 
  49 &  21 &  37 & 1.00 \\ 
  50 &   2 &  38 & 1.00 \\ 
  51 &  25 &  39 & 1.00 \\ 
  52 &  36 &  40 & 1.00 \\ 
  53 &  19 &  41 & 1.00 \\ 
  54 &  27 &  41 & 1.00 \\ 
  55 &  10 &  42 & 1.00 \\ 
  56 &  41 &  42 & 1.00 \\ 
  57 &  30 &  43 & 1.00 \\ 
  58 &   2 &  44 & 1.00 \\ 
  59 &  41 &  44 & 1.00 \\ 
  60 &  43 &  44 & 1.00 \\ 
  61 &   6 &  45 & 1.00 \\ 
  62 &  25 &  45 & 1.00 \\ 
  63 &  42 &  45 & 1.00 \\ 
  64 &   5 &  46 & 1.00 \\ 
  65 &  41 &  46 & 1.00 \\ 
  66 &  43 &  46 & 1.00 \\ 
  67 &  45 &  46 & 1.00 \\ 
  68 &  33 &  47 & 1.00 \\ 
  69 &  14 &  49 & 1.00 \\ 
  70 &  41 &  49 & 1.00 \\ 
  71 &  21 &  50 & 1.00 \\ 
  72 &  22 &  50 & 1.00 \\ 
  73 &  23 &  50 & 1.00 \\ 
  74 &  10 &  51 & 1.00 \\ 
  \hline
  \end{tabular}
    \end{minipage}%
    \quad \quad
    \begin{minipage}{.4\linewidth}
      \centering
         \begin{tabular}{cccc}
            \hline
 & Protein  & Protein  & Inclusion  \\ 
 & & & Probability\\
  \hline
  75 &  11 &  51 & 1.00 \\ 
  76 &  18 &  51 & 1.00 \\ 
  77 &  30 &  51 & 1.00 \\ 
  78 &  41 &  52 & 1.00 \\ 
  79 &  41 &  53 & 1.00 \\ 
  80 &  50 &  53 & 1.00 \\ 
  81 &  52 &  53 & 1.00 \\ 
  82 &   6 &  54 & 1.00 \\ 
  83 &  51 &  54 & 1.00 \\ 
  84 &  53 &  54 & 0.50 \\ 
  85 &  26 &  55 & 1.00 \\ 
  86 &  45 &  55 & 1.00 \\ 
  87 &   6 &  56 & 1.00 \\ 
  88 &  19 &  56 & 1.00 \\ 
  89 &  55 &  56 & 1.00 \\ 
  90 &  33 &  57 & 1.00 \\ 
  91 &  39 &  57 & 0.95 \\ 
  92 &   3 &  58 & 1.00 \\ 
  93 &  13 &  58 & 1.00 \\ 
  94 &  50 &  58 & 1.00 \\ 
  95 &  57 &  59 & 1.00 \\ 
  96 &  58 &  59 & 1.00 \\ 
  97 &   3 &  60 & 1.00 \\ 
  98 &   8 &  60 & 1.00 \\ 
  99 &  21 &  60 & 1.00 \\ 
  100 &  29 &  60 & 1.00 \\ 
  101 &   3 &  61 & 1.00 \\ 
  102 &  49 &  61 & 1.00 \\ 
  103 &  57 &  61 & 1.00 \\ 
  104 &  40 &  62 & 1.00 \\ 
  105 &  57 &  62 & 1.00 \\ 
  106 &  54 &  63 & 1.00 \\ 
  107 &  59 &  63 & 1.00 \\ 
  108 &  62 &  63 & 1.00 \\ 
  109 &  27 &  64 & 1.00 \\ 
  110 &   6 &  65 & 1.00 \\ 
  111 &  64 &  65 & 1.00 \\
  \hline
  \end{tabular}
    \end{minipage}%
    \quad \quad
    \begin{minipage}{.4\linewidth}
      \centering
         \begin{tabular}{cccc}
            \hline
 & Protein  & Protein  & Inclusion   \\ 
 & & & Probability\\
  \hline
  112 &  21 &  66 & 1.00 \\ 
  113 &  33 &  66 & 0.56 \\ 
  114 &  34 &  66 & 0.59 \\ 
  115 &  64 &  66 & 1.00 \\ 
  116 &  29 &  67 & 1.00 \\ 
  117 &  37 &  67 & 1.00 \\ 
  118 &  41 &  67 & 1.00 \\ 
  119 &  51 &  67 & 1.00 \\ 
  120 &   3 &  68 & 0.97 \\ 
  121 &  23 &  68 & 1.00 \\ 
  122 &  52 &  68 & 1.00 \\ 
  123 &  55 &  68 & 1.00 \\ 
  124 &  63 &  68 & 1.00 \\ 
  125 &  56 &  69 & 1.00 \\ 
  126 &  65 &  70 & 1.00 \\ 
  127 &  69 &  70 & 1.00 \\ 
  128 &  11 &  71 & 1.00 \\ 
  129 &  49 &  71 & 1.00 \\ 
  130 &  65 &  71 & 1.00 \\ 
  131 &  15 &  72 & 1.00 \\ 
  132 &  30 &  72 & 1.00 \\ 
  133 &  41 &  72 & 1.00 \\ 
  134 &  66 &  72 & 1.00 \\ 
  135 &   5 &  73 & 1.00 \\ 
  136 &  72 &  73 & 1.00 \\ 
  137 &  26 &  74 & 1.00 \\ 
  138 &  58 &  74 & 1.00 \\ 
  139 &  16 &  75 & 1.00 \\ 
  140 &  25 &  75 & 1.00 \\ 
  141 &  30 &  75 & 1.00 \\ 
  142 &  38 &  75 & 1.00 \\ 
  143 &  18 &  76 & 1.00 \\ 
  144 &  30 &  76 & 1.00 \\ 
  145 &  47 &  76 & 1.00 \\ 
  & & & \\
  & & & \\
  & & & \\
   \hline
        \end{tabular}
        \end{minipage} 
    }
    \label{LUAD:Omega:incprob}
\end{table}
    
\begin{table}
\centering
    \caption{Indices of genes and proteins for COAD cancer data. The first column lists the components of the dataset mRNA(genes) and the second column lists the components of the dataset RPPA(proteins).}
    \resizebox{0.8\textwidth}{!}{%
    \begin{minipage}{.5\linewidth}
      
      \centering
        \begin{tabular}{rll}
            \hline
 & Gene & Protein \\ 
  \hline
1 & BAK1 & BAK \\ 
  2 & BAX & BAX \\ 
  3 & BID & BID \\ 
  4 & BCL2L11 & BIM \\ 
  5 & CASP7 & CASPASE7CLEAVEDD198 \\ 
  6 & BAD & BADPS112 \\ 
  7 & BCL2 & BCL2 \\ 
  8 & BCL2L1 & BCLXL \\ 
  9 & BIRC2 & CIAP \\ 
  10 & CDK1 & CDK1 \\ 
  11 & CCNB1 & CYCLINB1 \\ 
  12 & CCNE1 & CYCLINE1 \\ 
  13 & CCNE2 & CYCLINE2 \\ 
  14 & CDKN1B & P27PT157 \\ 
  15 & PCNA & P27PT198 \\ 
  16 & FOXM1 & PCNA \\ 
  17 & TP53BP1 & FOXM1 \\ 
  18 & ATM & 53BP1 \\ 
  19 & BRCA2 & ATM \\ 
  20 & CHEK1 & CHK1PS345 \\ 
  21 & CHEK2 & CHK2PT68 \\ 
  22 & XRCC5 & KU80 \\ 
  23 & MRE11A & MRE11 \\ 
  24 & TP53 & P53 \\ 
  25 & RAD50 & RAD50 \\ 
  26 & RAD51 & RAD51 \\ 
  27 & XRCC1 & XRCC1 \\ 
  28 & FN1 & FIBRONECTIN \\ 
  29 & CDH2 & NCADHERIN \\ 
  30 & COL6A1 & COLLAGENVI \\ 
  31 & CLDN7 & CLAUDIN7 \\ 
  32 & CDH1 & ECADHERIN \\ 
  33 & CTNNB1 & BETACATENIN \\ 
  34 & SERPINE1 & PAI1 \\ 
  35 & ESR1 & ERALPHA \\ 
  36 & PGR & ERALPHAPS118 \\ 
  37 & AR & PR \\ 
  38 & INPP4B & AR \\
   \hline
        \end{tabular}
    \end{minipage}%
    \quad \quad
    \begin{minipage}{.5\linewidth}
      \centering
         \begin{tabular}{rll}
            \hline
 & Gene & Protein \\ 
  \hline
 39 & GATA3 & INPP4B \\ 
  40 & AKT1 & GATA3 \\ 
  41 & AKT2 & AKTPS473 \\ 
  42 & AKT3 & AKTPT308 \\ 
  43 & GSK3A & GSK3ALPHABETAPS21S9 \\ 
  44 & GSK3B & GSK3PS9 \\ 
  45 & AKT1S1 & PRAS40PT246 \\ 
  46 & TSC2 & TUBERINPT1462 \\ 
  47 & PTEN & PTEN \\ 
  48 & ARAF & ARAFPS299 \\ 
  49 & JUN & CJUNPS73 \\ 
  50 & RAF1 & CRAFPS338 \\ 
  51 & MAPK8 & JNKPT183Y185 \\ 
  52 & MAPK1 & MAPKPT202Y204 \\ 
  53 & MAPK3 & MEK1PS217S221 \\ 
  54 & MAP2K1 & P38PT180Y182 \\ 
  55 & MAPK14 & P90RSKPT359S363 \\ 
  56 & RPS6KA1 & YB1PS102 \\ 
  57 & YBX1 & EGFRPY1068 \\ 
  58 & EGFR & EGFRPY1173 \\ 
  59 & ERBB2 & HER2PY1248 \\ 
  60 & ERBB3 & HER3PY1298 \\ 
  61 & SHC1 & SHCPY317 \\ 
  62 & SRC & SRCPY416 \\ 
  63 & EIF4EBP1 & SRCPY527 \\ 
  64 & RPS6KB1 & 4EBP1PS65 \\ 
  65 & MTOR & 4EBP1PT37T46 \\ 
  66 & RPS6 & 4EBP1PT70 \\ 
  67 & RB1 & P70S6KPT389 \\ 
  68 & CAV1 & MTORPS2448 \\ 
  69 & MYH11 & S6PS235S236 \\ 
  70 & RAB11A & S6PS240S244 \\ 
  71 & RAB11B & RBPS807S811 \\ 
  72 & GAPDH & CAVEOLIN1 \\ 
  73 & RBM15 & MYH11 \\ 
  74 &  & RAB11 \\ 
  75 &  & GAPDH \\ 
  76 &  & RBM15 \\  
  \hline
        \end{tabular}
        \end{minipage} 
    }
    \label{tab:COAD:geneIndices}
\end{table}    
    
      \begin{figure}[H]
    \centering
    \includegraphics[width = \linewidth]{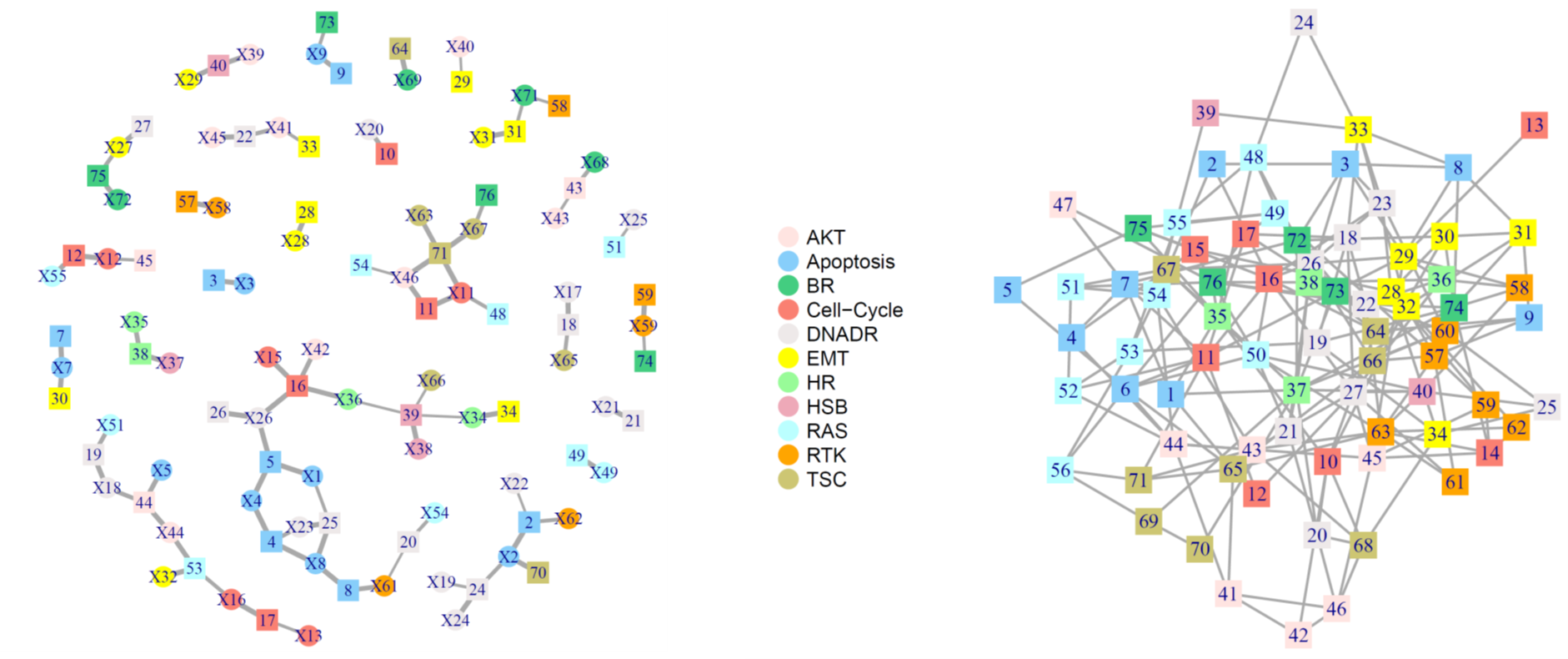}
    \caption{COAD networks with 0.5 as the inclusion probability cutoff. The circles represent genes and the squares represent proteins. The different colors represent the different pathways listed in Table 14 in Appendix. Left : Network graph indicating associations between mRNA and protein. Right : Network graph indicating associations among proteins. The inclusion probabilities are listed in Tables \ref{COAD:B:incprob} and \ref{COAD:Omega:incprob}. \textit{All the edge widths are proportional to the corresponding inclusion probabilities.}}
    \label{fig:COADnetwork05}
\end{figure}
    
    \begin{table}[H]
\centering
    \caption{Inclusion probability of each edge for the COAD network graph indicating associations between mRNA and protein provided in the left panel of Figure \ref{fig:COADnetwork05}.}
    \resizebox{0.9\textwidth}{!}{%
    \begin{minipage}{.4\linewidth}
      
      \centering
        \begin{tabular}{cccc}
            \hline
 & Gene  & Protein  & Inclusion  \\
 & & & Probability\\
  \hline

 1 & X2 &   2 & 1.00 \\ 
  2 & X22 &   2 & 0.50 \\ 
  3 & X62 &   2 & 0.74 \\ 
  4 & X3 &   3 & 0.98 \\ 
  5 & X4 &   4 & 1.00 \\ 
  6 & X8 &   4 & 1.00 \\ 
  7 & X23 &   4 & 0.80 \\ 
  8 & X1 &   5 & 1.00 \\ 
  9 & X4 &   5 & 1.00 \\ 
  10 & X26 &   5 & 0.98 \\ 
  11 & X7 &   7 & 1.00 \\ 
  12 & X8 &   8 & 1.00 \\ 
  13 & X61 &   8 & 0.92 \\ 
  14 & X9 &   9 & 0.93 \\ 
  15 & X20 &  10 & 0.89 \\ 
  16 & X11 &  11 & 1.00 \\ 
  17 & X46 &  11 & 0.99 \\ 
  18 & X12 &  12 & 1.00 \\ 
  19 & X55 &  12 & 1.00 \\ 
  20 & X15 &  16 & 1.00 \\ 
  21 & X26 &  16 & 1.00 \\ 
  
   \hline
        \end{tabular}
    \end{minipage}%
    \quad \quad
    \begin{minipage}{.4\linewidth}
      \centering
         \begin{tabular}{cccc}
            \hline
 & Gene  & Protein  & Inclusion   \\ 
 & & & Probability\\
  \hline
 22 & X36 &  16 & 0.96 \\ 
  23 & X42 &  16 & 0.89 \\ 
  24 & X13 &  17 & 0.60 \\ 
  25 & X16 &  17 & 1.00 \\ 
  26 & X17 &  18 & 1.00 \\ 
  27 & X65 &  18 & 0.76 \\ 
  28 & X18 &  19 & 0.98 \\ 
  29 & X51 &  19 & 0.86 \\ 
  30 & X54 &  20 & 0.69 \\ 
  31 & X61 &  20 & 0.50 \\ 
  32 & X21 &  21 & 1.00 \\ 
  33 & X41 &  22 & 0.82 \\ 
  34 & X45 &  22 & 1.00 \\ 
  35 & X2 &  24 & 0.77 \\ 
  36 & X19 &  24 & 0.57 \\ 
  37 & X24 &  24 & 1.00 \\ 
  38 & X1 &  25 & 0.63 \\ 
  39 & X8 &  25 & 0.98 \\ 
  40 & X23 &  25 & 1.00 \\ 
  41 & X26 &  26 & 1.00 \\ 
  42 & X27 &  27 & 1.00 \\
  \hline
  \end{tabular}
    \end{minipage}%
    \quad \quad
    \begin{minipage}{.4\linewidth}
      \centering
         \begin{tabular}{cccc}
            \hline
 & Gene  & Protein  & Inclusion   \\ 
 & & & Probability\\
  \hline
  43 & X28 &  28 & 1.00 \\ 
  44 & X40 &  29 & 0.51 \\ 
  45 & X7 &  30 & 1.00 \\ 
  46 & X31 &  31 & 1.00 \\ 
  47 & X71 &  31 & 0.66 \\ 
  48 & X41 &  33 & 0.64 \\ 
  49 & X34 &  34 & 1.00 \\ 
  50 & X35 &  38 & 0.92 \\ 
  51 & X37 &  38 & 1.00 \\ 
  52 & X34 &  39 & 0.51 \\ 
  53 & X36 &  39 & 0.52 \\ 
  54 & X38 &  39 & 1.00 \\ 
  55 & X66 &  39 & 0.52 \\ 
  56 & X29 &  40 & 0.98 \\ 
  57 & X39 &  40 & 0.97 \\ 
  58 & X43 &  43 & 0.58 \\ 
  59 & X68 &  43 & 0.80 \\ 
  60 & X5 &  44 & 0.94 \\ 
  61 & X18 &  44 & 0.84 \\ 
  62 & X44 &  44 & 0.97 \\ 
  63 & X12 &  45 & 0.62 \\
  \hline
  \end{tabular}
    \end{minipage}%
    \quad \quad
    \begin{minipage}{.4\linewidth}
      \centering
         \begin{tabular}{cccc}
            \hline
 & Gene  & Protein  & Inclusion \\ 
 & & & Probability\\
  \hline
  64 & X11 &  48 & 0.72 \\ 
  65 & X49 &  49 & 0.99 \\ 
  66 & X25 &  51 & 0.55 \\ 
  67 & X16 &  53 & 0.73 \\ 
  68 & X32 &  53 & 1.00 \\ 
  69 & X44 &  53 & 0.96 \\ 
  70 & X46 &  54 & 0.52 \\ 
  71 & X58 &  57 & 1.00 \\ 
  72 & X71 &  58 & 0.56 \\ 
  73 & X59 &  59 & 1.00 \\ 
  74 & X69 &  64 & 1.00 \\ 
  75 & X2 &  70 & 1.00 \\ 
  76 & X11 &  71 & 1.00 \\ 
  77 & X46 &  71 & 1.00 \\ 
  78 & X63 &  71 & 0.93 \\ 
  79 & X67 &  71 & 1.00 \\ 
  80 & X9 &  73 & 0.68 \\ 
  81 & X59 &  74 & 0.58 \\ 
  82 & X27 &  75 & 0.79 \\ 
  83 & X72 &  75 & 1.00 \\ 
  84 & X67 &  76 & 0.76 \\ 
   \hline
        \end{tabular}
        \end{minipage} 
    }
    \label{COAD:B:incprob}
\end{table}
    
        \begin{table}[H]
\centering
    \caption{Inclusion probability of each edge for the COAD network graph indicating associations among proteins provided in the right panel of Figure \ref{fig:COADnetwork05}.}
    \resizebox{0.8\textwidth}{!}{%
    \begin{minipage}{.4\linewidth}
      
      \centering
        \begin{tabular}{cccc}
            \hline
 & Protein  & Protein  & Inclusion  \\ 
 & & & Probability \\
  \hline
  1 &   2 &   3 & 1.00 \\ 
  2 &   5 &   6 & 1.00 \\ 
  3 &   1 &   7 & 1.00 \\ 
  4 &   4 &   7 & 1.00 \\ 
  5 &   3 &   8 & 1.00 \\ 
  6 &   6 &  11 & 1.00 \\ 
  7 &   7 &  11 & 1.00 \\ 
  8 &  11 &  12 & 1.00 \\ 
  9 &   4 &  15 & 1.00 \\ 
  10 &   2 &  16 & 1.00 \\ 
  11 &  11 &  16 & 1.00 \\ 
  12 &  11 &  17 & 1.00 \\ 
  13 &  17 &  18 & 1.00 \\ 
  14 &  16 &  19 & 1.00 \\ 
  15 &  18 &  19 & 1.00 \\ 
  16 &  10 &  20 & 1.00 \\ 
  17 &  20 &  21 & 1.00 \\ 
  18 &  18 &  22 & 1.00 \\ 
  19 &   3 &  23 & 1.00 \\ 
  20 &  16 &  23 & 0.97 \\ 
  21 &  16 &  26 & 1.00 \\ 
  22 &  23 &  26 & 1.00 \\ 
  23 &  12 &  27 & 1.00 \\ 
  24 &  16 &  27 & 1.00 \\ 
  25 &  18 &  27 & 1.00 \\ 
  26 &  15 &  28 & 1.00 \\ 
  27 &  22 &  28 & 0.67 \\ 
  28 &  23 &  28 & 1.00 \\ 
  29 &   3 &  29 & 1.00 \\ 
  30 &   8 &  29 & 1.00 \\ 
  31 &  11 &  29 & 1.00 \\ 
  32 &  13 &  29 & 1.00 \\ 
  33 &  26 &  29 & 1.00 \\ 
  34 &  28 &  30 & 1.00 \\ 
  35 &   8 &  31 & 1.00 \\ 
  36 &   9 &  31 & 1.00 \\ 
  37 &  30 &  31 & 1.00 \\ 
  38 &  10 &  32 & 1.00 \\ 
  39 &  22 &  32 & 1.00 \\ 
  40 &  31 &  32 & 1.00 \\ 
  41 &   8 &  33 & 1.00 \\ 
  42 &  18 &  33 & 1.00 \\ 
  43 &  23 &  33 & 1.00 \\ 
  44 &  24 &  33 & 1.00 \\ 
  45 &  32 &  33 & 1.00 \\ 
  46 &  19 &  34 & 1.00 \\ 
  47 &  25 &  34 & 1.00 \\ 
  48 &  28 &  34 & 1.00 \\ 
  49 &   1 &  35 & 1.00 \\ 
  50 &  10 &  35 & 1.00 \\
 
   \hline
        \end{tabular}
    \end{minipage}%
    \quad \quad
    \begin{minipage}{.4\linewidth}
      \centering
         \begin{tabular}{cccc}
            \hline
 & Protein  & Protein  & Inclusion   \\ 
 & & & Probability\\
  \hline
 51 &  22 &  36 & 1.00 \\ 
  52 &  27 &  36 & 1.00 \\ 
  53 &  29 &  36 & 1.00 \\ 
  54 &   1 &  37 & 1.00 \\ 
  55 &  20 &  37 & 1.00 \\ 
  56 &  21 &  37 & 1.00 \\ 
  57 &  26 &  37 & 1.00 \\ 
  58 &  35 &  37 & 1.00 \\ 
  59 &   3 &  38 & 1.00 \\ 
  60 &  21 &  38 & 1.00 \\ 
  61 &  26 &  38 & 1.00 \\ 
  62 &  36 &  38 & 1.00 \\ 
  63 &  37 &  38 & 1.00 \\ 
  64 &  33 &  39 & 1.00 \\ 
  65 &  25 &  40 & 0.85 \\ 
  66 &  27 &  40 & 1.00 \\ 
  67 &  37 &  40 & 1.00 \\ 
  68 &  10 &  42 & 1.00 \\ 
  69 &  41 &  42 & 1.00 \\ 
  70 &  27 &  43 & 1.00 \\ 
  71 &   4 &  44 & 1.00 \\ 
  72 &   6 &  44 & 1.00 \\ 
  73 &  41 &  44 & 1.00 \\ 
  74 &  43 &  44 & 1.00 \\ 
  75 &  14 &  45 & 1.00 \\ 
  76 &  27 &  45 & 1.00 \\ 
  77 &  29 &  45 & 1.00 \\ 
  78 &  44 &  45 & 1.00 \\ 
  79 &  20 &  46 & 1.00 \\ 
  80 &  41 &  46 & 1.00 \\ 
  81 &  42 &  46 & 1.00 \\ 
  82 &  45 &  46 & 1.00 \\ 
  83 &   7 &  47 & 1.00 \\ 
  84 &  24 &  48 & 1.00 \\ 
  85 &  35 &  48 & 1.00 \\ 
  86 &  38 &  48 & 1.00 \\ 
  87 &  17 &  49 & 1.00 \\ 
  88 &   6 &  50 & 1.00 \\ 
  89 &  16 &  50 & 1.00 \\ 
  90 &  21 &  50 & 1.00 \\ 
  91 &  40 &  50 & 1.00 \\ 
  92 &  11 &  52 & 1.00 \\ 
  93 &  44 &  52 & 1.00 \\ 
  94 &  51 &  52 & 1.00 \\ 
  95 &  19 &  53 & 1.00 \\ 
  96 &  52 &  53 & 1.00 \\ 
  97 &   2 &  54 & 1.00 \\ 
  98 &   7 &  54 & 1.00 \\ 
  99 &  39 &  54 & 1.00 \\ 
  100 &  43 &  54 & 1.00 \\ 
  \hline
  \end{tabular}
    \end{minipage}%
    \quad \quad
    \begin{minipage}{.4\linewidth}
      \centering
         \begin{tabular}{cccc}
            \hline
 & Protein  & Protein  & Inclusion  \\ 
 & & & Probability\\
  \hline
  101 &  44 &  54 & 1.00 \\ 
  102 &  51 &  54 & 1.00 \\ 
  103 &  53 &  54 & 1.00 \\ 
  104 &  48 &  55 & 1.00 \\ 
  105 &  49 &  55 & 1.00 \\ 
  106 &  50 &  55 & 1.00 \\ 
  107 &  53 &  55 & 1.00 \\ 
  108 &   6 &  56 & 1.00 \\ 
  109 &  53 &  56 & 1.00 \\ 
  110 &   9 &  57 & 1.00 \\ 
  111 &  32 &  57 & 1.00 \\ 
  112 &  49 &  57 & 1.00 \\ 
  113 &  29 &  58 & 1.00 \\ 
  114 &  36 &  58 & 1.00 \\ 
  115 &  40 &  58 & 1.00 \\ 
  116 &  14 &  59 & 1.00 \\ 
  117 &  28 &  59 & 1.00 \\ 
  118 &  45 &  59 & 1.00 \\ 
  119 &  57 &  59 & 1.00 \\ 
  120 &   9 &  60 & 1.00 \\ 
  121 &  23 &  60 & 1.00 \\ 
  122 &  25 &  60 & 1.00 \\ 
  123 &  37 &  60 & 1.00 \\ 
  124 &  57 &  60 & 1.00 \\ 
  125 &  27 &  61 & 1.00 \\ 
  126 &  57 &  61 & 1.00 \\ 
  127 &  14 &  62 & 1.00 \\ 
  128 &  32 &  62 & 1.00 \\ 
  129 &  14 &  63 & 1.00 \\ 
  130 &  22 &  63 & 1.00 \\ 
  131 &  40 &  63 & 1.00 \\ 
  132 &  43 &  63 & 1.00 \\ 
  133 &  50 &  63 & 1.00 \\ 
  134 &  57 &  63 & 1.00 \\ 
  135 &  61 &  63 & 1.00 \\ 
  136 &  62 &  63 & 1.00 \\ 
  137 &   9 &  64 & 1.00 \\ 
  138 &  10 &  64 & 1.00 \\ 
  139 &  16 &  64 & 1.00 \\ 
  140 &  38 &  64 & 1.00 \\ 
  141 &   1 &  65 & 1.00 \\ 
  142 &   6 &  65 & 1.00 \\ 
  143 &  41 &  65 & 1.00 \\ 
  144 &  63 &  65 & 0.92 \\ 
  145 &  26 &  66 & 1.00 \\ 
  146 &  30 &  66 & 0.91 \\ 
  147 &  57 &  66 & 1.00 \\ 
  148 &  60 &  66 & 1.00 \\ 
  149 &  65 &  66 & 1.00 \\ 
  150 &  26 &  67 & 1.00 \\ 
  \hline
  \end{tabular}
    \end{minipage}%
    \quad \quad
    \begin{minipage}{.4\linewidth}
      \centering
         \begin{tabular}{cccc}
            \hline
 & Protein  & Protein  & Inclusion   \\ 
 & & & Probability\\
  \hline
  151 &  35 &  67 & 1.00 \\ 
  152 &  50 &  67 & 1.00 \\ 
  153 &  51 &  67 & 1.00 \\ 
  154 &  10 &  68 & 1.00 \\ 
  155 &  20 &  68 & 1.00 \\ 
  156 &  34 &  68 & 1.00 \\ 
  157 &  43 &  68 & 0.80 \\ 
  158 &  46 &  68 & 1.00 \\ 
  159 &  37 &  69 & 1.00 \\ 
  160 &  56 &  69 & 1.00 \\ 
  161 &  37 &  70 & 1.00 \\ 
  162 &  69 &  70 & 1.00 \\ 
  163 &  11 &  71 & 1.00 \\ 
  164 &  43 &  71 & 1.00 \\ 
  165 &  56 &  71 & 1.00 \\ 
  166 &  65 &  71 & 1.00 \\ 
  167 &   3 &  72 & 1.00 \\ 
  168 &   7 &  72 & 1.00 \\ 
  169 &  30 &  72 & 1.00 \\ 
  170 &  37 &  72 & 1.00 \\ 
  171 &  48 &  72 & 1.00 \\ 
  172 &  64 &  72 & 1.00 \\ 
  173 &  17 &  73 & 1.00 \\ 
  174 &  37 &  73 & 1.00 \\ 
  175 &  60 &  73 & 1.00 \\ 
  176 &  72 &  73 & 1.00 \\ 
  177 &   8 &  74 & 1.00 \\ 
  178 &  14 &  74 & 1.00 \\ 
  179 &  19 &  74 & 1.00 \\ 
  180 &  22 &  74 & 1.00 \\ 
  181 &  30 &  74 & 1.00 \\ 
  182 &  31 &  74 & 1.00 \\ 
  183 &  37 &  74 & 1.00 \\ 
  184 &  58 &  74 & 1.00 \\ 
  185 &   5 &  75 & 1.00 \\ 
  186 &  15 &  75 & 1.00 \\ 
  187 &  16 &  75 & 1.00 \\ 
  188 &  35 &  75 & 1.00 \\ 
  189 &  49 &  75 & 1.00 \\ 
  190 &   6 &  76 & 1.00 \\ 
  191 &  17 &  76 & 1.00 \\ 
  192 &  22 &  76 & 1.00 \\ 
  193 &  32 &  76 & 1.00 \\ 
  194 &  35 &  76 & 1.00 \\ 
  195 &  38 &  76 & 1.00 \\ 
  196 &  47 &  76 & 1.00 \\ 
  197 &  51 &  76 & 1.00 \\ 
  & & & \\
  & & & \\
  & & & \\
   \hline
        \end{tabular}
        \end{minipage} 
    }
    \label{COAD:Omega:incprob}
\end{table}

\begin{table}[H]
\centering
    \caption{Indices of genes and proteins for OV cancer data. The first column lists the components of the dataset mRNA(genes) and the second column lists the components of the dataset RPPA(proteins).}
    \resizebox{0.8\textwidth}{!}{%
    \begin{minipage}{.5\linewidth}
      
      \centering
        \begin{tabular}{rll}
            \hline
 & Gene & Protein \\ 
  \hline
1 & BAK1 & BAK \\ 
  2 & BAX & BAX \\ 
  3 & BID & BID \\ 
  4 & BCL2L11 & BIM \\ 
  5 & CASP7 & CASPASE7CLEAVEDD198 \\ 
  6 & BAD & BADPS112 \\ 
  7 & BCL2 & BCL2 \\ 
  8 & BCL2L1 & BCLXL \\ 
  9 & BIRC2 & CIAP \\ 
  10 & CDK1 & CDK1 \\ 
  11 & CCNB1 & CYCLINB1 \\ 
  12 & CCNE1 & CYCLINE1 \\ 
  13 & CCNE2 & CYCLINE2 \\ 
  14 & CDKN1B & P27PT157 \\ 
  15 & PCNA & P27PT198 \\ 
  16 & FOXM1 & PCNA \\ 
  17 & TP53BP1 & FOXM1 \\ 
  18 & ATM & 53BP1 \\ 
  19 & BRCA2 & ATM \\ 
  20 & CHEK1 & BRCA2 \\ 
  21 & CHEK2 & CHK1PS345 \\ 
  22 & XRCC5 & CHK2PT68 \\ 
  23 & MRE11A & KU80 \\ 
  24 & TP53 & MRE11 \\ 
  25 & RAD50 & P53 \\ 
  26 & RAD51 & RAD50 \\ 
  27 & XRCC1 & RAD51 \\ 
  28 & FN1 & XRCC1 \\ 
  29 & CDH2 & FIBRONECTIN \\ 
  30 & COL6A1 & NCADHERIN \\ 
  31 & CLDN7 & COLLAGENVI \\ 
  32 & CDH1 & CLAUDIN7 \\ 
  33 & CTNNB1 & ECADHERIN \\ 
  34 & SERPINE1 & BETACATENIN \\ 
  35 & ESR1 & PAI1 \\ 
  36 & PGR & ERALPHA \\ 
  37 & AR & ERALPHAPS118 \\ 
  38 & INPP4B & PR \\ 
  39 & GATA3 & AR \\
   \hline
        \end{tabular}
    \end{minipage}%
    \quad \quad
    \begin{minipage}{.5\linewidth}
      \centering
         \begin{tabular}{rll}
            \hline
 & Gene & Protein \\ 
  \hline
  40 & AKT1 & INPP4B \\ 
  41 & AKT2 & GATA3 \\ 
  42 & AKT3 & AKTPS473 \\ 
  43 & GSK3A & AKTPT308 \\ 
  44 & GSK3B & GSK3ALPHABETAPS21S9 \\ 
  45 & AKT1S1 & GSK3PS9 \\ 
  46 & TSC2 & PRAS40PT246 \\ 
  47 & PTEN & TUBERINPT1462 \\ 
  48 & ARAF & PTEN \\ 
  49 & JUN & ARAFPS299 \\ 
  50 & RAF1 & CJUNPS73 \\ 
  51 & MAPK8 & CRAFPS338 \\ 
  52 & MAPK1 & JNKPT183Y185 \\ 
  53 & MAPK3 & MAPKPT202Y204 \\ 
  54 & MAP2K1 & MEK1PS217S221 \\ 
  55 & MAPK14 & P38PT180Y182 \\ 
  56 & RPS6KA1 & P90RSKPT359S363 \\ 
  57 & YBX1 & YB1PS102 \\ 
  58 & EGFR & EGFRPY1068 \\ 
  59 & ERBB2 & EGFRPY1173 \\ 
  60 & ERBB3 & HER2PY1248 \\ 
  61 & SHC1 & HER3PY1298 \\ 
  62 & SRC & SHCPY317 \\ 
  63 & EIF4EBP1 & SRCPY416 \\ 
  64 & RPS6KB1 & SRCPY527 \\ 
  65 & MTOR & 4EBP1PS65 \\ 
  66 & RPS6 & 4EBP1PT37T46 \\ 
  67 & RB1 & 4EBP1PT70 \\ 
  68 & CAV1 & P70S6KPT389 \\ 
  69 & MYH11 & MTORPS2448 \\ 
  70 & RAB11A & S6PS235S236 \\ 
  71 & RAB11B & S6PS240S244 \\ 
  72 & GAPDH & RBPS807S811 \\ 
  73 & RBM15 & CAVEOLIN1 \\ 
  74 &  & MYH11 \\ 
  75 &  & RAB11 \\ 
  76 &  & GAPDH \\ 
  77 &  & RBM15 \\ 
  & & \\
  \hline
        \end{tabular}
        \end{minipage} 
    }
    \label{tab:OV:geneIndices}
\end{table}

\begin{figure}[H]
    \centering
    \includegraphics[width = \linewidth]{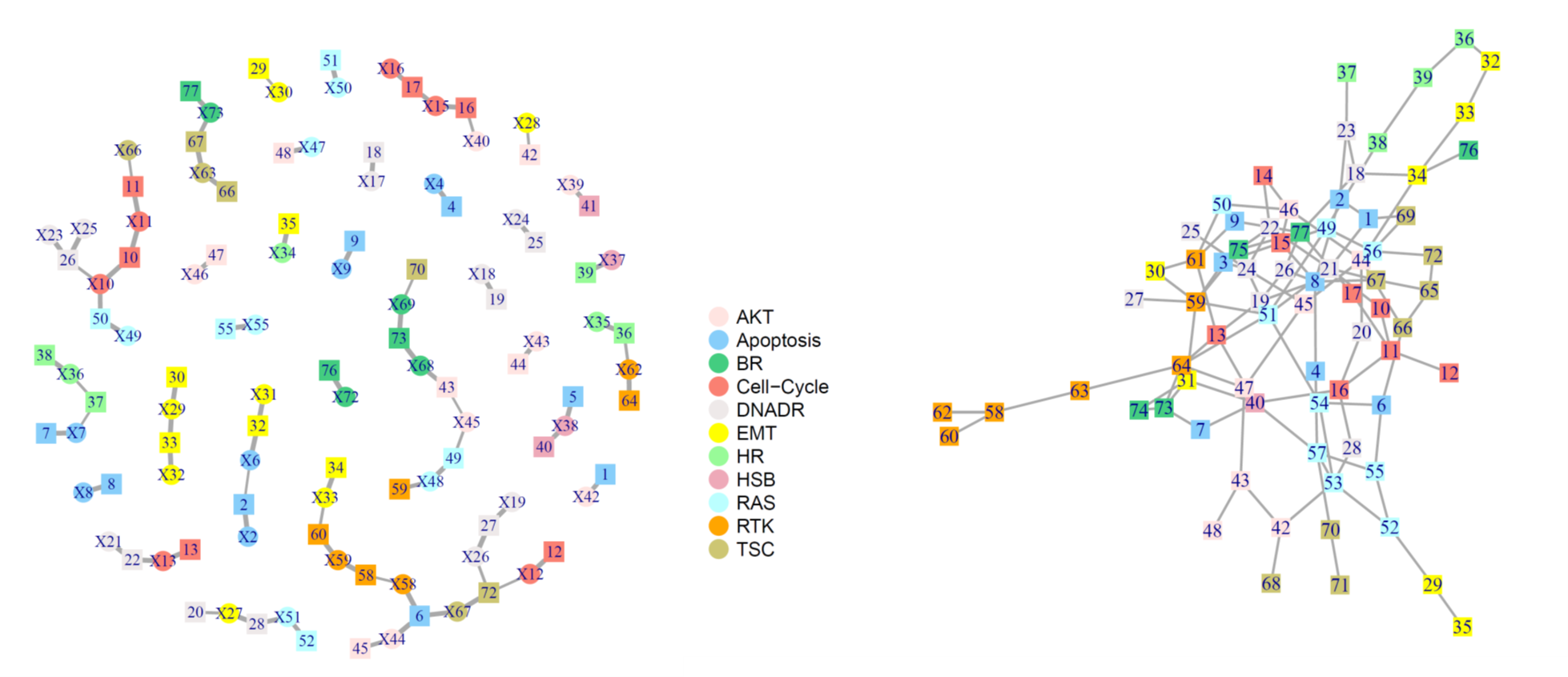}
    \caption{OV networks with 0.5 as the inclusion probability cutoff. The circles represent genes and the squares represent proteins. The different colors represent the different pathways listed in Table 14 in Appendix. Left : Network graph indicating associations between mRNA and protein. Right : Network graph indicating associations among proteins. The inclusion probabilities are listed in Tables \ref{OV:B:incprob} and \ref{OV:Omega:incprob} . \textit{All the edge widths are proportional to the corresponding inclusion probabilities.}}
    \label{fig:OVnetwork05}
\end{figure}

 \begin{table}[H]
\centering
    \caption{Inclusion probability of each edge for the OV network graph indicating associations between mRNA and proteins provided in the left panel of Figure \ref{fig:OVnetwork05}.}
    \resizebox{0.9\textwidth}{!}{%
    \begin{minipage}{.4\linewidth}
      
      \centering
        \begin{tabular}{cccc}
            \hline
 & Gene  & Protein  & Inclusion \\ 
 & & & Probability \\
  \hline
  1 & X42 &   1 & 0.94 \\ 
  2 & X2 &   2 & 1.00 \\ 
  3 & X6 &   2 & 0.50 \\ 
  4 & X4 &   4 & 1.00 \\ 
  5 & X38 &   5 & 0.98 \\ 
  6 & X44 &   6 & 1.00 \\ 
  7 & X58 &   6 & 1.00 \\ 
  8 & X67 &   6 & 0.95 \\ 
  9 & X7 &   7 & 1.00 \\ 
  10 & X8 &   8 & 1.00 \\ 
  11 & X9 &   9 & 1.00 \\ 
  12 & X10 &  10 & 1.00 \\ 
  13 & X11 &  10 & 1.00 \\ 
  14 & X11 &  11 & 1.00 \\ 
  15 & X66 &  11 & 0.54 \\ 
  16 & X12 &  12 & 1.00 \\ 
  17 & X13 &  13 & 1.00 \\ 
  18 & X15 &  16 & 1.00 \\ 
  19 & X40 &  16 & 0.60 \\ 
  20 & X15 &  17 & 0.95 \\ 
  21 & X16 &  17 & 1.00 \\ 
 
   \hline
        \end{tabular}
    \end{minipage}%
    \quad \quad
    \begin{minipage}{.4\linewidth}
      \centering
         \begin{tabular}{cccc}
            \hline
 & Gene  & Protein  & Inclusion   \\ 
 & & & Probability\\
  \hline
  22 & X17 &  18 & 1.00 \\ 
  23 & X18 &  19 & 1.00 \\ 
  24 & X27 &  20 & 0.61 \\ 
  25 & X13 &  22 & 1.00 \\ 
  26 & X21 &  22 & 1.00 \\ 
  27 & X24 &  25 & 1.00 \\ 
  28 & X10 &  26 & 0.60 \\ 
  29 & X23 &  26 & 1.00 \\ 
  30 & X25 &  26 & 1.00 \\ 
  31 & X19 &  27 & 0.99 \\ 
  32 & X26 &  27 & 1.00 \\ 
  33 & X27 &  28 & 1.00 \\ 
  34 & X51 &  28 & 0.93 \\ 
  35 & X30 &  29 & 0.55 \\ 
  36 & X29 &  30 & 1.00 \\ 
  37 & X6 &  32 & 1.00 \\ 
  38 & X31 &  32 & 1.00 \\ 
  39 & X29 &  33 & 1.00 \\ 
  40 & X32 &  33 & 1.00 \\ 
  41 & X33 &  34 & 1.00 \\ 
  42 & X34 &  35 & 1.00 \\
  \hline
  \end{tabular}
    \end{minipage}%
    \quad \quad
    \begin{minipage}{.4\linewidth}
      \centering
         \begin{tabular}{cccc}
            \hline
 & Gene  & Protein  & Inclusion   \\ 
 & & & Probability\\
  \hline
  43 & X35 &  36 & 1.00 \\ 
  44 & X62 &  36 & 0.62 \\ 
  45 & X7 &  37 & 0.50 \\ 
  46 & X36 &  37 & 0.57 \\ 
  47 & X36 &  38 & 1.00 \\ 
  48 & X37 &  39 & 1.00 \\ 
  49 & X38 &  40 & 1.00 \\ 
  50 & X39 &  41 & 1.00 \\ 
  51 & X28 &  42 & 0.57 \\ 
  52 & X45 &  43 & 0.50 \\ 
  53 & X68 &  43 & 0.91 \\ 
  54 & X43 &  44 & 1.00 \\ 
  55 & X44 &  45 & 0.84 \\ 
  56 & X46 &  47 & 1.00 \\ 
  57 & X47 &  48 & 1.00 \\ 
  58 & X45 &  49 & 0.59 \\ 
  59 & X48 &  49 & 1.00 \\ 
  60 & X10 &  50 & 0.94 \\ 
  61 & X49 &  50 & 1.00 \\ 
  62 & X50 &  51 & 1.00 \\ 
  63 & X51 &  52 & 0.90 \\
  \hline
  \end{tabular}
    \end{minipage}%
    \quad \quad
    \begin{minipage}{.4\linewidth}
      \centering
         \begin{tabular}{cccc}
            \hline
 & Gene  & Protein  & Inclusion  \\ 
 & & & Probability\\
  \hline
  64 & X55 &  55 & 1.00 \\ 
  65 & X58 &  58 & 0.52 \\ 
  66 & X59 &  58 & 1.00 \\ 
  67 & X48 &  59 & 0.83 \\ 
  68 & X33 &  60 & 0.60 \\ 
  69 & X59 &  60 & 1.00 \\ 
  70 & X62 &  64 & 0.99 \\ 
  71 & X63 &  66 & 1.00 \\ 
  72 & X63 &  67 & 1.00 \\ 
  73 & X73 &  67 & 0.88 \\ 
  74 & X69 &  70 & 0.53 \\ 
  75 & X12 &  72 & 0.52 \\ 
  76 & X26 &  72 & 0.59 \\ 
  77 & X67 &  72 & 0.98 \\ 
  78 & X68 &  73 & 1.00 \\ 
  79 & X69 &  73 & 1.00 \\ 
  80 & X72 &  76 & 1.00 \\ 
  81 & X73 &  77 & 1.00 \\ 
  & & & \\
  & & & \\
  & & & \\
   \hline
        \end{tabular}
        \end{minipage} 
    }
    \label{OV:B:incprob}
\end{table}
    
     \begin{table}[H]
\centering
    \caption{Inclusion probability of each edge for the OV network graph indicating associations among proteins provided in the right panel of Figure \ref{fig:OVnetwork05}.}
    \resizebox{0.9\textwidth}{!}{%
    \begin{minipage}{.4\linewidth}
      
      \centering
        \begin{tabular}{cccc}
            \hline
 & Gene  & Protein  & Inclusion  \\ 
 & & & Probability \\
  \hline
  1 &   1 &   2 & 1.00 \\ 
  2 &   2 &   8 & 1.00 \\ 
  3 &   4 &   8 & 1.00 \\ 
  4 &  10 &  11 & 1.00 \\ 
  5 &  11 &  12 & 1.00 \\ 
  6 &   3 &  15 & 1.00 \\ 
  7 &   8 &  15 & 1.00 \\ 
  8 &  11 &  16 & 1.00 \\ 
  9 &  11 &  17 & 1.00 \\ 
  10 &   8 &  19 & 1.00 \\ 
  11 &  13 &  19 & 1.00 \\ 
  12 &  16 &  20 & 1.00 \\ 
  13 &   1 &  21 & 1.00 \\ 
  14 &  10 &  21 & 1.00 \\ 
  15 &  14 &  22 & 1.00 \\ 
  16 &  21 &  22 & 1.00 \\ 
  17 &   2 &  23 & 1.00 \\ 
  18 &  18 &  23 & 1.00 \\ 
  19 &   3 &  24 & 1.00 \\ 
  20 &  22 &  24 & 1.00 \\ 
  21 &  24 &  25 & 1.00 \\ 
  22 &   8 &  26 & 1.00 \\ 
  23 &  19 &  26 & 1.00 \\ 
  24 &  16 &  28 & 1.00 \\ 
  25 &  32 &  33 & 1.00 \\ 
  26 &  18 &  34 & 1.00 \\ 
  27 &  33 &  34 & 1.00 \\ 
  28 &  29 &  35 & 1.00 \\ 
 
   \hline
        \end{tabular}
    \end{minipage}%
    \quad \quad
    \begin{minipage}{.4\linewidth}
      \centering
         \begin{tabular}{cccc}
            \hline
 & Gene  & Protein  & Inclusion   \\ 
 & & & Probability\\
  \hline
 29 &  32 &  36 & 1.00 \\ 
  30 &  23 &  37 & 1.00 \\ 
  31 &  36 &  39 & 1.00 \\ 
  32 &  38 &  39 & 1.00 \\ 
  33 &   7 &  40 & 1.00 \\ 
  34 &  13 &  40 & 1.00 \\ 
  35 &  16 &  40 & 1.00 \\ 
  36 &  31 &  40 & 1.00 \\ 
  37 &  42 &  43 & 1.00 \\ 
  38 &  20 &  44 & 1.00 \\ 
  39 &  34 &  44 & 1.00 \\ 
  40 &  24 &  45 & 1.00 \\ 
  41 &  44 &  45 & 1.00 \\ 
  42 &  14 &  46 & 1.00 \\ 
  43 &  15 &  46 & 1.00 \\ 
  44 &  44 &  46 & 1.00 \\ 
  45 &  43 &  47 & 1.00 \\ 
  46 &  45 &  47 & 1.00 \\ 
  47 &  43 &  48 & 1.00 \\ 
  48 &  15 &  49 & 1.00 \\ 
  49 &  38 &  49 & 1.00 \\ 
  50 &  46 &  50 & 1.00 \\ 
  51 &  21 &  51 & 1.00 \\ 
  52 &  24 &  51 & 1.00 \\ 
  53 &  49 &  51 & 1.00 \\ 
  54 &  29 &  52 & 1.00 \\ 
  55 &  28 &  53 & 1.00 \\ 
  56 &  42 &  53 & 1.00 \\
  \hline
  \end{tabular}
    \end{minipage}%
    \quad \quad
    \begin{minipage}{.4\linewidth}
      \centering
         \begin{tabular}{cccc}
            \hline
 & Gene  & Protein  & Inclusion   \\ 
 & & & Probability\\
  \hline
  57 &  52 &  53 & 1.00 \\ 
  58 &   6 &  54 & 1.00 \\ 
  59 &  51 &  54 & 1.00 \\ 
  60 &  53 &  54 & 1.00 \\ 
  61 &   6 &  55 & 1.00 \\ 
  62 &  52 &  55 & 1.00 \\ 
  63 &  45 &  56 & 1.00 \\ 
  64 &  49 &  56 & 1.00 \\ 
  65 &   4 &  57 & 1.00 \\ 
  66 &  40 &  57 & 1.00 \\ 
  67 &  53 &  57 & 1.00 \\ 
  68 &  55 &  57 & 1.00 \\ 
  69 &   3 &  59 & 1.00 \\ 
  70 &  27 &  59 & 1.00 \\ 
  71 &  30 &  59 & 1.00 \\ 
  72 &  31 &  59 & 1.00 \\ 
  73 &  51 &  59 & 1.00 \\ 
  74 &  58 &  60 & 1.00 \\ 
  75 &   9 &  61 & 1.00 \\ 
  76 &  13 &  61 & 1.00 \\ 
  77 &  30 &  61 & 1.00 \\ 
  78 &  50 &  61 & 1.00 \\ 
  79 &  58 &  62 & 1.00 \\ 
  80 &  58 &  63 & 1.00 \\ 
  81 &  13 &  64 & 1.00 \\ 
  82 &  47 &  64 & 1.00 \\ 
  83 &  51 &  64 & 1.00 \\ 
  84 &  63 &  64 & 1.00 \\
  \hline
  \end{tabular}
    \end{minipage}%
    \quad \quad
    \begin{minipage}{.4\linewidth}
      \centering
         \begin{tabular}{cccc}
            \hline
 & Gene  & Protein  & Inclusion  \\ 
 & & & Probability\\
  \hline
  85 &   6 &  66 & 1.00 \\ 
  86 &  44 &  66 & 1.00 \\ 
  87 &  65 &  66 & 1.00 \\ 
  88 &   8 &  67 & 1.00 \\ 
  89 &  21 &  67 & 1.00 \\ 
  90 &  65 &  67 & 1.00 \\ 
  91 &  42 &  68 & 1.00 \\ 
  92 &   1 &  69 & 1.00 \\ 
  93 &  56 &  69 & 1.00 \\ 
  94 &  57 &  70 & 1.00 \\ 
  95 &  70 &  71 & 1.00 \\ 
  96 &  56 &  72 & 1.00 \\ 
  97 &  65 &  72 & 1.00 \\ 
  98 &   7 &  73 & 1.00 \\ 
  99 &  31 &  73 & 1.00 \\ 
  100 &  64 &  73 & 1.00 \\ 
  101 &  31 &  74 & 1.00 \\ 
  102 &  73 &  74 & 1.00 \\ 
  103 &   3 &  75 & 1.00 \\ 
  104 &  22 &  75 & 1.00 \\ 
  105 &  59 &  75 & 0.78 \\ 
  106 &  34 &  76 & 1.00 \\ 
  107 &   9 &  77 & 1.00 \\ 
  108 &  17 &  77 & 1.00 \\ 
  109 &  18 &  77 & 1.00 \\ 
  110 &  26 &  77 & 1.00 \\ 
  111 &  75 &  77 & 1.00 \\
  & & & \\
   \hline
        \end{tabular}
        \end{minipage} 
    }
    \label{OV:Omega:incprob}
\end{table}

\begin{table}[H]
\centering
    \caption{Indices of genes and proteins for READ cancer data. The first column lists the components of the dataset mRNA(genes) and the second column lists the components of the dataset RPPA(proteins).}
    \resizebox{0.8\textwidth}{!}{%
    \begin{minipage}{.5\linewidth}
      
      \centering
        \begin{tabular}{rll}
            \hline
 & Gene & Protein \\ 
  \hline
1 & BAK1 & BAK \\ 
  2 & BAX & BAX \\ 
  3 & BID & BID \\ 
  4 & BCL2L11 & BIM \\ 
  5 & CASP7 & CASPASE7CLEAVEDD198 \\ 
  6 & BAD & BADPS112 \\ 
  7 & BCL2 & BCL2 \\ 
  8 & BCL2L1 & BCLXL \\ 
  9 & BIRC2 & CIAP \\ 
  10 & CDK1 & CDK1 \\ 
  11 & CCNB1 & CYCLINB1 \\ 
  12 & CCNE1 & CYCLINE1 \\ 
  13 & CCNE2 & CYCLINE2 \\ 
  14 & CDKN1B & P27PT157 \\ 
  15 & PCNA & P27PT198 \\ 
  16 & FOXM1 & PCNA \\ 
  17 & TP53BP1 & FOXM1 \\ 
  18 & ATM & 53BP1 \\ 
  19 & BRCA2 & ATM \\ 
  20 & CHEK1 & CHK1PS345 \\ 
  21 & CHEK2 & CHK2PT68 \\ 
  22 & XRCC5 & KU80 \\ 
  23 & MRE11A & MRE11 \\ 
  24 & TP53 & P53 \\ 
  25 & RAD50 & RAD50 \\ 
  26 & RAD51 & RAD51 \\ 
  27 & XRCC1 & XRCC1 \\ 
  28 & FN1 & FIBRONECTIN \\ 
  29 & CDH2 & NCADHERIN \\ 
  30 & COL6A1 & COLLAGENVI \\ 
  31 & CLDN7 & CLAUDIN7 \\ 
  32 & CDH1 & ECADHERIN \\ 
  33 & CTNNB1 & BETACATENIN \\ 
  34 & SERPINE1 & PAI1 \\ 
  35 & ESR1 & ERALPHA \\ 
  36 & PGR & ERALPHAPS118 \\ 
  37 & AR & PR \\ 
  38 & INPP4B & AR \\
   \hline
        \end{tabular}
    \end{minipage}%
    \quad \quad
    \begin{minipage}{.5\linewidth}
      \centering
         \begin{tabular}{rll}
            \hline
 & Gene & Protein \\ 
  \hline
 39 & GATA3 & INPP4B \\ 
  40 & AKT1 & GATA3 \\ 
  41 & AKT2 & AKTPS473 \\ 
  42 & AKT3 & AKTPT308 \\ 
  43 & GSK3A & GSK3ALPHABETAPS21S9 \\ 
  44 & GSK3B & GSK3PS9 \\ 
  45 & AKT1S1 & PRAS40PT246 \\ 
  46 & TSC2 & TUBERINPT1462 \\ 
  47 & PTEN & PTEN \\ 
  48 & ARAF & ARAFPS299 \\ 
  49 & JUN & CJUNPS73 \\ 
  50 & RAF1 & CRAFPS338 \\ 
  51 & MAPK8 & JNKPT183Y185 \\ 
  52 & MAPK1 & MAPKPT202Y204 \\ 
  53 & MAPK3 & MEK1PS217S221 \\ 
  54 & MAP2K1 & P38PT180Y182 \\ 
  55 & MAPK14 & P90RSKPT359S363 \\ 
  56 & RPS6KA1 & YB1PS102 \\ 
  57 & YBX1 & EGFRPY1068 \\ 
  58 & EGFR & EGFRPY1173 \\ 
  59 & ERBB2 & HER2PY1248 \\ 
  60 & ERBB3 & HER3PY1298 \\ 
  61 & SHC1 & SHCPY317 \\ 
  62 & SRC & SRCPY416 \\ 
  63 & EIF4EBP1 & SRCPY527 \\ 
  64 & RPS6KB1 & 4EBP1PS65 \\ 
  65 & MTOR & 4EBP1PT37T46 \\ 
  66 & RPS6 & 4EBP1PT70 \\ 
  67 & RB1 & P70S6KPT389 \\ 
  68 & CAV1 & MTORPS2448 \\ 
  69 & MYH11 & S6PS235S236 \\ 
  70 & RAB11A & S6PS240S244 \\ 
  71 & RAB11B & RBPS807S811 \\ 
  72 & GAPDH & CAVEOLIN1 \\ 
  73 & RBM15 & MYH11 \\ 
  74 &  & RAB11 \\ 
  75 &  & GAPDH \\ 
  76 &  & RBM15 \\ 
  \hline
        \end{tabular}
        \end{minipage} 
    }
    \label{tab:READ:geneIndices}
\end{table}

\begin{figure}[H]
    \centering
    \includegraphics[width = \linewidth]{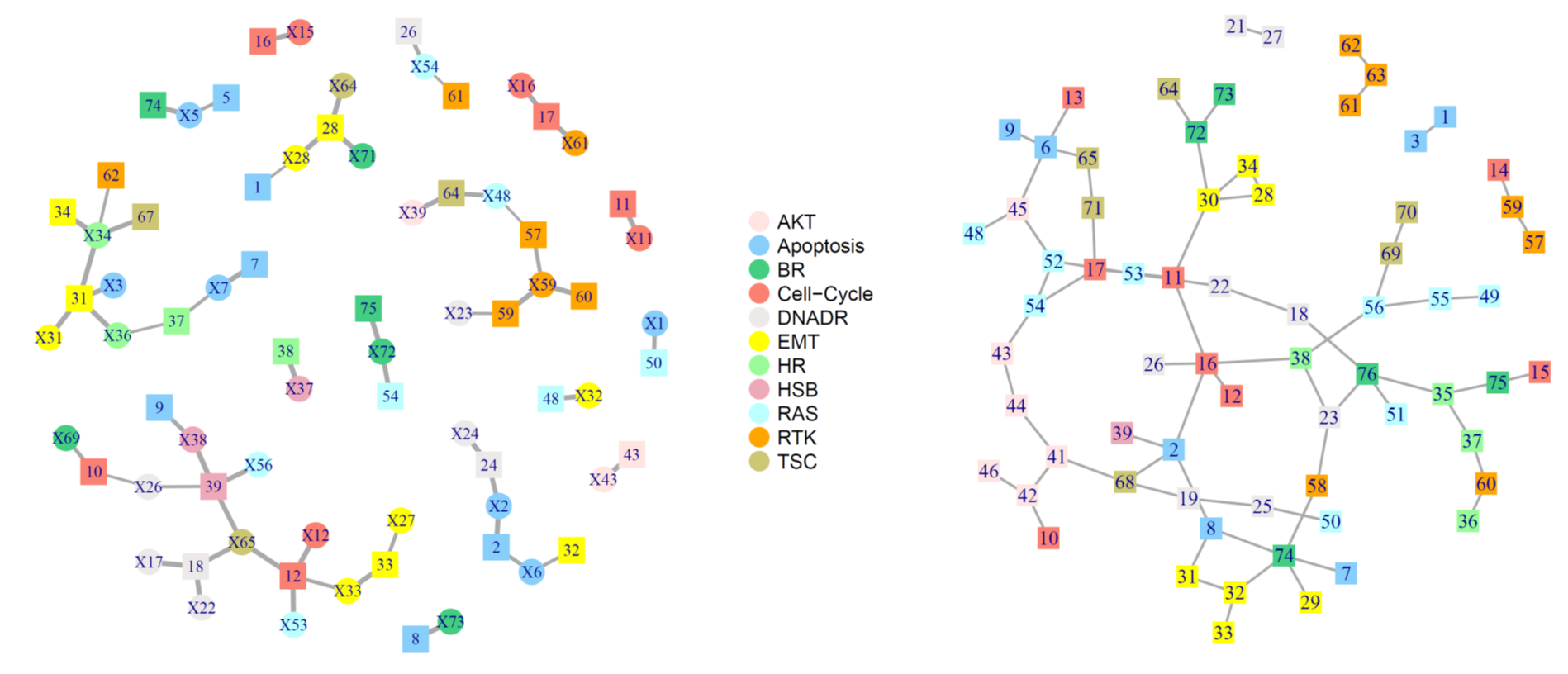}
    \caption{READ networks with 0.5 as the inclusion probability cutoff. The circles represent genes and the squares represent proteins. The different colors represent the different pathways listed in the Table 14 in Appendix. Left : Network graph indicating associations between mRNA and protein. Right : Network graph indicating associations among proteins. The inclusion probabilities are listed in Tables \ref{READ:B:incprob} and \ref{READ:Omega:incprob} . \textit{All the edge widths are proportional to the corresponding inclusion probabilities.}}
    \label{fig:READnetwork05}
\end{figure}

 \begin{table}[H]
\centering
    \caption{Inclusion probability of each edge for the READ network graph indicating associations between mRNA and proteins provided in the left panel of Figure \ref{fig:READnetwork05}.}
    \resizebox{0.9\textwidth}{!}{%
    \begin{minipage}{.4\linewidth}
      
      \centering
        \begin{tabular}{cccc}
            \hline
 & Gene  & Protein  & Inclusion   \\ 
 & & & Probability \\
  \hline
   1 & X28 &   1 & 0.54 \\ 
  2 & X2 &   2 & 1.00 \\ 
  3 & X6 &   2 & 0.76 \\ 
  4 & X5 &   5 & 0.68 \\ 
  5 & X7 &   7 & 1.00 \\ 
  6 & X73 &   8 & 1.00 \\ 
  7 & X38 &   9 & 0.83 \\ 
  8 & X26 &  10 & 0.54 \\ 
  9 & X69 &  10 & 0.75 \\ 
  10 & X11 &  11 & 1.00 \\ 
  11 & X12 &  12 & 1.00 \\ 
  12 & X33 &  12 & 0.72 \\ 
  13 & X53 &  12 & 1.00 \\ 
  14 & X65 &  12 & 0.96 \\ 
  15 & X15 &  16 & 1.00 \\
   \hline
        \end{tabular}
    \end{minipage}%
    \quad \quad
    \begin{minipage}{.4\linewidth}
      \centering
         \begin{tabular}{cccc}
            \hline
 & Gene  & Protein  & Inclusion   \\ 
 & & & Probability\\
  \hline
 16 & X16 &  17 & 0.97 \\ 
  17 & X61 &  17 & 0.98 \\ 
  18 & X17 &  18 & 1.00 \\ 
  19 & X22 &  18 & 0.98 \\ 
  20 & X65 &  18 & 0.94 \\ 
  21 & X2 &  24 & 0.91 \\ 
  22 & X24 &  24 & 1.00 \\ 
  23 & X54 &  26 & 0.82 \\ 
  24 & X28 &  28 & 1.00 \\ 
  25 & X64 &  28 & 1.00 \\ 
  26 & X71 &  28 & 1.00 \\ 
  27 & X3 &  31 & 0.97 \\ 
  28 & X31 &  31 & 1.00 \\ 
  29 & X34 &  31 & 0.99 \\ 
  30 & X36 &  31 & 0.99 \\ 
  \hline
  \end{tabular}
    \end{minipage}%
    \quad \quad
    \begin{minipage}{.4\linewidth}
      \centering
         \begin{tabular}{cccc}
            \hline
 & Gene  & Protein  & Inclusion   \\ 
 & & & Probability\\
  \hline
  31 & X6 &  32 & 0.67 \\ 
  32 & X27 &  33 & 0.73 \\ 
  33 & X33 &  33 & 1.00 \\ 
  34 & X34 &  34 & 1.00 \\ 
  35 & X7 &  37 & 0.65 \\ 
  36 & X36 &  37 & 0.57 \\ 
  37 & X37 &  38 & 1.00 \\ 
  38 & X26 &  39 & 0.63 \\ 
  39 & X38 &  39 & 1.00 \\ 
  40 & X56 &  39 & 0.94 \\ 
  41 & X65 &  39 & 0.93 \\ 
  42 & X43 &  43 & 1.00 \\ 
  43 & X32 &  48 & 0.97 \\ 
  44 & X1 &  50 & 0.84 \\ 
  45 & X72 &  54 & 0.94 \\ 
  \hline
  \end{tabular}
    \end{minipage}%
    \quad \quad
    \begin{minipage}{.4\linewidth}
      \centering
         \begin{tabular}{cccc}
            \hline
 & Gene  & Protein  & Inclusion  \\ 
 & & & Probability\\
  \hline
  46 & X48 &  57 & 0.52 \\ 
  47 & X59 &  57 & 0.78 \\ 
  48 & X23 &  59 & 0.60 \\ 
  49 & X59 &  59 & 1.00 \\ 
  50 & X59 &  60 & 1.00 \\ 
  51 & X54 &  61 & 0.52 \\ 
  52 & X34 &  62 & 0.61 \\ 
  53 & X39 &  64 & 0.97 \\ 
  54 & X48 &  64 & 0.88 \\ 
  55 & X34 &  67 & 0.92 \\ 
  56 & X5 &  74 & 0.97 \\ 
  57 & X72 &  75 & 1.00 \\
   & & & \\
   & & & \\
   & & & \\
   \hline
        \end{tabular}
        \end{minipage} 
    }
    \label{READ:B:incprob}
\end{table}
    
     \begin{table}[H]
\centering
    \caption{Inclusion probability of each edge for the READ network graph indicating associations among proteins provided in the right panel of Figure \ref{fig:READnetwork05}.}
    \resizebox{0.9\textwidth}{!}{%
    \begin{minipage}{.4\linewidth}
      
      \centering
        \begin{tabular}{cccc}
            \hline
 & Protein  & Protein  & Inclusion  \\ 
 & & & Probability \\
  \hline
   1 &   1 &   3 & 1.00 \\ 
  2 &   2 &   8 & 1.00 \\ 
  3 &   6 &   9 & 1.00 \\ 
  4 &   6 &  13 & 1.00 \\ 
  5 &   2 &  16 & 1.00 \\ 
  6 &  11 &  16 & 1.00 \\ 
  7 &  12 &  16 & 1.00 \\ 
  8 &  11 &  17 & 1.00 \\ 
  9 &  18 &  22 & 1.00 \\ 
  10 &  19 &  25 & 1.00 \\ 
  11 &  16 &  26 & 1.00 \\ 
  12 &  21 &  27 & 1.00 \\ 
  13 &  11 &  30 & 1.00 \\ 
  14 &  28 &  30 & 1.00 \\ 
  15 &   8 &  31 & 1.00 \\ 
  16 &  31 &  32 & 1.00 \\ 
  17 &  32 &  33 & 1.00 \\ 
  18 &  28 &  34 & 1.00 \\
   \hline
        \end{tabular}
    \end{minipage}%
    \quad \quad
    \begin{minipage}{.4\linewidth}
      \centering
         \begin{tabular}{cccc}
            \hline
 & Protein  & Protein  & Inclusion \\ 
 & & & Probability\\
  \hline
 19 &  30 &  34 & 1.00 \\ 
  20 &  35 &  37 & 1.00 \\ 
  21 &  16 &  38 & 1.00 \\ 
  22 &  23 &  38 & 1.00 \\ 
  23 &   2 &  39 & 1.00 \\ 
  24 &  10 &  42 & 1.00 \\ 
  25 &  41 &  42 & 1.00 \\ 
  26 &  41 &  44 & 1.00 \\ 
  27 &  43 &  44 & 1.00 \\ 
  28 &   6 &  45 & 1.00 \\ 
  29 &  42 &  46 & 1.00 \\ 
  30 &  45 &  48 & 1.00 \\ 
  31 &  25 &  50 & 1.00 \\ 
  32 &  45 &  52 & 1.00 \\ 
  33 &  22 &  53 & 1.00 \\ 
  34 &  52 &  53 & 1.00 \\ 
  35 &  17 &  54 & 1.00 \\ 
  36 &  43 &  54 & 1.00 \\ 
  \hline
  \end{tabular}
    \end{minipage}%
    \quad \quad
    \begin{minipage}{.4\linewidth}
      \centering
         \begin{tabular}{cccc}
            \hline
 & Protein  & Protein  & Inclusion  \\ 
 & & & Probability\\
  \hline
  37 &  52 &  54 & 1.00 \\ 
  38 &  49 &  55 & 1.00 \\ 
  39 &  38 &  56 & 1.00 \\ 
  40 &  55 &  56 & 1.00 \\ 
  41 &  23 &  58 & 1.00 \\ 
  42 &  14 &  59 & 1.00 \\ 
  43 &  57 &  59 & 1.00 \\ 
  44 &  36 &  60 & 1.00 \\ 
  45 &  37 &  60 & 1.00 \\ 
  46 &  61 &  63 & 1.00 \\ 
  47 &  62 &  63 & 1.00 \\ 
  48 &   6 &  65 & 1.00 \\ 
  49 &   2 &  68 & 1.00 \\ 
  50 &  19 &  68 & 1.00 \\ 
  51 &  41 &  68 & 1.00 \\ 
  52 &  56 &  69 & 1.00 \\ 
  53 &  69 &  70 & 1.00 \\ 
  54 &  17 &  71 & 1.00 \\
  \hline
  \end{tabular}
    \end{minipage}%
    \quad \quad
    \begin{minipage}{.4\linewidth}
      \centering
         \begin{tabular}{cccc}
            \hline
 & Protein  & Protein  & Inclusion \\ 
 & & & Probability\\
  \hline
  55 &  65 &  71 & 1.00 \\ 
  56 &  30 &  72 & 1.00 \\ 
  57 &  64 &  72 & 1.00 \\ 
  58 &  72 &  73 & 1.00 \\ 
  59 &   7 &  74 & 1.00 \\ 
  60 &   8 &  74 & 1.00 \\ 
  61 &  29 &  74 & 1.00 \\ 
  62 &  32 &  74 & 1.00 \\ 
  63 &  58 &  74 & 0.95 \\ 
  64 &  15 &  75 & 1.00 \\ 
  65 &  35 &  75 & 1.00 \\ 
  66 &  18 &  76 & 1.00 \\ 
  67 &  23 &  76 & 1.00 \\ 
  68 &  35 &  76 & 1.00 \\ 
  69 &  51 &  76 & 1.00 \\ 
  & & & \\
  & & & \\
  & & & \\
     \hline
        \end{tabular}
        \end{minipage} 
    }
    \label{READ:Omega:incprob}
\end{table}
    
    \begin{table}[H]
\centering
    \caption{Indices of genes and proteins for SKCM cancer data. The first column lists the components of the dataset mRNA(genes) and the second column lists the components of the dataset RPPA(proteins).}
    \resizebox{0.8\textwidth}{!}{%
    \begin{minipage}{.5\linewidth}
      
      \centering
        \begin{tabular}{rll}
            \hline
 & Gene & Protein \\ 
  \hline
   1 & BAK1 & BAK \\ 
  2 & BAX & BAX \\ 
  3 & BID & BID \\ 
  4 & BCL2L11 & BIM \\ 
  5 & CASP7 & CASPASE7CLEAVEDD198 \\ 
  6 & BAD & BADPS112 \\ 
  7 & BCL2 & BCL2 \\ 
  8 & BCL2L1 & BCLXL \\ 
  9 & BIRC2 & CIAP \\ 
  10 & CDK1 & CDK1 \\ 
  11 & CCNB1 & CYCLINB1 \\ 
  12 & CCNE1 & CYCLINE1 \\ 
  13 & CCNE2 & CYCLINE2 \\ 
  14 & CDKN1B & P27PT157 \\ 
  15 & PCNA & P27PT198 \\ 
  16 & FOXM1 & PCNA \\ 
  17 & TP53BP1 & FOXM1 \\ 
  18 & ATM & 53BP1 \\ 
  19 & BRCA2 & ATM \\ 
  20 & CHEK1 & CHK1PS345 \\ 
  21 & CHEK2 & CHK2PT68 \\ 
  22 & XRCC5 & KU80 \\ 
  23 & MRE11A & MRE11 \\ 
  24 & TP53 & P53 \\ 
  25 & RAD50 & RAD50 \\ 
  26 & RAD51 & RAD51 \\ 
  27 & XRCC1 & XRCC1 \\ 
  28 & FN1 & FIBRONECTIN \\ 
  29 & CDH2 & NCADHERIN \\ 
  30 & COL6A1 & COLLAGENVI \\ 
  31 & CLDN7 & CLAUDIN7 \\ 
  32 & CDH1 & ECADHERIN \\ 
  33 & CTNNB1 & BETACATENIN \\ 
  34 & SERPINE1 & PAI1 \\ 
  35 & ESR1 & ERALPHA \\ 
  36 & PGR & ERALPHAPS118 \\ 
  37 & AR & PR \\ 
  38 & INPP4B & AR \\
   \hline
        \end{tabular}
    \end{minipage}%
    \quad \quad
    \begin{minipage}{.5\linewidth}
      \centering
         \begin{tabular}{rll}
            \hline
 & Gene & Protein \\ 
  \hline
 39 & GATA3 & INPP4B \\ 
  40 & AKT1 & GATA3 \\ 
  41 & AKT2 & AKTPS473 \\ 
  42 & AKT3 & AKTPT308 \\ 
  43 & GSK3A & GSK3ALPHABETAPS21S9 \\ 
  44 & GSK3B & GSK3PS9 \\ 
  45 & AKT1S1 & PRAS40PT246 \\ 
  46 & TSC2 & TUBERINPT1462 \\ 
  47 & PTEN & PTEN \\ 
  48 & ARAF & ARAFPS299 \\ 
  49 & JUN & CJUNPS73 \\ 
  50 & RAF1 & CRAFPS338 \\ 
  51 & MAPK8 & JNKPT183Y185 \\ 
  52 & MAPK1 & MAPKPT202Y204 \\ 
  53 & MAPK3 & MEK1PS217S221 \\ 
  54 & MAP2K1 & P38PT180Y182 \\ 
  55 & MAPK14 & P90RSKPT359S363 \\ 
  56 & RPS6KA1 & YB1PS102 \\ 
  57 & YBX1 & EGFRPY1068 \\ 
  58 & EGFR & EGFRPY1173 \\ 
  59 & ERBB2 & HER2PY1248 \\ 
  60 & ERBB3 & HER3PY1298 \\ 
  61 & SHC1 & SHCPY317 \\ 
  62 & SRC & SRCPY416 \\ 
  63 & EIF4EBP1 & SRCPY527 \\ 
  64 & RPS6KB1 & 4EBP1PS65 \\ 
  65 & MTOR & 4EBP1PT37T46 \\ 
  66 & RPS6 & 4EBP1PT70 \\ 
  67 & RB1 & P70S6KPT389 \\ 
  68 & CAV1 & MTORPS2448 \\ 
  69 & MYH11 & S6PS235S236 \\ 
  70 & RAB11A & S6PS240S244 \\ 
  71 & RAB11B & RBPS807S811 \\ 
  72 & GAPDH & CAVEOLIN1 \\ 
  73 & RBM15 & MYH11 \\ 
  74 & 1 & RAB11 \\ 
  75 & 2 & GAPDH \\ 
  76 & 3 & RBM15 \\ 
  \hline
        \end{tabular}
        \end{minipage} 
    }
    \label{tab:SKCM:geneIndices}
\end{table}
    
    \begin{figure}[H]
    \centering
    \includegraphics[width = \linewidth]{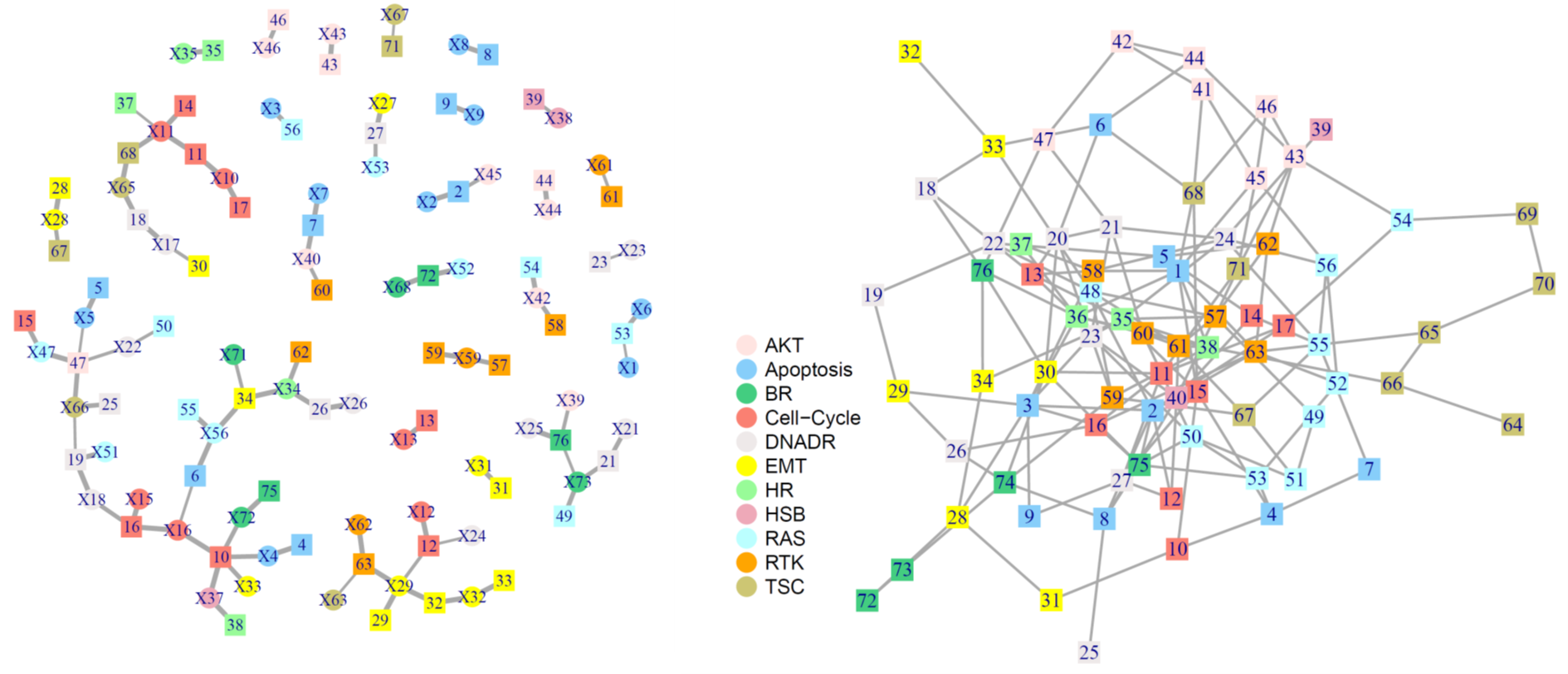}
    \caption{SKCM networks with 0.5 as the inclusion probability cutoff. The circles represent genes and the squares represent proteins. The different colors represent the different pathways listed in Table 14 in Appendix. Left : Network graph indicating associations between mRNA and protein. Right : Network graph indicating associations among proteins. The inclusion probabilities are listed in Table . \textit{All the edge widths are proportional to the corresponding inclusion probabilities.}}
    \label{fig:SKCMnetwork05}
\end{figure}

 \begin{table}[H]
\centering
    \caption{Inclusion probability of each edge for the SKCM network graph indicating associations between mRNA and proteins provided in the left panel of Figure \ref{fig:SKCMnetwork05}.}
    \resizebox{0.9\textwidth}{!}{%
    \begin{minipage}{.4\linewidth}
      
      \centering
        \begin{tabular}{cccc}
            \hline
 & Gene  & Protein  & Inclusion   \\ 
 & & & Probability \\
  \hline
1 & X2 &   2 & 1.00 \\ 
  2 & X45 &   2 & 0.73 \\ 
  3 & X4 &   4 & 1.00 \\ 
  4 & X5 &   5 & 1.00 \\ 
  5 & X16 &   6 & 0.56 \\ 
  6 & X56 &   6 & 0.97 \\ 
  7 & X7 &   7 & 1.00 \\ 
  8 & X40 &   7 & 0.96 \\ 
  9 & X8 &   8 & 1.00 \\ 
  10 & X9 &   9 & 1.00 \\ 
  11 & X4 &  10 & 0.89 \\ 
  12 & X16 &  10 & 0.98 \\ 
  13 & X33 &  10 & 0.88 \\ 
  14 & X37 &  10 & 1.00 \\ 
  15 & X72 &  10 & 0.86 \\ 
  16 & X10 &  11 & 1.00 \\ 
  17 & X11 &  11 & 1.00 \\ 
  18 & X12 &  12 & 1.00 \\ 
  19 & X24 &  12 & 0.53 \\ 
  20 & X29 &  12 & 0.65 \\ 
  21 & X13 &  13 & 1.00 \\ 
  22 & X11 &  14 & 1.00 \\ 
 
   \hline
        \end{tabular}
    \end{minipage}%
    \quad \quad
    \begin{minipage}{.4\linewidth}
      \centering
         \begin{tabular}{cccc}
            \hline
 & Gene  & Protein  & Inclusion   \\ 
 & & & Probability\\
  \hline
 23 & X47 &  15 & 0.90 \\ 
  24 & X15 &  16 & 1.00 \\ 
  25 & X16 &  16 & 1.00 \\ 
  26 & X18 &  16 & 0.84 \\ 
  27 & X10 &  17 & 1.00 \\ 
  28 & X17 &  18 & 1.00 \\ 
  29 & X65 &  18 & 0.98 \\ 
  30 & X18 &  19 & 1.00 \\ 
  31 & X51 &  19 & 0.99 \\ 
  32 & X66 &  19 & 0.55 \\ 
  33 & X21 &  21 & 0.88 \\ 
  34 & X73 &  21 & 0.86 \\ 
  35 & X23 &  23 & 0.53 \\ 
  36 & X66 &  25 & 0.90 \\ 
  37 & X26 &  26 & 1.00 \\ 
  38 & X34 &  26 & 0.98 \\ 
  39 & X27 &  27 & 1.00 \\ 
  40 & X53 &  27 & 0.79 \\ 
  41 & X28 &  28 & 1.00 \\ 
  42 & X29 &  29 & 1.00 \\ 
  43 & X17 &  30 & 0.66 \\ 
  44 & X31 &  31 & 1.00 \\
  \hline
  \end{tabular}
    \end{minipage}%
    \quad \quad
    \begin{minipage}{.4\linewidth}
      \centering
         \begin{tabular}{cccc}
            \hline
 & Gene  & Protein  & Inclusion  \\ 
 & & & Probability\\
  \hline
  45 & X29 &  32 & 1.00 \\ 
  46 & X32 &  32 & 1.00 \\ 
  47 & X32 &  33 & 1.00 \\ 
  48 & X34 &  34 & 1.00 \\ 
  49 & X56 &  34 & 0.99 \\ 
  50 & X71 &  34 & 0.69 \\ 
  51 & X35 &  35 & 1.00 \\ 
  52 & X11 &  37 & 0.52 \\ 
  53 & X37 &  38 & 1.00 \\ 
  54 & X38 &  39 & 0.68 \\ 
  55 & X43 &  43 & 1.00 \\ 
  56 & X44 &  44 & 1.00 \\ 
  57 & X46 &  46 & 0.87 \\ 
  58 & X5 &  47 & 0.53 \\ 
  59 & X22 &  47 & 0.64 \\ 
  60 & X47 &  47 & 1.00 \\ 
  61 & X66 &  47 & 0.91 \\ 
  62 & X73 &  49 & 0.91 \\ 
  63 & X22 &  50 & 0.62 \\ 
  64 & X1 &  53 & 0.52 \\ 
  65 & X6 &  53 & 0.93 \\ 
  66 & X42 &  54 & 0.98 \\ 
  \hline
  \end{tabular}
    \end{minipage}%
    \quad \quad
    \begin{minipage}{.4\linewidth}
      \centering
         \begin{tabular}{cccc}
            \hline
 & Gene  & Protein  & Inclusion   \\ 
 & & & Probability\\
  \hline
  67 & X56 &  55 & 1.00 \\ 
  68 & X3 &  56 & 0.98 \\ 
  69 & X59 &  57 & 0.92 \\ 
  70 & X42 &  58 & 0.90 \\ 
  71 & X59 &  59 & 1.00 \\ 
  72 & X40 &  60 & 0.73 \\ 
  73 & X61 &  61 & 0.63 \\ 
  74 & X34 &  62 & 0.91 \\ 
  75 & X29 &  63 & 0.99 \\ 
  76 & X62 &  63 & 1.00 \\ 
  77 & X63 &  63 & 0.52 \\ 
  78 & X28 &  67 & 0.79 \\ 
  79 & X11 &  68 & 0.93 \\ 
  80 & X65 &  68 & 1.00 \\ 
  81 & X67 &  71 & 0.51 \\ 
  82 & X52 &  72 & 1.00 \\ 
  83 & X68 &  72 & 1.00 \\ 
  84 & X72 &  75 & 1.00 \\ 
  85 & X25 &  76 & 1.00 \\ 
  86 & X39 &  76 & 0.72 \\ 
  87 & X73 &  76 & 0.58 \\ 
  & & & \\
   \hline
        \end{tabular}
        \end{minipage} 
    }
    \label{SKCM:B:incprob}
\end{table}
    
     \begin{table}[H]
\centering
    \caption{Inclusion probability of each edge for the SKCM network graph indicating associations among proteins provided in the right panel of Figure \ref{fig:SKCMnetwork05}.}
    \resizebox{0.9\textwidth}{!}{%
    \begin{minipage}{.4\linewidth}
      
      \centering
        \begin{tabular}{cccc}
            \hline
 & Protein  & Protein  & Inclusion   \\ 
 & & & Probability \\
  \hline
1 &   2 &   3 & 1.00 \\ 
  2 &   4 &   7 & 1.00 \\ 
  3 &   2 &   8 & 1.00 \\ 
  4 &   3 &   9 & 1.00 \\ 
  5 &   4 &  10 & 1.00 \\ 
  6 &   2 &  12 & 1.00 \\ 
  7 &   5 &  14 & 1.00 \\ 
  8 &   1 &  15 & 1.00 \\ 
  9 &   4 &  15 & 1.00 \\ 
  10 &  10 &  15 & 1.00 \\ 
  11 &   3 &  16 & 1.00 \\ 
  12 &  11 &  16 & 1.00 \\ 
  13 &   2 &  17 & 1.00 \\ 
  14 &  11 &  17 & 1.00 \\ 
  15 &   6 &  20 & 1.00 \\ 
  16 &  13 &  20 & 1.00 \\ 
  17 &  20 &  21 & 1.00 \\ 
  18 &   5 &  22 & 1.00 \\ 
  19 &  13 &  22 & 1.00 \\ 
  20 &  18 &  22 & 1.00 \\ 
  21 &  19 &  22 & 1.00 \\ 
  22 &   3 &  23 & 1.00 \\ 
  23 &  20 &  23 & 1.00 \\ 
  24 &  21 &  23 & 1.00 \\ 
  25 &  21 &  24 & 1.00 \\ 
  26 &   8 &  25 & 1.00 \\ 
  27 &  16 &  26 & 1.00 \\ 
  28 &   9 &  27 & 1.00 \\ 
  29 &  11 &  27 & 1.00 \\ 
  30 &  12 &  27 & 1.00 \\ 
  31 &  15 &  27 & 1.00 \\ 
  32 &   3 &  28 & 1.00 \\ 
  33 &   3 &  29 & 1.00 \\ 
  34 &  19 &  29 & 1.00 \\ 
  35 &  26 &  29 & 1.00 \\ 
  36 &  11 &  30 & 1.00 \\ 
  37 &  20 &  30 & 1.00 \\ 
  38 &  10 &  31 & 1.00 \\ 
  39 &  28 &  31 & 1.00 \\
 
   \hline
        \end{tabular}
    \end{minipage}%
    \quad \quad
    \begin{minipage}{.4\linewidth}
      \centering
         \begin{tabular}{cccc}
            \hline
 & Protein  & Protein  & Inclusion  \\ 
 & & & Probability\\
  \hline
 40 &   6 &  33 & 1.00 \\ 
  41 &  18 &  33 & 1.00 \\ 
  42 &  32 &  33 & 1.00 \\ 
  43 &  28 &  34 & 1.00 \\ 
  44 &  13 &  35 & 1.00 \\ 
  45 &  15 &  35 & 1.00 \\ 
  46 &  23 &  35 & 1.00 \\ 
  47 &  24 &  35 & 1.00 \\ 
  48 &  34 &  35 & 0.76 \\ 
  49 &   1 &  36 & 1.00 \\ 
  50 &   3 &  36 & 1.00 \\ 
  51 &  13 &  36 & 1.00 \\ 
  52 &  23 &  37 & 1.00 \\ 
  53 &   1 &  38 & 1.00 \\ 
  54 &  35 &  38 & 1.00 \\ 
  55 &  36 &  38 & 1.00 \\ 
  56 &  24 &  39 & 1.00 \\ 
  57 &   8 &  40 & 1.00 \\ 
  58 &  12 &  40 & 1.00 \\ 
  59 &  14 &  40 & 1.00 \\ 
  60 &  23 &  40 & 1.00 \\ 
  61 &  41 &  42 & 1.00 \\ 
  62 &   6 &  44 & 1.00 \\ 
  63 &  42 &  44 & 1.00 \\ 
  64 &  43 &  44 & 1.00 \\ 
  65 &   1 &  45 & 1.00 \\ 
  66 &  41 &  45 & 1.00 \\ 
  67 &  43 &  46 & 1.00 \\ 
  68 &  45 &  46 & 1.00 \\ 
  69 &  21 &  47 & 1.00 \\ 
  70 &  42 &  47 & 1.00 \\ 
  71 &   2 &  48 & 1.00 \\ 
  72 &  22 &  48 & 1.00 \\ 
  73 &  33 &  48 & 1.00 \\ 
  74 &  23 &  50 & 1.00 \\ 
  75 &  49 &  51 & 1.00 \\ 
  76 &  50 &  51 & 1.00 \\ 
  77 &   4 &  53 & 1.00 \\ 
  78 &  50 &  53 & 1.00 \\ 
  \hline
  \end{tabular}
    \end{minipage}%
    \quad \quad
    \begin{minipage}{.4\linewidth}
      \centering
         \begin{tabular}{cccc}
            \hline
 & Protein  & Protein  & Inclusion   \\ 
 & & & Probability\\
  \hline
  79 &  52 &  53 & 1.00 \\ 
  80 &  17 &  54 & 1.00 \\ 
  81 &  43 &  54 & 1.00 \\ 
  82 &   7 &  55 & 1.00 \\ 
  83 &  14 &  55 & 1.00 \\ 
  84 &  24 &  55 & 1.00 \\ 
  85 &  24 &  56 & 1.00 \\ 
  86 &  45 &  56 & 1.00 \\ 
  87 &  52 &  56 & 1.00 \\ 
  88 &  55 &  56 & 1.00 \\ 
  89 &  35 &  57 & 1.00 \\ 
  90 &  48 &  57 & 1.00 \\ 
  91 &  49 &  57 & 1.00 \\ 
  92 &   1 &  58 & 1.00 \\ 
  93 &   3 &  58 & 1.00 \\ 
  94 &  13 &  58 & 1.00 \\ 
  95 &  24 &  58 & 1.00 \\ 
  96 &  37 &  58 & 1.00 \\ 
  97 &  23 &  59 & 1.00 \\ 
  98 &  38 &  59 & 1.00 \\ 
  99 &  48 &  59 & 1.00 \\ 
  100 &  21 &  60 & 1.00 \\ 
  101 &  36 &  60 & 1.00 \\ 
  102 &  38 &  60 & 1.00 \\ 
  103 &  50 &  60 & 1.00 \\ 
  104 &  38 &  61 & 1.00 \\ 
  105 &  40 &  61 & 1.00 \\ 
  106 &  48 &  61 & 1.00 \\ 
  107 &  57 &  61 & 1.00 \\ 
  108 &  60 &  61 & 1.00 \\ 
  109 &   5 &  62 & 1.00 \\ 
  110 &  43 &  62 & 1.00 \\ 
  111 &  57 &  62 & 1.00 \\ 
  112 &   2 &  63 & 1.00 \\ 
  113 &   5 &  63 & 1.00 \\ 
  114 &  16 &  63 & 1.00 \\ 
  115 &  52 &  63 & 1.00 \\ 
  116 &  62 &  63 & 1.00 \\ 
  117 &  63 &  65 & 1.00 \\ 
  \hline
  \end{tabular}
    \end{minipage}%
    \quad \quad
    \begin{minipage}{.4\linewidth}
      \centering
         \begin{tabular}{cccc}
            \hline
 & Protein  & Protein  & Inclusion \\ 
 & & & Probability\\
  \hline
  118 &  38 &  66 & 0.70 \\ 
  119 &  64 &  66 & 1.00 \\ 
  120 &  65 &  66 & 1.00 \\ 
  121 &  51 &  67 & 1.00 \\ 
  122 &  55 &  67 & 1.00 \\ 
  123 &  59 &  67 & 1.00 \\ 
  124 &  61 &  67 & 1.00 \\ 
  125 &   1 &  68 & 1.00 \\ 
  126 &   6 &  68 & 1.00 \\ 
  127 &  15 &  68 & 1.00 \\ 
  128 &  41 &  68 & 1.00 \\ 
  129 &  46 &  68 & 1.00 \\ 
  130 &  54 &  69 & 1.00 \\ 
  131 &  65 &  70 & 1.00 \\ 
  132 &  69 &  70 & 1.00 \\ 
  133 &  11 &  71 & 1.00 \\ 
  134 &  40 &  71 & 1.00 \\ 
  135 &  43 &  71 & 1.00 \\ 
  136 &  45 &  71 & 1.00 \\ 
  137 &  28 &  72 & 1.00 \\ 
  138 &  72 &  73 & 1.00 \\ 
  139 &   8 &  74 & 1.00 \\ 
  140 &  26 &  74 & 1.00 \\ 
  141 &  30 &  74 & 1.00 \\ 
  142 &  59 &  74 & 1.00 \\ 
  143 &  73 &  74 & 1.00 \\ 
  144 &   2 &  75 & 1.00 \\ 
  145 &   8 &  75 & 1.00 \\ 
  146 &  16 &  75 & 1.00 \\ 
  147 &  30 &  75 & 1.00 \\ 
  148 &  50 &  75 & 1.00 \\ 
  149 &  53 &  75 & 1.00 \\ 
  150 &  18 &  76 & 1.00 \\ 
  151 &  30 &  76 & 1.00 \\ 
  152 &  34 &  76 & 1.00 \\ 
  153 &  36 &  76 & 1.00 \\ 
  154 &  47 &  76 & 1.00 \\ 
  & & & \\
  & & & \\
   \hline
        \end{tabular}
        \end{minipage} 
    }
    \label{SKCM:Omega:incprob}
\end{table}

    \begin{table}[H]
\centering
    \caption{Indices of genes and proteins for UCEC cancer data. The first column lists the components of the dataset mRNA(genes) and the second column lists the components of the dataset RPPA(proteins).}
    \resizebox{0.8\textwidth}{!}{%
    \begin{minipage}{.5\linewidth}
      
      \centering
        \begin{tabular}{rll}
            \hline
 & Gene & Protein \\ 
  \hline
   1 & BAK1 & BAK \\ 
  2 & BAX & BAX \\ 
  3 & BID & BID \\ 
  4 & BCL2L11 & BIM \\ 
  5 & CASP7 & CASPASE7CLEAVEDD198 \\ 
  6 & BAD & BADPS112 \\ 
  7 & BCL2 & BCL2 \\ 
  8 & BCL2L1 & BCLXL \\ 
  9 & BIRC2 & CIAP \\ 
  10 & CDK1 & CDK1 \\ 
  11 & CCNB1 & CYCLINB1 \\ 
  12 & CCNE1 & CYCLINE1 \\ 
  13 & CCNE2 & CYCLINE2 \\ 
  14 & CDKN1B & P27PT157 \\ 
  15 & PCNA & P27PT198 \\ 
  16 & FOXM1 & PCNA \\ 
  17 & TP53BP1 & FOXM1 \\ 
  18 & ATM & 53BP1 \\ 
  19 & BRCA2 & ATM \\ 
  20 & CHEK1 & BRCA2 \\ 
  21 & CHEK2 & CHK1PS345 \\ 
  22 & XRCC5 & CHK2PT68 \\ 
  23 & MRE11A & KU80 \\ 
  24 & TP53 & MRE11 \\ 
  25 & RAD50 & P53 \\ 
  26 & RAD51 & RAD50 \\ 
  27 & XRCC1 & RAD51 \\ 
  28 & FN1 & XRCC1 \\ 
  29 & CDH2 & FIBRONECTIN \\ 
  30 & COL6A1 & NCADHERIN \\ 
  31 & CLDN7 & COLLAGENVI \\ 
  32 & CDH1 & CLAUDIN7 \\ 
  33 & CTNNB1 & ECADHERIN \\ 
  34 & SERPINE1 & BETACATENIN \\ 
  35 & ESR1 & PAI1 \\ 
  36 & PGR & ERALPHA \\ 
  37 & AR & ERALPHAPS118 \\ 
  38 & INPP4B & PR \\ 
  39 & GATA3 & AR \\ 
   \hline
        \end{tabular}
    \end{minipage}%
    \quad \quad
    \begin{minipage}{.5\linewidth}
      \centering
         \begin{tabular}{rll}
            \hline
 & Gene & Protein \\ 
  \hline
 40 & AKT1 & INPP4B \\ 
  41 & AKT2 & GATA3 \\ 
  42 & AKT3 & AKTPS473 \\ 
  43 & GSK3A & AKTPT308 \\ 
  44 & GSK3B & GSK3ALPHABETAPS21S9 \\ 
  45 & AKT1S1 & GSK3PS9 \\ 
  46 & TSC2 & PRAS40PT246 \\ 
  47 & PTEN & TUBERINPT1462 \\ 
  48 & ARAF & PTEN \\ 
  49 & JUN & ARAFPS299 \\ 
  50 & RAF1 & CJUNPS73 \\ 
  51 & MAPK8 & CRAFPS338 \\ 
  52 & MAPK1 & JNKPT183Y185 \\ 
  53 & MAPK3 & MAPKPT202Y204 \\ 
  54 & MAP2K1 & MEK1PS217S221 \\ 
  55 & MAPK14 & P38PT180Y182 \\ 
  56 & RPS6KA1 & P90RSKPT359S363 \\ 
  57 & YBX1 & YB1PS102 \\ 
  58 & EGFR & EGFRPY1068 \\ 
  59 & ERBB2 & EGFRPY1173 \\ 
  60 & ERBB3 & HER2PY1248 \\ 
  61 & SHC1 & HER3PY1298 \\ 
  62 & SRC & SHCPY317 \\ 
  63 & EIF4EBP1 & SRCPY416 \\ 
  64 & RPS6KB1 & SRCPY527 \\ 
  65 & MTOR & 4EBP1PS65 \\ 
  66 & RPS6 & 4EBP1PT37T46 \\ 
  67 & RB1 & 4EBP1PT70 \\ 
  68 & CAV1 & P70S6KPT389 \\ 
  69 & MYH11 & MTORPS2448 \\ 
  70 & RAB11A & S6PS235S236 \\ 
  71 & RAB11B & S6PS240S244 \\ 
  72 & GAPDH & RBPS807S811 \\ 
  73 & RBM15 & CAVEOLIN1 \\ 
  74 &  & MYH11 \\ 
  75 &  & RAB11 \\ 
  76 &  & GAPDH \\ 
  77 &  & RBM15 \\ 
  \hline
        \end{tabular}
        \end{minipage} 
    }
    \label{tab:UCEC:geneIndices}
\end{table}

\begin{figure}[H]
    \centering
    \includegraphics[width = \linewidth]{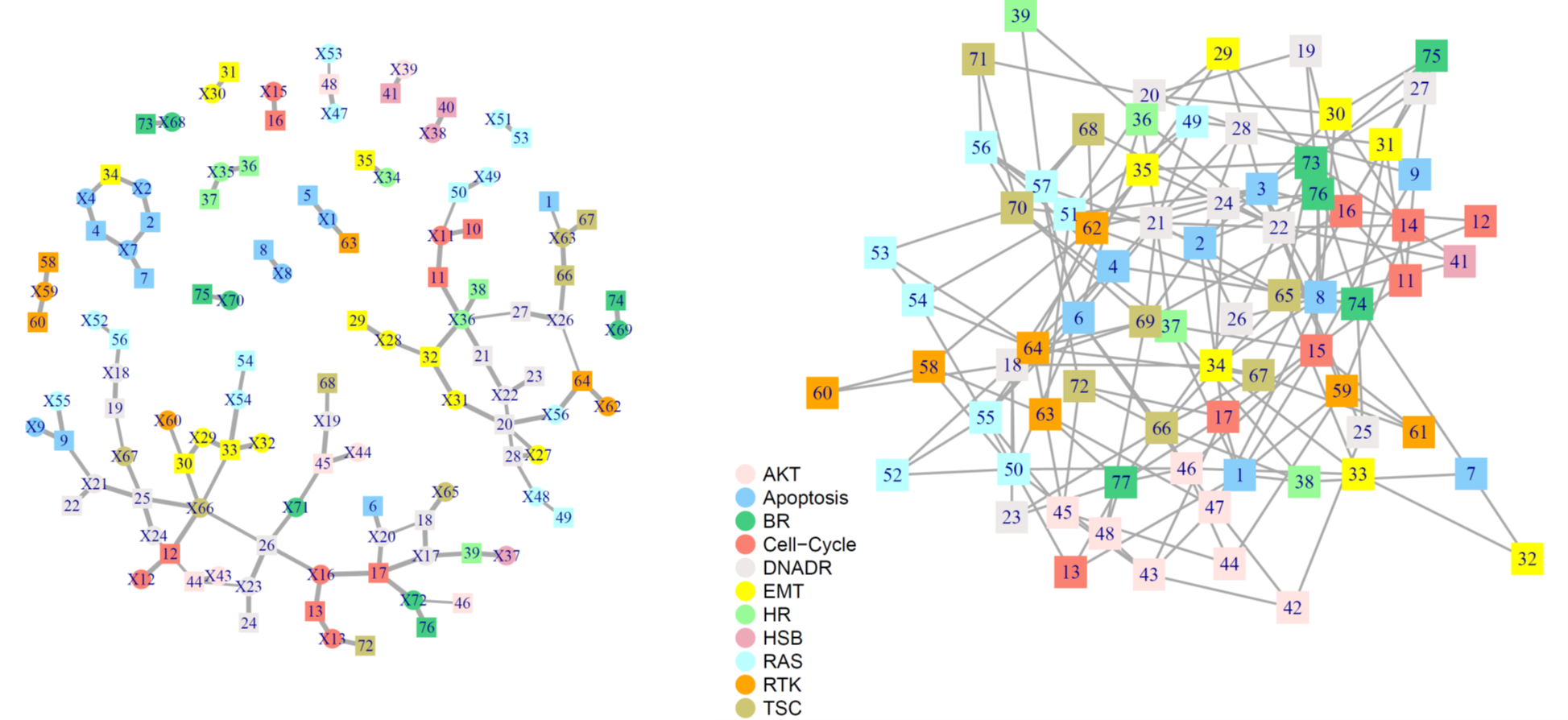}
    \caption{UCEC networks with 0.5 as the inclusion probability cutoff. The circles represent genes and the squares represent proteins. The different colors represent the different pathways listed in Table 14 in Appendix. Left : Network graph indicating associations between mRNA and protein. Right : Network graph indicating associations among proteins. The inclusion probabilities are listed in Table . \textit{All the edge widths are proportional to the corresponding inclusion probabilities.}}
    \label{fig:UCECnetwork05}
\end{figure}

\begin{table}[H]
\centering
    \caption{Inclusion probability of each edge for the UCEC network graph indicating associations between mRNA and proteins provided in the left panel of Figure \ref{fig:UCECnetwork05}.}
    \resizebox{0.9\textwidth}{!}{%
    \begin{minipage}{.4\linewidth}
      
      \centering
        \begin{tabular}{cccc}
            \hline
 & Gene  & Protein  & Inclusion   \\ 
 & & & Probability \\
  \hline
1 & X63 &   1 & 0.63 \\ 
  2 & X2 &   2 & 1.00 \\ 
  3 & X7 &   2 & 0.96 \\ 
  4 & X4 &   4 & 1.00 \\ 
  5 & X7 &   4 & 1.00 \\ 
  6 & X1 &   5 & 1.00 \\ 
  7 & X20 &   6 & 0.94 \\ 
  8 & X7 &   7 & 1.00 \\ 
  9 & X8 &   8 & 1.00 \\ 
  10 & X9 &   9 & 0.98 \\ 
  11 & X21 &   9 & 0.77 \\ 
  12 & X55 &   9 & 0.96 \\ 
  13 & X11 &  10 & 0.96 \\ 
  14 & X11 &  11 & 1.00 \\ 
  15 & X36 &  11 & 0.96 \\ 
  16 & X12 &  12 & 1.00 \\ 
  17 & X66 &  12 & 0.98 \\ 
  18 & X13 &  13 & 1.00 \\ 
  19 & X16 &  13 & 1.00 \\ 
  20 & X15 &  16 & 1.00 \\ 
  21 & X16 &  17 & 1.00 \\ 
  22 & X17 &  17 & 1.00 \\
  23 & X20 &  17 & 0.95 \\ 
  24 & X72 &  17 & 0.98 \\ 
  25 & X17 &  18 & 1.00 \\ 
  26 & X20 &  18 & 0.58 \\
    \hline
        \end{tabular}
    \end{minipage}%
    \quad \quad
    \begin{minipage}{.4\linewidth}
      \centering
         \begin{tabular}{cccc}
            \hline
 & Gene  & Protein  & Inclusion   \\ 
 & & & Probability\\
  \hline
 27 & X65 &  18 & 1.00 \\ 
  28 & X18 &  19 & 1.00 \\ 
  29 & X67 &  19 & 0.65 \\ 
  30 & X27 &  20 & 0.80 \\ 
  31 & X31 &  20 & 1.00 \\ 
  32 & X56 &  20 & 0.97 \\ 
  33 & X22 &  21 & 0.95 \\ 
  34 & X36 &  21 & 0.97 \\ 
  35 & X21 &  22 & 1.00 \\ 
  36 & X22 &  23 & 0.96 \\ 
  37 & X23 &  24 & 1.00 \\ 
  38 & X21 &  25 & 0.97 \\ 
  39 & X24 &  25 & 1.00 \\ 
  40 & X66 &  25 & 0.92 \\ 
  41 & X67 &  25 & 0.84 \\ 
  42 & X16 &  26 & 0.81 \\ 
  43 & X23 &  26 & 1.00 \\ 
  44 & X66 &  26 & 0.67 \\ 
  45 & X71 &  26 & 0.85 \\ 
  46 & X26 &  27 & 1.00 \\ 
  47 & X36 &  27 & 0.55 \\ 
  48 & X22 &  28 & 0.72 \\ 
  49 & X27 &  28 & 1.00 \\ 
  50 & X48 &  28 & 0.69 \\ 
  51 & X28 &  29 & 1.00 \\ 
  52 & X29 &  30 & 1.00 \\
  \hline
  \end{tabular}
    \end{minipage}%
    \quad \quad
    \begin{minipage}{.4\linewidth}
      \centering
         \begin{tabular}{cccc}
            \hline
 & Gene  & Protein  & Inclusion  \\ 
 & & & Probability\\
  \hline
  53 & X60 &  30 & 0.76 \\ 
  54 & X66 &  30 & 0.82 \\ 
  55 & X30 &  31 & 1.00 \\ 
  56 & X28 &  32 & 0.85 \\ 
  57 & X31 &  32 & 1.00 \\ 
  58 & X36 &  32 & 0.99 \\ 
  59 & X29 &  33 & 1.00 \\ 
  60 & X32 &  33 & 1.00 \\ 
  61 & X54 &  33 & 1.00 \\ 
  62 & X66 &  33 & 0.72 \\ 
  63 & X2 &  34 & 0.92 \\ 
  64 & X4 &  34 & 0.92 \\ 
  65 & X34 &  35 & 1.00 \\ 
  66 & X35 &  36 & 1.00 \\ 
  67 & X35 &  37 & 1.00 \\ 
  68 & X36 &  38 & 1.00 \\ 
  69 & X17 &  39 & 0.91 \\ 
  70 & X37 &  39 & 1.00 \\ 
  71 & X38 &  40 & 1.00 \\ 
  72 & X39 &  41 & 1.00 \\ 
  73 & X23 &  44 & 0.52 \\ 
  74 & X24 &  44 & 0.58 \\ 
  75 & X43 &  44 & 1.00 \\ 
  76 & X19 &  45 & 1.00 \\ 
  77 & X44 &  45 & 1.00 \\ 
  78 & X71 &  45 & 0.82 \\
  \hline
  \end{tabular}
    \end{minipage}%
    \quad \quad
    \begin{minipage}{.4\linewidth}
      \centering
         \begin{tabular}{cccc}
            \hline
 & Gene  & Protein  & Inclusion   \\ 
 & & & Probability\\
  \hline
   79 & X72 &  46 & 0.52 \\ 
  80 & X47 &  48 & 1.00 \\ 
  81 & X53 &  48 & 0.56 \\ 
  82 & X48 &  49 & 0.65 \\ 
  83 & X11 &  50 & 0.59 \\ 
  84 & X49 &  50 & 1.00 \\ 
  85 & X51 &  53 & 0.94 \\ 
  86 & X54 &  54 & 0.80 \\ 
  87 & X18 &  56 & 0.81 \\ 
  88 & X52 &  56 & 0.72 \\ 
  89 & X59 &  58 & 1.00 \\ 
  90 & X59 &  60 & 1.00 \\ 
  91 & X1 &  63 & 0.98 \\ 
  92 & X26 &  64 & 0.52 \\ 
  93 & X56 &  64 & 1.00 \\ 
  94 & X62 &  64 & 1.00 \\ 
  95 & X26 &  66 & 0.91 \\ 
  96 & X63 &  66 & 1.00 \\ 
  97 & X63 &  67 & 1.00 \\ 
  98 & X19 &  68 & 0.80 \\ 
  99 & X13 &  72 & 0.95 \\ 
  100 & X68 &  73 & 1.00 \\ 
  101 & X69 &  74 & 1.00 \\ 
  102 & X70 &  75 & 0.90 \\ 
  103 & X72 &  76 & 1.00 \\ 
  & & & \\
   \hline
        \end{tabular}
        \end{minipage} 
    }
    \label{UCEC:B:incprob}
\end{table}
    
     \begin{table}[H]
\centering
    \caption{Inclusion probability of each edge for the UCEC network graph indicating associations among proteins provided in the right panel of Figure \ref{fig:UCECnetwork05}.}
    \resizebox{0.9\textwidth}{!}{%
    \begin{minipage}{.4\linewidth}
      
      \centering
        \begin{tabular}{cccc}
            \hline
 & Protein  & Protein  & Inclusion  \\ 
 & & & Probability \\
  \hline
1 &   2 &   3 & 1.00 \\ 
  2 &   3 &   8 & 1.00 \\ 
  3 &   9 &  11 & 1.00 \\ 
  4 &   1 &  13 & 1.00 \\ 
  5 &   1 &  15 & 1.00 \\ 
  6 &   9 &  15 & 1.00 \\ 
  7 &   3 &  16 & 1.00 \\ 
  8 &  11 &  16 & 1.00 \\ 
  9 &  12 &  16 & 1.00 \\ 
  10 &   1 &  17 & 1.00 \\ 
  11 &   2 &  17 & 1.00 \\ 
  12 &  11 &  17 & 1.00 \\ 
  13 &   4 &  18 & 1.00 \\ 
  14 &  16 &  19 & 1.00 \\ 
  15 &   6 &  20 & 1.00 \\ 
  16 &  19 &  20 & 1.00 \\ 
  17 &   6 &  21 & 1.00 \\ 
  18 &  14 &  22 & 1.00 \\ 
  19 &  21 &  22 & 1.00 \\ 
  20 &  18 &  23 & 1.00 \\ 
  21 &   3 &  24 & 1.00 \\ 
  22 &  21 &  24 & 1.00 \\ 
  23 &  22 &  24 & 1.00 \\ 
  24 &  22 &  25 & 1.00 \\ 
  25 &   2 &  26 & 1.00 \\ 
  26 &   8 &  26 & 1.00 \\ 
  27 &   3 &  27 & 1.00 \\ 
  28 &  16 &  27 & 1.00 \\ 
  29 &   2 &  28 & 1.00 \\ 
  30 &   4 &  28 & 1.00 \\ 
  31 &   9 &  28 & 1.00 \\ 
  32 &  20 &  28 & 1.00 \\ 
  33 &   3 &  29 & 0.54 \\ 
  34 &   3 &  30 & 1.00 \\ 
  35 &   8 &  30 & 1.00 \\ 
  36 &  14 &  30 & 1.00 \\ 
  37 &  20 &  30 & 1.00 \\ 
  38 &  11 &  31 & 1.00 \\ 
  39 &  16 &  31 & 1.00 \\ 
  40 &   7 &  32 & 1.00 \\ 
  41 &  32 &  33 & 1.00 \\ 
  42 &   8 &  34 & 1.00 \\ 
  43 &  18 &  34 & 1.00 \\ 
  44 &  24 &  34 & 1.00 \\ 
  45 &  33 &  34 & 1.00 \\ 
  46 &  18 &  35 & 1.00 \\ 
  47 &  24 &  35 & 1.00 \\ 
   \hline
        \end{tabular}
    \end{minipage}%
    \quad \quad
    \begin{minipage}{.4\linewidth}
      \centering
         \begin{tabular}{cccc}
            \hline
 & Protein  & Protein  & Inclusion  \\ 
 & & & Probability\\
  \hline
 48 &  29 &  35 & 1.00 \\ 
  49 &  22 &  36 & 1.00 \\ 
  50 &  29 &  36 & 1.00 \\ 
  51 &   1 &  37 & 1.00 \\ 
  52 &  36 &  37 & 1.00 \\ 
  53 &   1 &  38 & 1.00 \\ 
  54 &   7 &  38 & 1.00 \\ 
  55 &  37 &  38 & 1.00 \\ 
  56 &  36 &  39 & 1.00 \\ 
  57 &   8 &  41 & 1.00 \\ 
  58 &  14 &  41 & 1.00 \\ 
  59 &  22 &  41 & 1.00 \\ 
  60 &  25 &  42 & 1.00 \\ 
  61 &  17 &  43 & 1.00 \\ 
  62 &  42 &  43 & 1.00 \\ 
  63 &  43 &  45 & 1.00 \\ 
  64 &  44 &  45 & 1.00 \\ 
  65 &   6 &  46 & 0.76 \\ 
  66 &  15 &  46 & 1.00 \\ 
  67 &  33 &  46 & 1.00 \\ 
  68 &  44 &  46 & 1.00 \\ 
  69 &   8 &  47 & 1.00 \\ 
  70 &  42 &  47 & 1.00 \\ 
  71 &  43 &  47 & 1.00 \\ 
  72 &  46 &  47 & 1.00 \\ 
  73 &  43 &  48 & 1.00 \\ 
  74 &  21 &  49 & 1.00 \\ 
  75 &  31 &  49 & 1.00 \\ 
  76 &   6 &  50 & 1.00 \\ 
  77 &  18 &  50 & 1.00 \\ 
  78 &  43 &  50 & 1.00 \\ 
  79 &  46 &  50 & 1.00 \\ 
  80 &   3 &  51 & 1.00 \\ 
  81 &  20 &  51 & 1.00 \\ 
  82 &  49 &  51 & 1.00 \\ 
  83 &  18 &  52 & 1.00 \\ 
  84 &  50 &  52 & 1.00 \\ 
  85 &  51 &  54 & 1.00 \\ 
  86 &  53 &  54 & 1.00 \\ 
  87 &   4 &  55 & 1.00 \\ 
  88 &   6 &  55 & 1.00 \\ 
  89 &  13 &  55 & 1.00 \\ 
  90 &  45 &  55 & 1.00 \\ 
  91 &  52 &  55 & 1.00 \\ 
  92 &  54 &  55 & 1.00 \\ 
  93 &   4 &  56 & 1.00 \\ 
  94 &   2 &  57 & 1.00 \\ 
  \hline
  \end{tabular}
    \end{minipage}%
    \quad \quad
    \begin{minipage}{.4\linewidth}
      \centering
         \begin{tabular}{cccc}
            \hline
 & Protein  & Protein  & Inclusion  \\ 
 & & & Probability\\
  \hline
  95 &   6 &  57 & 1.00 \\ 
  96 &  39 &  57 & 0.53 \\ 
  97 &  56 &  57 & 1.00 \\ 
  98 &  50 &  58 & 1.00 \\ 
  99 &   1 &  59 & 1.00 \\ 
  100 &   3 &  59 & 1.00 \\ 
  101 &  14 &  59 & 1.00 \\ 
  102 &  44 &  59 & 1.00 \\ 
  103 &  18 &  60 & 1.00 \\ 
  104 &  58 &  60 & 1.00 \\ 
  105 &  15 &  61 & 1.00 \\ 
  106 &  59 &  61 & 1.00 \\ 
  107 &   3 &  62 & 1.00 \\ 
  108 &  49 &  62 & 1.00 \\ 
  109 &  56 &  62 & 1.00 \\ 
  110 &  57 &  62 & 1.00 \\ 
  111 &  58 &  62 & 1.00 \\ 
  112 &  47 &  63 & 1.00 \\ 
  113 &  51 &  63 & 1.00 \\ 
  114 &  58 &  63 & 1.00 \\ 
  115 &  45 &  64 & 1.00 \\ 
  116 &  53 &  64 & 1.00 \\ 
  117 &  54 &  64 & 0.75 \\ 
  118 &  55 &  64 & 1.00 \\ 
  119 &  62 &  64 & 1.00 \\ 
  120 &  63 &  64 & 1.00 \\ 
  121 &  12 &  65 & 1.00 \\ 
  122 &  21 &  65 & 1.00 \\ 
  123 &  33 &  65 & 1.00 \\ 
  124 &  62 &  65 & 1.00 \\ 
  125 &   6 &  66 & 1.00 \\ 
  126 &  23 &  66 & 1.00 \\ 
  127 &  38 &  66 & 1.00 \\ 
  128 &  46 &  66 & 1.00 \\ 
  129 &  65 &  66 & 1.00 \\ 
  130 &  16 &  67 & 1.00 \\ 
  131 &  34 &  67 & 0.95 \\ 
  132 &  61 &  67 & 1.00 \\ 
  133 &  64 &  67 & 1.00 \\ 
  134 &  66 &  67 & 1.00 \\ 
  135 &  22 &  68 & 1.00 \\ 
  136 &  57 &  68 & 1.00 \\ 
  137 &  62 &  68 & 1.00 \\ 
  138 &   2 &  69 & 1.00 \\ 
  139 &  15 &  69 & 1.00 \\ 
  140 &  16 &  69 & 1.00 \\ 
  141 &  18 &  69 & 1.00 \\ 
  \hline
  \end{tabular}
    \end{minipage}%
    \quad \quad
    \begin{minipage}{.4\linewidth}
      \centering
         \begin{tabular}{cccc}
            \hline
 & Protein  & Protein  & Inclusion  \\ 
 & & & Probability\\
  \hline
  142 &  21 &  69 & 1.00 \\ 
  143 &  48 &  69 & 1.00 \\ 
  144 &  64 &  69 & 1.00 \\ 
  145 &  21 &  70 & 0.65 \\ 
  146 &  37 &  70 & 1.00 \\ 
  147 &  53 &  70 & 1.00 \\ 
  148 &  56 &  70 & 1.00 \\ 
  149 &  57 &  70 & 1.00 \\ 
  150 &  68 &  70 & 1.00 \\ 
  151 &  69 &  70 & 1.00 \\ 
  152 &  20 &  71 & 1.00 \\ 
  153 &  56 &  71 & 0.79 \\ 
  154 &  70 &  71 & 1.00 \\ 
  155 &  17 &  72 & 1.00 \\ 
  156 &  45 &  72 & 1.00 \\ 
  157 &  57 &  72 & 1.00 \\ 
  158 &  66 &  72 & 1.00 \\ 
  159 &  14 &  73 & 1.00 \\ 
  160 &  26 &  73 & 0.94 \\ 
  161 &  29 &  73 & 1.00 \\ 
  162 &  35 &  73 & 0.90 \\ 
  163 &  65 &  73 & 1.00 \\ 
  164 &   4 &  74 & 1.00 \\ 
  165 &   7 &  74 & 1.00 \\ 
  166 &  25 &  74 & 1.00 \\ 
  167 &  31 &  74 & 1.00 \\ 
  168 &  65 &  74 & 1.00 \\ 
  169 &  73 &  74 & 1.00 \\ 
  170 &   9 &  75 & 1.00 \\ 
  171 &  27 &  75 & 1.00 \\ 
  172 &  30 &  75 & 1.00 \\ 
  173 &  31 &  75 & 1.00 \\ 
  174 &  73 &  75 & 1.00 \\ 
  175 &   2 &  76 & 1.00 \\ 
  176 &   8 &  76 & 1.00 \\ 
  177 &  15 &  76 & 1.00 \\ 
  178 &  19 &  76 & 1.00 \\ 
  179 &  24 &  76 & 1.00 \\ 
  180 &  13 &  77 & 1.00 \\ 
  181 &  17 &  77 & 1.00 \\ 
  182 &  23 &  77 & 1.00 \\ 
  183 &  26 &  77 & 1.00 \\ 
  184 &  46 &  77 & 1.00 \\ 
  185 &  48 &  77 & 1.00 \\ 
  186 &  63 &  77 & 1.00 \\ 
  & & & \\
  & & & \\
   \hline
        \end{tabular}
        \end{minipage} 
    }
    \label{UCEC:Omega:incprob}
\end{table}

\bibliographystyle{chicago}
\bibliography{refs,myrefs}

\end{document}